\documentclass{article}
\usepackage[american]{babel}
\usepackage[utf8]{inputenc}
\usepackage[T1]{fontenc}
\usepackage{csquotes}
\usepackage{amsmath, amssymb, amsfonts, amsthm, bm}
\usepackage{textcomp}
\usepackage{appendix}
\usepackage{color}
\usepackage{dsfont}
\usepackage{enumerate}
\usepackage[mathscr]{euscript}
\usepackage{graphicx}
\usepackage{latexsym}
\usepackage{mathrsfs}
\usepackage{enumitem}
\usepackage{mathtools}
\usepackage{eurosym}  

\usepackage[a4paper, top=3cm, bottom=3.3cm, left=3.5cm, right=3.5cm]{geometry}
\usepackage{setspace}

\usepackage{tikz}
\usepackage[font=small]{caption}

\usepackage[colorlinks=true,citecolor=blue,urlcolor=blue,linkcolor=blue]{hyperref}

\usepackage[backend=biber,giveninits=true,sorting=nyt,url=false,maxnames=10,maxcitenames=6,mincitenames=6]{biblatex}
\DeclareNameAlias{default}{family-given}
\renewbibmacro*{in:}{%
  \ifentrytype{article}
    {}
    {\printtext{\bibstring{in}\intitlepunct}}}
\DeclareNolabel{\nolabel{\regexp{[\p{P}\p{S}\p{C}\p{Z}]+}}}
\bibliography{Biblio, Biblio_ASMF}

\newcommand{\R}{\mathbb{R}}
\newcommand{\N}{\mathbb{N}}
\newcommand{\E}{\mathbb{E}}
\newcommand{\F}{\mathcal{F}}
\newcommand{\Q}{\mathbb{Q}}
\renewcommand{\P}{\mathbb{P}}
\newcommand{\1}{\mathds{1}}

\newcommand{\ds}{\displaystyle}
\newcommand{\itemref}[1]{$\textit{\ref{#1}}$}
\newcommand{\norm}[1]{|\!|#1|\!|}
\renewcommand{\ss}{\scriptstyle}

\def\a{{\alpha}}
\def\b{{\beta}}
\def\r{{\rho}}
\def\l{{\lambda}}
\def\Om{{\Omega}}
\def\om{{\omega}}
\def\s{{\sigma}}
\def\t{{\tau}}
\def\g{{\gamma}}
\def\eps{{\varepsilon}}

\numberwithin{equation}{section}
\mathtoolsset{showonlyrefs}

\newtheorem{theorem}{Theorem}
\newtheorem{corollary}[theorem]{Corollary}
\newtheorem{definition}[theorem]{Definition}
\newtheorem{lemma}[theorem]{Lemma}
\newtheorem{proposition}[theorem]{Proposition}
\newtheorem{example}[theorem]{Example}

\renewcommand{\d}{\,\mathrm{d}}

\theoremstyle{remark}
\newtheorem{remark}[theorem]{Remark}
\newtheorem{example_text}[theorem]{Example}
\newtheorem{notation}{Notation}

\DeclareMathOperator*{\esssup}{ess\,sup}
\DeclareMathOperator*{\essinf}{ess\,inf}

\title{Measuring Financial Resilience \\ 
Using Backward Stochastic Differential Equations\thanks{We are very grateful to Yacine A\"it-Sahalia, Jose Blanchet, Freddy Delbaen, Lukasz Delong, Damir Filipovic, Marco Frittelli, Peter Spreij, Mitja Stadje, Stefan Weber, Bert Zwart and participants at the 12th General AMaMeF Conference in Verona, 
the 2025 Vienna Congress on Mathematical Finance (VCMF), 
the 2025 ANR DREAMeS Conference in Paris,
the Workshop on Insurance and Financial Mathematics in Hannover,
Tsinghua University, 
the University of Tartu and the University of Amsterdam for useful comments and discussions.
This research was funded in part by
the Netherlands Organization for Scientific Research (NWO) under grant NWO Vici 2020--2027 (Laeven, Ferrari, Zullino).
Matteo Ferrari, Emanuela Rosazza Gianin and Marco Zullino are members of Gruppo Nazionale per l’Analisi Matematica, la Probabilità e le loro Applicazioni (GNAMPA)-INdAM, Italy. 
Emanuela Rosazza Gianin and Marco Zullino acknowledge the financial support of Gnampa Research Project 2024 (PRR-20231026-073916-203).}\bigskip}
\author{Roger J.~A.~Laeven\footnote{Corresponding author.} \\
{\footnotesize Dept.~of Quantitative Economics}\\
{\footnotesize University of Amsterdam, CentER}\\
{\footnotesize and EURANDOM, The Netherlands}\\
{\footnotesize \texttt{R.J.A.Laeven@uva.nl}}\\
\and Matteo Ferrari \\
{\footnotesize Dept.~of Quantitative Economics}\\
{\footnotesize University of Amsterdam, The Netherlands}\\
{\footnotesize \texttt{M.Ferrari@uva.nl}}\\
\and Emanuela Rosazza Gianin \\
{\footnotesize Dept.~of Statistics and Quantitative Methods}\\
{\footnotesize University of Milano-Bicocca, Italy}\\
{\footnotesize \texttt{emanuela.rosazza1@unimib.it}}\\
\and Marco Zullino \\
{\footnotesize Dept.~of Mathematics and Applications}\\
{\footnotesize University of Milano-Bicocca, Italy}\\
{\footnotesize \texttt{M.Zullino@campus.unimib.it}}\\
}

\date{\today}

\begin{document}

\maketitle

\begin{abstract}
We introduce the \emph{resilience rate} as a measure of financial resilience.
It captures the expected rate at which a dynamic risk measure recovers, i.e., bounces back, when the risk-acceptance set is breached.
We develop the corresponding stochastic calculus by establishing representation theorems for expected time-derivatives of solutions to backward stochastic differential equations (BSDEs) with jumps, evaluated at stopping times.
These results reveal that the resilience rate can be represented as a suitable  expectation of the generator of a BSDE.
We analyze the main properties of the resilience rate and the formal connection of these properties to the BSDE generator. 
We also introduce resilience-acceptance sets and study their properties in relation to both the resilience rate and the dynamic risk measure.
We illustrate our results in several canonical financial examples and highlight their implications via the notion of resilience neutrality.
\\[3mm]
\noindent \textbf{Keywords:} 
Financial resilience; 
Bouncing drift;
Backward Stochastic Differential Equations;
Generator;
Stopping time;
Dynamic risk measures;
Acceptance sets.
\\[3mm]
\noindent \textbf{MSC 2020 Classification:} 
Primary: 60H10, 60H30; Secondary: 91B06, 91B30, 62P05.\\[3mm]
\noindent \textbf{JEL Classification:} 
D81, G01, G10, G20.
\end{abstract}

\newpage
\onehalfspacing
\setcounter{tocdepth}{2}

\section{Introduction}

Over the past decades, an extensive literature across diverse fields has studied the question of how to measure financial risk, both in a static and a dynamic environment.
Measures of financial risk are employed for a wide variety of purposes, e.g., to determine optimal economic policies, set risk capital requirements, calculate insurance premia, and explain variation in asset prices (e.g.,\ \cite{Follmer+Schied_2016_Stochastic_finance,Barrieu+ElKaroui_2009_Pricing_hedging_optimally_designing_derivatives_minimization_risk_measures}).

Yet, irrespective of the quality of the risk measurement and the ensuing economic policies, adverse events are inevitable and continue to occur, with sufficient ferocity to put the entity under scrutiny --- whether individuals or households, firms, markets, financial institutions or society at large --- in jeopardy. 
The natural, and pivotal, question that arises then is: at what \textit{pace} can the entity be expected to \textit{recover from adverse events}, i.e., to what extent is the entity \textit{financially resilient}? 
Quite surprisingly, relatively little is known about measurement theory and methods to answer this fundamentally important question.

In this paper, we introduce a measure of financial resilience.
Our measure, which we will refer to as the \textit{resilience rate}, evaluates the expected \textit{rate} at which a \textit{dynamic risk measure recovers}, i.e., bounces back, upon the occurrence of an adverse shock in the form of a certain \textit{stress scenario}. 
More formally, our measure of financial resilience is defined as the expectation of the time-derivative of a dynamic risk measure at the first time the risk measure breaches a risk level deemed as acceptable by the financial investor, institution or regulatory authority.

Our measure builds on the theory of dynamic risk measurement using backward stochastic differential equations (BSDEs). 
Since the seminal work \cite{Pardoux+Peng_1990_Adapted_solution_backward_stochastic_differential_equation} on BSDEs, significant progress has been made. 
Main references dealing with BSDEs and their relation with dynamic risk measures in a purely Brownian filtration are \cite{ElKaroui+Peng+Quenez_1997_Backward_stochastic_differential_equations_finance,Kobylanski_2000_Backward_stochastic_differential_equations_partial_differential_equations_quadratic_growth,Peng_2005_dynamically_consistent_nonlinear_evaluations,Barrieu+ElKaroui_2009_Pricing_hedging_optimally_designing_derivatives_minimization_risk_measures}, while \cite{Buckdahn+Pardoux_1994_BSDEs_jumps_associated_integro-partial_differential_equations,Royer_2006_Backward_stochastic_differential_equations_jumps_related_non-linear_expectations,Quenez+Sulem_2013_BSDEs_jumps_optimization_applications_dynamic_risk_measures} deal with the enlargement of the Brownian filtration to include Poissonian noise.
Our starting point is a Brownian-Poissonian filtration, featuring random jumps coupled with a pure diffusion component. 
To establish that our measure of financial resilience is well defined, we develop the required stochastic calculus, by deriving representation theorems for expected time-derivatives of solutions to BSDEs with jumps, evaluated at stopping times.
These results demonstrate that our measure of financial resilience takes the form of a suitable expectation of the generator of a BSDE at a stopping time.

More specifically, this paper adopts a probabilistic setting in which dynamic risk assessment is conducted using a risk functional defined as (the first component of) the solution to a BSDE. 
Formally, given an $m$-dimensional standard Brownian motion $W$ and a compensated Poisson random measure $\tilde N$ on $\R_+\times \R^d_\ast:=[0,+\infty)\times(\R^d\setminus\{0\})$, we consider the following BSDE:
\begin{equation*}
    \rho_t (X)=X+\int_t^{T}g(s, \rho_s(X), Z_s, U_s)\d s - \int_t^{T} Z_s\cdot \d W_s-\int_{(t,T]\times \R^d_\ast}U_s{(x)} \d\tilde N(s,x),
\end{equation*}
where $(g,T,X)$ are the parameters of the equation, and the triple $\big(\rho(X),Z,U\big)$ is the (unknown) solution. 
The random variable $X$ is commonly referred to as the terminal condition of the BSDE, $g$ is called the driver or generator, while $T\in(0,+\infty)$ is the terminal time or horizon of the BSDE. 
As is well-known, many financially meaningful dynamic risk measures are induced via BSDEs, in diverse settings with e.g., Lipschitz continuous or quadratic drivers and possibly infinite activity jumps (see \cite{ElKaroui+Ravanelli_2009_Cash_sub-additive_risk_measures_interest_rate_ambiguity,Laeven+Stadje_2014_Robust_portfolio_choice_indifference_valuation,Laeven+RosazzaGianin+Zullino_2023_Dynamic_return_star-shaped_risk_measures_BSDEs,Pardoux_1997_Generalized_discontinuous_backward_stochastic_differential_equations}, among others).

Whereas the properties of $\rho_{t}(X)$ have been extensively studied over the past three decades, e.g., in relation to the minimum capital requirements needed to achieve risk-acceptance with respect to the underlying risk measure, to our knowledge, there is no detailed account of possible definitions of expected time-derivatives for solutions to BSDEs. 
In this paper, motivated by the conceptualization of the resilience rate, we introduce and analyze the concept of the expected time-derivative of (the first component of) the solution to a BSDE. 
Formally, we investigate the quantity
\begin{equation}\bm\dot\rho_\t(X):=\lim_{\varepsilon\to0^+}\frac{1}{\varepsilon}\mathbb{E}\left[\rho_{\t+\varepsilon}(X)-\rho_{\t}(X)|\tau<T\right],\label{eq:rm}\end{equation}
where $\t$ is a stopping time taking values in $[0,T]$.

In particular, we first demonstrate that, under standard Lipschitz or quadratic assumptions on the BSDE driver and for deterministic times $t\in[0,T)$, the above limit equals (minus) the expectation of the driver evaluated at the solution, i.e.,
\[
    \bm\dot\rho_t(X)=-\mathbb{E}\left[g(t,\rho_t(X),Z_t,U_t)\right],
\]
which we refer to as the \emph{bouncing drift}.
This representation admits a particularly intuitive interpretation: 
whereas $\rho_t(X)$ reflects the \emph{level} of risk associated with $X$ at time $t\in[0,T)$, its expected time-derivative $\bm\dot\rho_t(X)$ captures the \emph{drift} or \emph{intensity} at which the risk level changes over time. 
To further support this interpretation, the direct dependency on the driver $g$, whose role as an infinitesimal generator of local risk preferences has been studied in \cite{Barrieu+ElKaroui_2009_Pricing_hedging_optimally_designing_derivatives_minimization_risk_measures, ElKaroui+Ravanelli_2009_Cash_sub-additive_risk_measures_interest_rate_ambiguity}, offers a meaningful perspective: the bouncing drift is intimately related to the local risk preferences of a decision-maker. 

When proceeding to the more general setting of stopping times instead of deterministic times, the standard assumptions on the BSDE parameters are no longer sufficient to guarantee the well-posedness of the limit in \eqref{eq:rm}. 
Therefore, we identify sufficient conditions involving the path-regularity of both the driver $g$ and the solution $\big(\rho(X),Z,U\big)$ under which we demonstrate that
\[
    \bm\dot\rho_{\tau}(X)=-\E\left[g\big(\t,\rho_{\t}(X),Z_{\t^+},U_{\t^+}\big)\big|\t<T\right],
\]
which is mathematically more subtle, and we provide a detailed verification analysis of the required conditions.

Once the well-posedness results have been established, we investigate how the usual axioms for a (dynamic) risk measure, such as time consistency, positive homogeneity, cash-additivity, continuity, convexity or star-shapedness, are connected to analogous properties of its resilience rate.
We also establish a version of the comparison theorem that holds for the expected time-derivative of the first component solution to a BSDE.
In a spirit similar to \cite{Jiang_2008_Convexity_translation_invariance_subadditivity_g-expectations_related_risk_measures}, we also manage to show that --- under suitable conditions --- the implications can be reversed, so that the main properties of a BSDE-induced resilience rate are inherited by its driver.

Furthermore, we introduce the new concept of a family of \textit{resilience-acceptance sets}.
While the literature contains several ways to define acceptance sets for risk measures that fully characterize the associated risk functional, to our best knowledge, this is the first study to address resilience-acceptance sets. 
For levels of financial resilience $a\in\R$ and times $t\in[0,T)$, we set
\[
    \mathcal{R}^{a}_{tT}(\rho):=\big\{X\in L^{p}(\mathcal{F}_T) \, : \, \bm\dot\rho_t(X)\leq a\big\}.
\]
In particular, we demonstrate that the resilience rate is fully characterized by resilience-acceptance sets and we establish a correspondence between key properties of the resilience rate --- such as time consistency, cash-insensitivity and positive homogeneity --- and analogous properties of the resilience-acceptance sets.

Next, we study the behavior of the resilience rate in several canonical financial examples. 
Beginning with the analysis of the price of a European put option in a Brownian financial market, we derive a closed-form expression for its resilience rate. 
We also obtain a closed-form formula for the resilience rate of a zero-coupon bond price, assuming the short rate follows a mean-reverting, Vasicek model. 
We further consider a financial setting with ambiguity regarding the appropriate risk-free rate to discount a bond's cash flows and, in this context, we derive explicit formulas for the corresponding bouncing drift.
Moreover, we evaluate the resilience rate of an investor employing exponential utility-based risk measures --- specifically, the entropic risk measure ---, and we provide a detailed analysis of the resilience rate in an incomplete financial market framework where the underlying asset is subject to jump activity.

In the last part of the paper, we delve further into the financial interpretation and implications of the resilience rate.
Whereas acceptance sets for cash-additive risk measures support the interpretation of such measures as the level of capital that, when added to a financial position (and invested in a risk-free asset), renders the position risk-acceptable, a cash-insensitive measure of resilience can be interpreted as an additional drift that, when added to the driver $g$, makes the position resilience-acceptable.
Moreover, the time integral of the resilience rate defines a \textit{resilience risk adjustment} --- namely, a correction term that can be added to the risk assessment provided by the dynamic risk measure to ensure a \textit{resilience-neutral} attitude for the financial position.

The concept of resilience has many different facets.
It is widely studied in psychology, where psychological resilience refers to the process of adapting to adversity, through behavioral, emotional or mental flexibility (cf.\ \cite{APA_2018_Resilience}).
It is also extensively studied in e.g., ecology, where ecological resilience refers to the ability of an ecosystem to absorb adverse shocks and adapt to change while maintaining its current functions (cf.\ \cite{Holling_1973_Resilience_Stability_Ecological_Systems}).
In economics, insurance and finance, the concept of resilience is only nascent. 
We define financial resilience through the resilience rate: the expected rate at which a dynamic risk evaluation recovers, that is, bounces back, following adverse shocks in the guise of a financial stress scenario.

Resilience should be distinguished from robustness. 
Robust measures of risk take into account that the probabilistic model may be misspecified, as the true probabilistic model is unknown (cf.\ \cite{ Follmer+Schied_2016_Stochastic_finance}). 
They induce optimal economic policies that are robust against model misspecification, performing sufficiently well under a wide variety of probabilistic models (cf.\ \cite{Hansen+Sargent_2001_Robust_Control_Model_Uncertainty}). 
Measures of resilience induce policies, hence financial positions, that are resilient, i.e., they are expected to recover sufficiently quickly when adverse events nevertheless occur.
Resilience should also be distinguished from stability/solidity as the ability to withstand adverse events. 
Financial positions are stable if they are not much affected by adverse scenarios.
We refer to \cite{Brunnermeier_2021_Resilient_society} for a general discussion of resilient societies.  

This paper is organized as follows.
In Section~\ref{sec:prel}, we introduce preliminaries.
In Section~\ref{sec:insmeares}, we introduce our measure of resilience and establish representation theorems.
In Section~\ref{sec:properties}, we examine how the main properties of dynamic risk measures are connected to the resilience rate 
and introduce the concept of resilience-acceptance sets.
Section~\ref{sec:examples} contains examples.
Section~\ref{SEC:interpretation} expands further on the resilience rate's financial interpretation. 
Section~\ref{sec:con} provides a conclusion.
Some additional technical details are relegated to four online appendices.

\section{Preliminaries}\label{sec:prel}
In this section, we present the definitions of the functional spaces used in this paper and the basic definitions of dynamic risk measures, the required results on BSDEs and their relation with dynamic risk measures, and some useful preliminaries on Malliavin calculus.

\subsection{Functional spaces and dynamic risk measures}
\label{SEC:funct_spaces_dyn_risk_measures}

In the following, we denote $\overline\N_0:=\{n\in\mathbb Z \ : \ n\geq 0\}\cup\{+\infty\}$ and endow it with the power set $\s$-algebra.
We also let $\R_+:=[0,+\infty)$ and $\R^d_\ast:=\R^d\setminus\{0\}$, for $d\in\N$, with the induced Euclidean topology.
We denote the Euclidean scalar product between $a, b \in \mathbb{R}^d$ as $a \cdot b$, and the induced norm by $\norm{a}$. 
For a topological space $E$, we let $\mathscr B(E)$ be its Borel $\s$-algebra.
Moreover, $\ell_1$ denotes the Lebesgue measure on $\mathscr B(\R)$, or its restriction to Borel subsets of~$\R$.
We use the symbol $\otimes$ for the product of two $\s$-finite measures.

Let $(\Om,\F,\P)$ be a complete probability space.
As usual, the probability $\P$, when restricted to a sub-$\s$-algebra, will still be denoted by $\P$.
If $p\in[1,+\infty]$ and $\mathcal G$ is a sub-$\s$-algebra of $\F$, we denote by $L^{p}\left(\mathcal G\right)$ the normed space $L^{p}\left(\Omega,\mathcal G,\mathbb{P}\right)$ of real-valued, $\mathcal G$-measurable and $p$-integrable (or essentially bounded for $p=\infty$) random variables $X:\Om\to\R$, endowed with the usual norm topology.

Let
\[
    N:\Omega\times\mathscr B\big(\R_+\times \R^d_\ast \big)\to\overline\N_0
\]
be a Poisson random measure on $\R_+\times \R^d_\ast$ (see, for instance, \cite[Chapter~4]{Sato_1999_Levy_Processes_Infinitely_Divisible_Distributions}) that satisfies the following conditions.
There exists a (non-negative) measure $\nu$ on $\R^d_\ast$ such that:
\begin{enumerate}[noitemsep, topsep=0pt]
    \item 
        $\nu$ is $\s$-finite and $\ds\int_{\R^d_\ast }(1\wedge \norm{x}^2)\d \nu(x)<+\infty$.
    \item 
        The intensity measure (sometimes called mean measure, or compensator) of $N$ satisfies $\E[N(A)]=(\ell_1\otimes\nu)(A)$, for all $A\in\mathscr B\big(\R_+\!\times \R^d_\ast \big)$.
    \item 
        Denoting the compensated random measure of $N$ as 
        $\tilde N:=N-\ell_1\otimes\nu$, see, e.g., \cite[Chapter~13]{Cohen+Elliott_2015_Stochastic_calculus_applications},
        if $B\in\mathscr B(\R^d_\ast )$ is such that $\nu(B)<+\infty$, then the $(-\infty,+\infty]$-valued process $(\om,t)\mapsto \tilde N(\om,[0,t]\times B)$
        is a martingale.
\end{enumerate}
We introduce the linear space $\Lambda^2:=L^2\big(\R^d_\ast ,\mathscr B(\R^d_\ast ),\nu\big)$ with its natural inner product $\langle u,v \rangle^{}_{\Lambda^2}:=\int_{\R^d_\ast}uv \d \nu$, for $u,v\in \Lambda^2$, and the induced norm $\|\,\cdot\, \|_{\Lambda^2}$.
Furthermore, let $W:\Omega\times\R_+\to\R^m$ be a standard Brownian motion, with $m\in\N$.

Let $\tilde\F^W_t:=\s\big(W_r \ : \ r\in[0,t]\big)$ and ${\tilde\F_t:=\tilde \F^W_t\vee \s\big(N([0,r]\times A) \ : \ A\in\mathscr B(\R^d_\ast  ), \ r\in[0,t]\big)}$, for any $t\geq 0$.
Then we define the complete and right-continuous filtrations ${\bm \F}^W:=\big(\F^W_t\big)_{t\geq 0}$, and $\bm \F:=(\F_t)_{t\geq 0}$, on $(\Om,\F,\P)$, by augmenting $\big(\tilde\F^W_t\big)_{t\geq 0}$ and $(\tilde\F_t)_{t\geq 0}$, respectively.
Namely,
\[
    \F^W_t:=\tilde \F^W_t \vee \bigcap_{s>t}\tilde \F^W_s\vee \mathscr N, 
    \qquad
    \F_t:=\tilde \F_t \vee \bigcap_{s>t}\tilde \F_s\vee \mathscr N, \qquad \forall\, t\geq 0,
\]
where $\mathscr N$ is the family of all events in $\F$ of null $\P$-probability.
The filtrations $\bm \F^W$, $\bm\F$ will be referred to as the Brownian and Brownian-Poissonian filtration, respectively.
We denote by $\mathcal P$ (and $\mathcal P^W$) the $\s$-algebra on $\Om\times\R_+$ generated by continuous $\bm\F$-adapted (resp.\ $\bm\F^W$-adapted) processes $\Om\times\R_+\to\R$.
Unless otherwise stated, equalities and inequalities between random variables will be understood $\mathbb{P}$\text{-a.s.}, whereas equalities and inequalities for processes must be interpreted $\mathbb{P}\otimes \ell_1$-a.e.

When, referring to a function $\psi:\Om\times\R_+\times E\ni(\om,t,e)\mapsto \psi(\om,t,e)\in \R$, for some non-empty set $E$, we say that $\psi$ satisfies $\P\otimes\ell_1$-a.e.\ the property $P$ in $E$, we mean that, for $\P\otimes\ell_1$-a.e.\ $(\om,t)\in\Om\times\R_+$, the function $E\ni e\mapsto \psi(\om,t,e)$ satisfies the property $P$.
For any $T\in(0,+\infty)$, $p\geq 1$, and $E\in\{\R^m,\Lambda^2\}$, we define the following classes of stochastic processes:
{\small\begin{align}
    L^p_T
    &:=\left\{\F\otimes\mathscr B\big([0,T]\big)\textit{-measurable }H:\Om\times[0,T]\to \R \ : \ \E\left[\int_0^T|H_t|^p\d t \right]<+\infty\right\},\\
    \mathcal S_T^p&:=\left\{\textit{càdlàg $\bm\F$-adapted }Y:\Om\times[0,T]\to \R \ : \ \E\left[\sup_{t\in[0,T]}|Y_t|^p\right]<+\infty\right\},\\
    \mathcal S_T^\infty&:=\left\{\textit{càdlàg $\bm\F$-adapted }Y:\Om\times[0,T]\to \R \ : \ \esssup_{\om\in\Om}\sup_{t\in[0,T]}|Y_t(\om)|<+\infty\right\},\\
    \mathcal H^p_T(E)&:=\text{\large$\Bigg\{$} \mathcal P\textit{-measurable }Z:\Om\times[0,T]\to E\ :\ \E\left[\left(\int_0^T\|Z_t\|_{E}^2\d t \right)^{p/2}\right]<+\infty\text{\large$\Bigg\}$},\\
        BMO(E)&:=\Bigg\{\mathcal P\textit{-measurable }Z:\Om\times[0,T]\to E \ :\ \esssup_{\Om}\sup_{t\in[0,T]}\E\left[\left.\int_t^T\|{Z_s}\|_E^2\d s \right|\F_t\right]<+\infty\Bigg\},
\end{align}}
\hspace{-0.14cm}where $\|\,\cdot\,\|_E=\norm{\,\cdot\,}$ if $E=\R^m$, or $\|\,\cdot\,\|_E=\|\,\cdot\,\|_{\Lambda^2}$ for $E=\Lambda^2$.
With slight abuse of notation, we use the same symbols for the normed spaces obtained as quotients of the above under the equivalence relation induced by their respective natural (semi)norms.
With this convention, a process in $\mathcal S^p_T$ or $\mathcal S^\infty_T$ is unique up to indistinguishability, while processes in $L^p_T$, $\mathcal H^p_T(E)$, $BMO(E)$ are unique up to a set of null $\P\otimes\ell_1$-measure.
All these spaces can be defined also in the Brownian setting, by replacing $\bm\F$-adaptedness and $\mathcal P$-measurability with $\bm\F^W$-adaptedness and $\mathcal P^W$-measurability, respectively. 
A right-continuous $\bm\F$-adapted stochastic process $Y:\Om\times[0,T]\to \R$ is said to be of class~$(D)$ if the family of random variables $\{Y_\tau \, : \, \tau \text{ is a $[0,T]$-valued $\bm\F$-stopping time}\}$ is uniformly integrable; see \cite[Definition~4.8]{Karatzas+Shreve_1991_Brownian_motion_stochastic_calculus}.
Moreover, if $Y\in \mathcal S^p_T$ for some $p\in[1,+\infty]$, then $Y$ is of class~$(D)$.

\begin{definition}[Dynamic risk measure]
\label{DEF:dynamic_risk_measure}
    Assume $p\in[1,+\infty]$ and $T>0$. 
    A dynamic risk measure on $L^p(\F_T)$ is a family $\rho=(\rho_t)_{t\in[0,T]}$ such that $\r_t:L^p(\F_T)\to L^p(\F_t)$ for any $t\in[0,T]$.
\end{definition}
\noindent A static risk measure $\rho_0:L^p(\F_T)\to \R$ is defined as the evaluation at time $0$ of the dynamic risk measure $\rho$; see \cite{Delbaen_2012_Monetary_Utility_Functions, Follmer+Schied_2016_Stochastic_finance}.

Let us now fix $p\in[1,+\infty]$ and $T>0$.
We present a non-exhaustive list of well-known axioms for a dynamic risk measure $\r$ on $L^p(\F_T)$.
We adopt the actuarial sign convention and interpret risk measures as the risk evaluation of a claim; see \cite{Barrieu+ElKaroui_2009_Pricing_hedging_optimally_designing_derivatives_minimization_risk_measures,Frittelli+RosazzaGianin_2002_Putting_order_risk_measures}.
\begin{itemize}[label=\scriptsize$\bullet$, noitemsep, topsep=0pt,leftmargin=1em]
    \item 
        Normalization: $\rho_t(0)=0$ for any $t\in[0,T]$.
    \item 
        \label{PAGE:time_consistency}Time consistency:\label{tc} $\rho_t\big(\rho_s(X)\big)=\rho_t(X)$ for any $X\in L^{p}(\mathcal{F}_T)$ and $0\leq t\leq s\leq T$.
    \item 
        Monotonicity: For  any $X,Y\in L^p(\F_T)$, if $X\leq Y$, then $\rho_t(X)\leq\rho_t(Y)$ for any $t\in[0,T]$.
    \item 
        Cash-additivity: If $h\in L^p(\F_t)$, then $\rho_t(X+h)=\rho_t(X)+h$ for any $X\in L^p(\F_T)$ and $t\in[0,T]$.
    \item 
        Positive homogeneity: $\rho_t(\alpha X)=\alpha\rho_t(X)$ for any $X\in L^p(\F_T)$, $\a\geq 0$ and $t\in[0,T]$.
    \item 
        Convexity: $\rho_t\big(\lambda X+(1-\lambda)Y\big)\leq \lambda\rho_t(X)+(1-\lambda)\rho_t(Y)$ for any $X,Y\in L^{p}(\mathcal{F}_T)$, $\lambda\in[0,1]$ and $t\in[0,T]$.
    \item 
        Star-shapedness: $\rho_t(\lambda X)\leq \lambda\rho_t(X)+(1-\lambda)\rho_t(0)$ for any $X\in L^{p}(\mathcal{F}_T)$, $\l\in[0,1]$ and $t\in[0,T]$.
    \item Continuity from above (below):
        For any $X\in L^p(\F_T)$ and any $\P$-a.s.\ non-increasing (resp.\ non-decreasing) sequence $(X^n)_{n\in\N}$ in $L^p(\F_T)$ such that $X^n \longrightarrow X$, $\P$-a.s., we have, for any $t\in[0,T]$, that $\rho_t(X^n)\longrightarrow \rho_t(X)$, $\P\text{-a.s.}$
\end{itemize}
For further details on the financial interpretation and implications of the properties listed above, interested readers are referred to, e.g., \cite{ Follmer+Schied_2016_Stochastic_finance, Barrieu+ElKaroui_2009_Pricing_hedging_optimally_designing_derivatives_minimization_risk_measures, Delbaen_2012_Monetary_Utility_Functions,Bellini+Laeven+RosazzaGianin_2018_Robust_return_risk_measures}. 

In a Brownian setting, it is common to define a dynamic risk measure by assuming \textit{a priori} the additional property of monotonicity, as it typically constitutes a desirable feature in the evaluation of risk.
We follow the approach of \cite[Section 5]{Quenez+Sulem_2013_BSDEs_jumps_optimization_applications_dynamic_risk_measures}, where, in a Brownian-Poisson filtration, monotonicity is not assumed \textit{a priori}, since it does not follow directly from the structure of a generic BSDE.

In the recent literature, the property of cash-additivity has been proved to be too restrictive in the case of ambiguous interest rates, defaultable contingent claims and/or absence of a zero-coupon bond in the financial market. 
Thus, \cite{ElKaroui+Ravanelli_2009_Cash_sub-additive_risk_measures_interest_rate_ambiguity} suggested employing the axiom of cash-subadditivity, weakening cash-additivity. 
In addition, also positive homogeneity and convexity have been shown to be too restrictive, for example, when an economic agent wants to disentangle diversification and liquidity risk. 
Thus, a recent stream of literature proposed the axiom of star-shapedness (see \cite{Castagnoli+Cattelan+Maccheroni+Tebaldi+Wang_2022_Star-shaped_risk_measures,Laeven+RosazzaGianin+Zullino_2023_Dynamic_return_star-shaped_risk_measures_BSDEs,Laeven+RosazzaGianin+Zullino_2024_Law-invariant_return_star-shaped_risk_measures} for an extensive discussion of static and dynamic star-shaped risk measures).

If $\rho$ is a dynamic risk measure on $L^p(\F_T)$, and $X\in L^p(\F_T)$, we will often work with the naturally defined $\bm\F$-adapted stochastic process ${\rho(X):\Om\times[0,T]\ni(\om,t)\mapsto\rho_t(X)(\om)\in \R}$. 
This notation highlights the connection to BSDEs, which will be examined in the next subsection.

\subsection{Backward stochastic differential equations}
\label{SEC:BSDEs}

Throughout the remainder of the paper, we assume that $T>0$ is a fixed finite time horizon.
We now provide the main results on BSDEs that we will use in this paper. 
\begin{definition}
\label{DEF:solution_BSDE_jumps}
    We call a driver a $\mathcal P\otimes\mathscr B(\R)\otimes\mathscr B(\R^m)\otimes\mathscr B(\Lambda^2)$-measurable function 
    \[
        g:\Om\times[0,T]\times\R\times \R^m\times \Lambda^2\ni(\om,t,y,z,u)\mapsto g(\om,t,y,z,u)\in \R.
    \]
    If $g$ is a driver, $T'\in(0,T]$ and $X:\Om\to\R$ is $\F_{T'}$-measurable, then a solution to the BSDE with parameters $(g,T',X)$ is a triple $(Y,Z,U)\in\mathcal S_{T'}^2\times \mathcal H_{T'}^2(\R^m)\times\mathcal H_{T'}^2(\Lambda^2)$ such that, for any $t\in[0,T']$,
    \begin{equation}
    \label{EQ:BSDE}
        Y_t=X+\int_t^{T'}g(s, Y_s, Z_s, U_s)\d s - \int_t^{T'} Z_s\cdot \d W_s-\int_{(t,T']\times \R^d_\ast }U_s{(x)} \d\tilde N(s,x).
    \end{equation}
\end{definition}
In equation~\eqref{EQ:BSDE}, the integral with respect to the random measure $\tilde N$ is to be understood as in  \cite{Tang+Li_1994_Necessary_Conditions_Optimal_Control_Stochastic_Systems_Random_Jumps};
see also \cite[Chapter~13]{Cohen+Elliott_2015_Stochastic_calculus_applications}.
Let us recall from \cite[Section II.3]{Ikeda+Watanabe_1989_Stochastic_differential_equations_diffusion_processes} that, if $U\in \mathcal H^2_T(\Lambda^2)$, then the process $\int_{[0,\,\cdot\,]\times\R^d_\ast}U\d\tilde N$ is a square integrable $\bm\F$-martingale.

In the following, we occasionally restrict our analysis to a Brownian filtration. 
In this case, a \textit{Brownian} driver is defined as a $\mathcal P^W\otimes\mathscr B(\R)\otimes\mathscr B(\R^m)$-measurable function ${g:\Om\times[0,T]\times\R\times \R^m\to \R}$.
Moreover, for $T'\in(0,T]$ and $X\in L^2\big(\F^W_{T'}\big)$, a solution to the \textit{Brownian} BSDE with parameters $(g,T',X)$ is a pair $(Y,Z)\in\mathcal S_{T'}^2\times \mathcal H_{T'}^2(\R^m)$ such that, for any $t\in[0,T']$,
\begin{equation}
\label{EQ:brownian_BSDE}
    Y_t=X+\int_t^{T'}g(s, Y_s, Z_s)\d s - \int_t^{T'} Z_s\cdot \d W_s.
\end{equation}

\begin{remark}
\label{REM:induced_dyn_risk_meas}
    Consider a driver $g$ and a time $T'\in[0,T]$.
    If, for any $X\in L^2(\F_{T'})$, the BSDE \eqref{EQ:BSDE} with parameters $(g,T',X)$ admits a solution, then the first components of these solutions naturally induce a conditional non-linear expectation, see \cite{Tang+Wei_2012_Representation_dynamic_time-consistent_convex_risk_measures_jumps,Peng_1997_Backward_SDE_related_g-expectation}, or, according to Definition~\ref{DEF:dynamic_risk_measure}, a dynamic risk measure.
    Indeed, if $\rho(X)\in \mathcal S^2_{T'}$ denotes the first component of the solution to the BSDE with parameters $(g,T',X)$, then the function ${\rho_t:L^2(\F_{T'})\ni X \mapsto \rho_t(X)\in L^2(\F_t)}$, is well-defined for any $t\in[0,T']$, and the family $\rho:=(\rho_t)_{t\in[0,T']}$ is a dynamic risk measure on $L^2(\F_{T'})$.
    We say that $\rho$ is induced on $L^2(\F_{T'})$ by the driver $g$.
    The same reasoning holds for the dynamic risk measure on $L^2\big(\F^W_{T'}\big)$ induced by a Brownian driver $g$, for which the Brownian BSDE \eqref{EQ:brownian_BSDE} with parameters $(g,T',X)$ admits a solution for each $X\in L^2\big(\F^W_{T'}\big)$.
    This remark can be generalized to dynamic risk measures on $L^p(\F_{T'})$ (and $L^p\big(\F^W_{T'}\big)$) for any $p\in[1,+\infty]$ if, for any $X\in L^p(\F_{T'})$ (resp.\ $\F^W_{T'}$), a solution to the associated BSDE exists and is proved to be in $\mathcal S^p_{T'}$.
\end{remark}

We recall some theorems of existence, uniqueness and regularity for solutions to BSDEs. 
In the Brownian-Poissonian setting, the following assumptions have been considered.
\begin{enumerate}[label=(\textbf{L}), leftmargin=1.8em]
    \item\label{IT:L_condition}
        Let $g$ be a driver such that the following holds. 
        There exist $K>0$, $\a\in\R$, and a non-negative $\varphi\in L^2_T$, such that, for all $(t,y,z,u)\in [0,T]\times\R\times\R^m\times\Lambda^2$ and all $(y',z',u')\in\R\times\R^m\times\Lambda^2$:
        \begin{align}
            &|g(t,y,z,u)|\leq \varphi_t+K\big(|y|+\norm z + \|u\|_{\Lambda^2}\big),\\
            &|g(t,y,z,u)-g(t,y,z',u')|\leq K\big(\norm{z-z'}+\|u-u'\|_{\Lambda^2}\big),\\
            &(y-y')\big(g(t,y,z,u)-g(t,y',z,u)\big)\leq \a(y-y')^2,\\
            &\R\ni y \mapsto g(\,\cdot\,,y,z,u) \text{ is }\P\otimes\ell_1\text{-a.e. continuous}.
        \end{align}
        Then, for every $X\in L^2(\F_T)$ there exists a unique solution to the BSDE \eqref{EQ:BSDE} with parameters $(g,T,X)$. 
        See \cite{Pardoux_1997_Generalized_discontinuous_backward_stochastic_differential_equations, Royer_2006_Backward_stochastic_differential_equations_jumps_related_non-linear_expectations}.
    \end{enumerate}
    \begin{enumerate}[label=(\textbf{Q})]
    \item\label{IT:Q_condition}
        Assume that the driver $g$ is independent of the $y$-component and that there exist a process $\varphi\in BMO(\R)$, a $\P\otimes\ell_1\otimes\nu$-essentially bounded process $\psi\in BMO(\Lambda^2)$, constants $K_1,K_2,K_3,K_4,\eps> 0$, such that, for all $(t,z,u)\in[0,T]\times\R^m\times\left(\Lambda^2\cap L^\infty(\R^d_\ast, \mathscr B(\R^d_\ast),\nu)\right)$:
        \begin{align}
            &g(t,0,0)=0, \qquad \P\text{-a.s},\\
            &0\leq g(t,z,u)\leq K_1\left(1+\norm{z}^2\right)+K_2\int_{\R^d_\ast}\left[\exp\left(\frac{u(x)}{K_2}\right)-\frac{u(x)}{K_2}-1\right]\d \nu (x), \qquad \P\text{-a.s},\\
            &\R^m\times(\Lambda^2\cap L^\infty(\nu))\ni(z,u)\mapsto g(t,z,u)\text{ is }\P\text{-a.s.\ Fréchet-differentiable in $\R^m\times\Lambda^2$},\\
            &|\partial_zg(t,z,u)|\leq \varphi_t+K_3\norm z, \qquad \P\text{-a.s.},\quad
            |\partial_ug(t,z,u)|\leq \psi_t+K_4|u|,\quad \P\otimes\nu\text{-a.e.},\\
            &\psi_t\geq -1+\eps,\qquad \P\otimes\nu\text{-a.e.}
        \end{align}
        Then, for every $X\in L^\infty(\F_T)$, there exists a unique solution $(Y,Z,U)$ to the BSDE \eqref{EQ:BSDE} with parameters $(g,T,X)$.\footnote{The assumptions can be relaxed further by letting $u$ vary in the Orlicz heart corresponding to the function $\R^d_\ast\ni x \mapsto e^{\norm x}-\norm x-1$. 
        This space is indeed a subset of $\Lambda^2$ larger than $\Lambda^2\cap L^\infty\left(\R^d_\ast,\mathscr B(\R^d_\ast),\nu\right)$, 
        with inclusions dense in the norm $\|\,\cdot\,\|_{\Lambda^2}$.
        Moreover, the hypothesis of Fréchet-differentiability can be replaced by the one of convexity and the partial derivatives by elements of the subgradient.}
        Moreover, $(Y,Z,U)\in \mathcal S^\infty_T\times BMO(\R^m)\times BMO(\Lambda^2)$, and $U$ is $\P\otimes\ell_1\otimes\nu$-essentially bounded.
        See \cite[Appendix~A]{Laeven+Stadje_2014_Robust_portfolio_choice_indifference_valuation} for this result, while a similar statement can be found in \cite[Section~5]{ElKaroui+Matoussi+Ngoupeyou_2016_Quadratic_Exponential_Semimartingales_Application_BSDEs_jumps}.
\end{enumerate}
In the Brownian setting instead, the most common settings that grant existence and uniqueness to solutions are the following.
\begin{enumerate}[label=(\textbf{BL})]
    \item
    \label{IT:BL_condition}
        Let $g$ be a Brownian driver such that $g(\,\cdot\,,0,0)\in L^2_T$ and such that, for some $K\geq 0$ and all $(y,z),(y',z')\in\R\times\R^m$:
        \[
                |g(\,\cdot\,,y,z)-g(\,\cdot\,,y',z')|\leq K\big(|y-y'|+\norm{z-z'}\big), \qquad \P\otimes\ell_1\text{-a.e.}
        \]
        Then, for every $X\in L^2(\F^W_T)$, there exists a unique solution $(Y,Z)$ to the Brownian BSDE \eqref{EQ:brownian_BSDE} with parameters $(g,T,X)$.
        Moreover, $Y$ is $\P$-a.s.\ continuous.
        See \cite{Pardoux+Peng_1990_Adapted_solution_backward_stochastic_differential_equation} and \cite[Theorems~4.2.1, 4.3.1]{Zhang_2017_Backward_stochastic_differential_equations}.
\end{enumerate}
\begin{enumerate}[label=(\textbf{BQ})]
    \item
    \label{IT:BQ_condition}
        Let a Brownian driver $g$ satisfy the following assumptions. 
        There exist $K,K'>0$ such that, for any $(t,y,z)\in[0,T]\times\R\times\R^m$ and $(y',z')\in\R\times\R^m$:
        \begin{align}
            &|g(t,y,z)|\leq K\big(1+|y|+\norm z^2\big),\\
            &|g(t,y,z)-g(t,y',z')|\leq K' \big[|y-y'|+(1+|y|+|y'|+\norm{z}+\norm{z})\norm{z-z'}\big].
            \end{align}
        Then, for every $X\in L^\infty(\F^W_T)$, there exists a unique solution $(Y,Z)$ to the Brownian BSDE \eqref{EQ:brownian_BSDE} with parameters $(g,T,X)$.
        Moreover, $(Y,Z)\in \mathcal S^\infty_T\times BMO(\R^m)$ and $Y$ is $\P$-a.s.\ continuous.
        See \cite{Kobylanski_2000_Backward_stochastic_differential_equations_partial_differential_equations_quadratic_growth} and \cite[Theorems~7.2.1,~7.3.3]{Zhang_2017_Backward_stochastic_differential_equations}.
\end{enumerate}

As is well-known, the properties of dynamic risk measures induced via BSDEs are mainly dictated by the corresponding drivers.
In the following list, we make explicit the relation between the driver $g$ and the induced dynamic risk measure (see, for instance, \cite{Tang+Wei_2012_Representation_dynamic_time-consistent_convex_risk_measures_jumps,Quenez+Sulem_2013_BSDEs_jumps_optimization_applications_dynamic_risk_measures,Calvia+RosazzaGianin_2020_Risk_measures_progressive_enlargement_filtration_BSDE_approach} or, in the Brownian setting, \cite{Barrieu+ElKaroui_2009_Pricing_hedging_optimally_designing_derivatives_minimization_risk_measures,Jiang_2008_Convexity_translation_invariance_subadditivity_g-expectations_related_risk_measures,Laeven+RosazzaGianin+Zullino_2023_Dynamic_return_star-shaped_risk_measures_BSDEs,RosazzaGianin_2006_Risk_measures_g-expectations}). 
Let $g$ be a driver satisfying the condition~\ref{IT:L_condition} and 
let $\rho$ denote the dynamic risk measure induced by $g$ on $L^2(\F_T)$, 
as explained in Remark~\ref{REM:induced_dyn_risk_meas}.
Then the following properties hold.
\begin{enumerate}[noitemsep, topsep=0pt, leftmargin=1.3em]
    \item Zero-one law: If $g(\,\cdot\,,0,0,0)=0$, then $\rho_t(\1_AX)=\1_A\rho_t(X)$ for any $t\in[0,T]$, $A\in\F_t$ and $X\in L^2(\F_T)$. 
    In particular, $\r$ is normalized.
    \item \label{PAGE:flow_property} Flow property:\footnote{Note that, in a Brownian setting and under the normalization condition on the driver $g(\,\cdot\,,0,0)=0$, the flow property reduces to time consistency as formulated on page~\pageref{tc} (see \cite{DiNunno+RosazzaGianin_2024_Fully_dynamic_risk_measures_horizon_risk_time-consistency_relations_BSDEs_BSVIEs} for details).}  
    For all $s\in(0,T]$, if $(\rho_{t,s})_{t\in[0,s]}$ is the dynamic risk measure induced by $g$ on $L^2(\F_s)$, then for any $X\in L^2(\F_T)$ and $t\in[0,s]$, we have $\rho_{t,T}(X)=\rho_{t,s}\big(\rho_{s,T}(X)\big)$.
    \item Monotonicity: $\r$ is monotone if $g$ satisfies the following condition.
        \begin{enumerate}[label=$\mathbf{(C)}$]
            \item 
            \label{IT:hp_comparison_jump}
                There exist $C_1\in(-1,0]$ and $C_2\geq 0$ such that, for all $(y,z)\in\R\times\R^m$ and all $u_1,u_2\in \Lambda^2$, there exists a $\mathcal P$-measurable process $\Gamma:\Om\times[0,T]\to \Lambda^2$ such that 
            \vspace{-0.5em}
            \begin{align*}
                &C_1(1\wedge |x|)\leq \Gamma_t(\om)(x)\leq C_2(1\wedge|x|), \qquad \P\otimes\ell_1\text{-a.e. }(\om,t), \ \nu\text{-a.e. }x\\
                &g(\,\cdot\,,y,z,u_1)-g(\,\cdot\,,y,z,u_2)\leq \big\langle u_1-u_2, \Gamma\big\rangle_{\Lambda^2}^{}.
            \end{align*}
        \end{enumerate}
    \vspace{-1em}
    \item Convexity: 
    \begin{itemize}[label=\scriptsize$\bullet$, noitemsep, topsep=0pt,leftmargin=1em]
        \item 
            If $g$ is $\P\otimes\ell_1$-a.e.\ convex in $(y,z,u)$,
            then $\rho$ is convex.\label{PAGE:convex}
        \item 
            If, for $\P\otimes\ell_1$-a.e.\ $(\om,t)$ we have $g(\om,t,y,0,0)=0$ for $\ell_1$-a.e.\ $y\in\R$, then $\rho$ is convex if and only if $g$ is $\P\otimes\ell_1$-a.e.\ convex in $(y,z,u)$.
        \item 
            If $g$ satisfies the condition \itemref{IT:hp_comparison_jump}, then $\rho$ is convex if and only if $g$ is independent of $y$ (i.e., it is defined on $\Om\times[0,T]\times\R^m\times\Lambda^2$), it is $\P\otimes\ell_1$-a.e.\ convex in $(z,u)$, and $g(\,\cdot\,,0,0)=0$. 
    \end{itemize}
    \item Positive homogeneity: 
        \begin{itemize}[label=\scriptsize$\bullet$, noitemsep, topsep=0pt,leftmargin=1em]
            \item   
                If $g$ is $\P\otimes\ell_1$-a.e.\ positively homogeneous in $(y,z,u)$,
                then $\rho$ is positively homogeneous.
            \item If $\P\otimes\ell_1$-a.e., we have $g(\,\cdot\,,y,0,0)=0$ for $\ell_1$-a.e.\ $y\in\R$, then $\rho$ is positively homogeneous if and only if $g$ is $\P\otimes\ell_1$-a.e.\ positively homogeneous in $(y,z,u)$. 
        \end{itemize}        
    \item Cash-additivity: If $g$ does not depend on $y$, then $\rho$ is cash-additive.
\end{enumerate}

The previous results still hold for the Brownian filtration, if $g$ satisfies the condition~\ref{IT:BL_condition}.
However, for such a Brownian driver $g$ and its induced dynamic risk measure $\rho$, the following additional properties hold.
\begin{enumerate}[resume, noitemsep, topsep=0pt, leftmargin=1.4em]
    \item[$6'$.]
        Cash-additivity: If $\P\otimes\ell_1$-a.e., we have $g(\,\cdot\,,y,0)=0$ for all $y\in\R$, then $\rho$ is cash-additive if and only if $g$ is independent of $y$.
    \item 
        Star-shapedness: $\rho$ is star-shaped if and only if $g$ is $\P\otimes\ell_1$-a.e.\ star-shaped in $(y,z)$.
\end{enumerate}

\begin{remark}
\label{REM:condition_C}
    Let us note that condition \itemref{IT:hp_comparison_jump} is required to apply the BSDE comparison theorem in the Brownian-Poissonian filtration, see \cite[Theorem~2.5]{Royer_2006_Backward_stochastic_differential_equations_jumps_related_non-linear_expectations}.
    If one reduces to the Brownian setting, then this condition is trivially satisfied, and can be omitted. 
    In other words, a dynamic risk measure induced by a Brownian driver is automatically monotone.
\end{remark}

There are sufficient conditions under which dynamic cash-additive risk measures can be represented via BSDEs; see \cite[Proposition~3.3]{Tang+Wei_2012_Representation_dynamic_time-consistent_convex_risk_measures_jumps} and \cite[Theorem~4.6]{Royer_2006_Backward_stochastic_differential_equations_jumps_related_non-linear_expectations}, or \cite{Peng_2004_Nonlinear_expectations_nonlinear_evaluations_risk_measures,Peng_2005_dynamically_consistent_nonlinear_evaluations,RosazzaGianin_2006_Risk_measures_g-expectations} for the Brownian setting.\footnote{In the Brownian setting, similar arguments can be applied to positively homogeneous risk measures; see \cite{Laeven+RosazzaGianin+Zullino_2025_Geometric_bsdes}.
In particular, it is possible to employ the one-to-one correspondence in \cite{Laeven+RosazzaGianin+Zullino_2025_Geometric_bsdes} to derive sufficient conditions under which a positively homogeneous risk measure can be induced via geometric BSDEs.}

\subsection{Malliavin calculus}

For $n\in\N$, let us denote by $C^{\infty}_b(\R^n)$ the space of all smooth functions $f:\R^n\to\R$ that are bounded and have bounded derivatives of all orders. 
We let $\mathscr S$ denote the space of all random variables of the form 
\[
    F=f\left(\int_0^T\varphi_1(t)\cdot\d W_t,\dots, \int_0^T\varphi_n(t)\cdot\d W_t\right),
\]
for some $n\in\N$, $f\in C^\infty_b(\R^n)$ and $\varphi_1,\dots,\varphi_n\in L^2(0,T;\R^m)$. 
We call a \textit{Malliavin derivative operator} a mapping $D:\mathscr S\to L^2_T$ such that:
\[
    DF:=\sum_{j=1}^n\varphi_j\partial_jf\left(\int_0^T\varphi_1(t)\cdot\d W_t,\dots, \int_0^T\varphi_n(t)\cdot\d W_t\right), \qquad \forall\, F\in\mathscr S.
\]
Next, we introduce the norm $\|F\|_{1,2}:=\mathbb{E}\left[|F|^2+\int_0^T|D_tF|^2\d t\right]$, for $F\in \mathscr S$, and we let $\mathbb D^{1,2}$ denote the completion of $\mathscr S$ in $L^2\big(\F^W_T\big)$ under $\|\,\cdot\,\|_{1,2}$.
It can be shown, see, e.g., \cite{Nualart_2006_Malliavin_calculus_related_topics}, that $D$ is a densely defined, closed linear operator from $\mathscr S\subset \mathbb D^{1,2}$ to $L^2_T$, with a unique extension to $\mathbb D^{1,2}$, still denoted by $D$.
For each $F\in \mathbb D^{1,2}$, we call a Malliavin derivative of $F$ the stochastic process $DF:\Om\times [0,T]\ni(\om,t)\mapsto (D_tF)(\om)\in\R$. 

We now recall two basic results that will be used later in the paper, without further mentioning. 
For every $X\in\mathbb{D}^{1,2}$, any $h\in C^1(\R)$, and $\ell_1$-a.e. times $t,s\in[0,T]$, we have: $D_t\mathbb{E}\left[\left.X\right|\mathcal F^W_s\right]=\mathbb{E}\left[\left.D_tX\right|\F^W_s\right]\1_{[0,s]}(t)$ and $D_th(X)=h'(X)D_t X$.

The general theory of Malliavin calculus can be found in \cite{DiNunno+Oksendal+Proske_2009_Malliavin_calculus_Levy_processes_applications_finance,Nualart_2006_Malliavin_calculus_related_topics}, among others.
An important stream of research has also investigated the link between the solutions of BSDEs and Malliavin calculus. 
Specifically, it has been shown that, under various assumptions on the parameters of the Brownian BSDE, it is possible to express the second component of the solution in terms of the Malliavin derivative of the first component. 
In this subsection, we do not detail the exact results, as they are established under different sets of assumptions in the literature. 
Instead, we present a brief list of the most relevant works for our purposes that address this topic, namely \cite{ElKaroui+Peng+Quenez_1997_Backward_stochastic_differential_equations_finance, Wu+Yu_2012_Backward_stochastic_viability_related_properties_Z_BSDEs_applications, Ma+Zhang_2002_Representation_theorems_backward_stochastic_differential_equations}. 
We mention that Malliavin calculus can also be defined in the Brownian-Poissonian filtration and that, in this setting, also the third component of the solution to a BSDE can be characterized as the Brownian-Poissonian Malliavin derivative of the first component, see \cite{Oksendal+Sulem_2019_Applied_stochastic_control_jump_diffusions,DiNunno+Oksendal+Proske_2009_Malliavin_calculus_Levy_processes_applications_finance}.

\section{A Measure of Financial Resilience}
\label{sec:insmeares}
Fix $T\in(0,+\infty)$ and define the set
\[
    \mathcal{T}_T:=\big\{\bm\F\textit{-stopping times } \tau:\Omega\to[0,T] \ : \ \P(\tau<T)>0\big\}.
\]
Let $\rho$ be a dynamic risk measure on $L^p(\F_T)$, for some $p\in[1,+\infty]$, and let $X\in L^p(\mathcal{F}_T)$ be a risky claim.
For $c \in \mathbb{R}$, let us define 
\[
    \tau^c := T\wedge\inf\big\{t \in [0, T] : \ \rho_t(X) \geq c\big\},
\]
with the convention $\inf\{\emptyset\}=+\infty$. 

Clearly, $\tau^c$ is an $\bm\F$-stopping time, which can be interpreted as the first time at which the risk of the claim $X$, evaluated through the dynamic risk measure $\rho$, exceeds the threshold $c \in \mathbb{R}$.
The threshold level can be chosen in such a way that the probability of a hit in $[0,T)$ is non-zero. 
Then, $\t^c\in\mathcal T_T$.
When $c \equiv 0$ and $\rho$ is a dynamic normalized cash-additive risk measure (cf.\ \cite{Delbaen_2012_Monetary_Utility_Functions, Follmer+Schied_2016_Stochastic_finance}), $\tau^0$ admits a natural interpretation: it represents the first time the risk of $X$ becomes unacceptable, i.e., it breaches the risk-acceptance set, requiring an additional risk capital margin to cover the risk of insolvency, as stipulated e.g., by financial regulators. 
An increasingly important regulatory concern is how long this capital margin must be maintained in the financial portfolio, and whether it is expected to increase or decrease in the near future. 
We note that the threshold level $c$ may be generalized to depend on time, i.e., to be maturity-dependent, in a continuous and deterministic way.

In this section, we introduce a measure of financial resilience, capturing the expected \textit{rate} at which the risk associated with $X$, evaluated using the \textit{dynamic risk measure} $\rho$, \textit{recovers} (i.e., \emph{bounces back}) upon the occurrence of a \textit{stress scenario}.
Clearly, not all dynamic risk measures will exhibit favorably signed resilience. 
In the sequel, we provide a precise formal definition of our measure of financial resilience.

\subsection{Resilience rate}

Assume that $\rho(X)$ is $\bm\F$-progressively measurable.
Then, we can consider the family of random variables (thanks to \cite[Proposition~2.18]{Karatzas+Shreve_1991_Brownian_motion_stochastic_calculus})
\[
    \rho^{\varepsilon}_{\tau^c}(X):=\frac{1}{\varepsilon} \big( \rho_{(\tau^c + \varepsilon){\wedge T}}(X) - \rho_{\tau^c}(X) \big), \qquad \forall \, \eps>0.
\]
If $\rho^{\varepsilon}_{\tau^c}(X)$ is positive for all $\eps>0$, then the risk evaluation of $X$ increases after $\tau^c$, meaning $\rho(X)$ is not recovering. 
On the other hand, if $\rho^{\varepsilon}_{\tau^c}(X)$ is negative for all $\eps>0$, then the risk evaluation of $X$ diminishes, indicating that $\rho(X)$ is exhibiting favorably signed resilience.

While the limit as $\eps\to 0^+$ of $\rho^{\varepsilon}_{\tau^c}(X)$ might seem appealing, it is well-known that stochastic processes (and solutions to BSDEs) generally do not possess differentiable paths. 
For this reason --- and also because we aim to identify a scenario-independent quantity, i.e., one that is immediately readable and interpretable for regulatory purposes, and allows for an ``instantaneous'' evaluation of resilience --- we propose defining the expected time-derivative of a dynamic risk measure at a stopping time, in the sense specified by the following definition.

\begin{definition}[Resilience rate]
\label{DEF:resilience_rate_jumps}
    Let $\rho$ be a dynamic risk measure on $L^p(\F_T)$, for some $p\in[1,+\infty]$, and let $X\in L^p(\F_T)$. 
    Assume that the process $\rho(X)$ is of class~$(D)$.
    For $\t\in\mathcal T_T$, we define the \textbf{resilience rate} of $\rho_\t(X)$ as the limit
    \begin{equation*}
        \bm\dot \rho_\t(X):=\lim_{\varepsilon\to0^+}\frac{1}{\varepsilon}\mathbb{E}\left[\rho_{(\tau+\varepsilon){\wedge T}}(X)-\rho_{\tau}(X){\big|\tau<T}\right],
    \end{equation*}
    whenever it exists in $\overline\R:=\R\cup\{\pm\infty\}$.
    Similarly, for $\s,\t\in\mathcal T_T$ such that $\s\leq\t$, we define the \textbf{conditional resilience rate} of $\rho_\t(X)$, conditioned at time $\s$, as  
    \begin{equation*}
        \bm\dot\rho_{\tau|\s}(X):=
        L^p(\F)\text{ \!-\!}\lim_{\varepsilon\to0^+}\frac{\E\left[\rho_{(\tau+\varepsilon)\wedge T}(X)-\rho_{\tau}(X)\big|\F_\s\right]}{\eps\,\P(\tau<T)},
    \end{equation*}
    whenever the limit exists in $L^p(\F)$.
    Here, $\mathcal{F}_{\s}$ denotes the sigma-algebra of $\s$-past, which is defined as $\F_\s:=\big\{A\in \F \ : \ A\cap \{\s\leq t\}\in\F_t, \ \forall\, t\in[0,T] \big\}$.
\end{definition}

We point out that, if $\tau\in\mathcal T_T$, then $(\tau+\eps)\wedge T$ is a $[0,T]$-valued $\bm\F$-stopping time for every $\eps>0$.
Consequently, if $\rho(X)$ is of class~$(D)$, both $\rho_{(\tau+\varepsilon)\wedge T}(X)$ and $\rho_{\tau}(X)$ are well-defined, integrable random variables and their difference converges to $0$ in $L^1(\F)$ as $\eps\to 0^+$.
This observation gives meaning to the quantities under consideration.
Nevertheless, it does not suffice to guarantee the existence of the limits appearing in the definition.
The assumptions under which these limits do exist will be analyzed in Sections~\ref{SEC:bouncing_drift}, \ref{subsec:VT} and in Appendix~\ref{APP:verif_assu_brownian}.

Whenever $\bm\dot\rho_{\tau}(X)$ is well-defined, its financial interpretation is clear: it measures the expected instantaneous rate of change for the risk evaluation of $X$ immediately after the condition specified by the stopping time $\tau$ is met, provided this occurs before maturity.
If $\bm\dot\rho_{\tau}(X) < 0$, we say that the risk evaluation of $X$ is \emph{resilient}, since $\rho_{\tau}(X)$ is expected to recover (i.e., decrease) directly after $\tau$.
Instead, if $\bm\dot\rho_{\tau}(X) > 0$, the risk evaluation is expected to increase, and the position $X$ does not exhibit resilience.
Finally, whenever $\bm\dot\rho_{\tau}(X)=0$, the risk evaluation is said to be \emph{resilience neutral}.
We further discuss the role of resilience neutrality in Section~\ref{SEC:interpretation}.

It is worth noting that if $\rho(X)$ is a class~$(D)$ submartingale (resp.\ supermartingale, martingale), then for any $\bm\F$-stopping times $\s\leq \t\leq T$ the optional sampling theorem (cf. \ \cite[Theorem~17, Chapter~I]{Protter_2005_Stochastic_integration_differential_equations}) yields $\E[\rho_{\t}(X)]\geq \E[\rho_{\s}(X)]$ (resp.\ $\leq$, $=$). 
In particular, for any $\tau\in \mathcal T_T$, the function $\eps\mapsto \E\big[\rho_{(\t+\eps)\wedge T}(X)\big]$ is non-decreasing (resp.\ non-increasing, constant), thus, whenever $\bm\dot\rho_\tau(X)$ is well-defined, it is non-negative (resp.\ non-positive, zero).
We refer to \cite[Section~3.1]{Delbaen_2006_Structure_m-stable_sets_particular_set_risk_neutral_measures} for sufficient conditions on a dynamic risk measure that ensure the existence of a càdlàg modification satisfying the submartingale property.

\begin{example}[VaR and ES]
\label{EX:VaR_ES}
For $\alpha\in(0,1)$, let us define the dynamic version of the Value-at-Risk as follows (see, e.g., \cite{Vogelpoth_2006_some_results_dynamic_risk_measures, Follmer+Schied_2016_Stochastic_finance}):
\[
    \operatorname{VaR}_t^\a(X)
    :=\essinf \left\{Y\in L^2\big(\F^W_t\big) \ : \ \P\big(X\geq Y\big|\F^W_t\big)\leq\alpha\right\}, \quad \forall\, X\in L^2\big(\F_T\big), \ t\in[0,T].
\]
Then we can compute the resilience rate of the (dynamic) Value-at-Risk for the $\F^W_T$-measurable random variable
\[
    X=x+\int_0^T\mu_s\d s + \int_0^T\s_s\d W_s,
\]
where $x\in \R$, and $\mu,\s:[0,T]\to \R$ are continuous functions. 
Since $X$ is normally distributed, it is easy to verify that, for all $t\in[0,T]$,
\begin{equation}
\label{EQ:VAR}
    \operatorname{VaR}_t^\a(X)
    =x+\int_0^T\mu_s\d s + \int_0^t\s_s\d W_s - z_\a\sqrt{\int_t^T\s^2_s\d s},
\end{equation}
where $z_\a$ is the $\alpha$-quantile of the standard normal distribution.
A straightforward computation shows that, for any $\t\in\mathcal T_T$,
\[
    \bm\dot{\operatorname{VaR}^\a_\t}(X)=
    \!
    \lim_{\eps\to0^+}
    \!
    -\frac{z_\a}{\eps}\E\left[\left.\sqrt{\int_{{(\t+\eps)}\wedge T}^T
    \!\!\!\!\!\!\!\!\!
    \s^2_s\d s}-\sqrt{\int_\t^T\s^2_s\d s}\right|\t<T\right]
    \!
    =\frac{z_\a}{2}\E\left[\left.\frac{\s^2_\t}{\sqrt{\int_\t^T\s^2_s\d s}}\right|\t<T\right].
\]
The result is negative if $\a\in(0,1/2)$, zero at $\alpha=1/2$, and positive otherwise.

Similarly, let us define the dynamic version of the Expected Shortfall as follows:
\[
    \operatorname{ES}_t^\a(X):=\frac1\a\int_0^\a\operatorname{VaR}_t^\g(X)\d\g, \qquad \forall\, X\in L^2\big(\F_T\big), \ t\in[0,T].
\]
For the same choice of $X$ as before, we have
\begin{equation}
\label{EQ:ES}
    \operatorname{ES}_t^\a(X)
    =x+\int_0^T\mu_s\d s + \int_0^t\s_s\d W_s +\frac1\a\varphi(z_\a)\sqrt{\int_t^T\s^2_s\d s},
\end{equation}
where $\varphi$ is the probability density function of the standard normal distribution, and we used the known relation $x\varphi(x)=-\varphi'(x)$ to integrate $z_\g$ for $\gamma\in(0,\a]$.
A straightforward computation shows that, for any $\t\in\mathcal T_T$,
\[
    \bm\dot{\operatorname{ES}^\a_\t}(X)
    =-\frac{\varphi(z_\a)}{2\a}\E\left[\left.\frac{\s^2_\t}{\sqrt{\int_\t^T\s^2_s\d s}}\right|\t<T\right].
\]The result is negative for any choice of $\a\in(0,1)$.  
\end{example}

In Definition~\ref{DEF:resilience_rate_jumps}, we may limit our study to the case of deterministic times, namely for $\tau(\om)=t$ and $\s(\om)=s$ for all $\om\in\Om$ and some $0\leq s\leq t<  T$.
In this setting, the assumption that $\rho(X)$ is of class~$(D)$ and the conditioning on the event $\{\tau<T\}=\Om$ can be omitted.
Moreover, for any $t\in[0,T)$ there exists $\delta>0$ such that, for all $\eps\in(0,\delta)$, we have $t+\eps<T$. 
Thus, the time evaluation of the difference quotient can be done at $t+\eps$ instead of $(t+\eps)\wedge T$.
Consequently, for deterministic times, the resilience rate and its conditional version reduce to:
\begin{align}
    &\bm\dot \rho_t(X):=\lim_{\varepsilon\to0^+}\frac{1}{\varepsilon}\mathbb{E}\left[\rho_{t+\eps}(X)-\rho_{t}(X)\right],\\
    &\bm\dot\rho_{t|s}(X):=
    L^p(\F)\text{ \!-\!}\lim_{\varepsilon\to0^+}\frac1\eps\E\left[\rho_{t+\eps}(X)-\rho_{t}(X)\big|\F_s\right].
\end{align}
\begin{remark}
\label{REM:rate_derivative}
    For $t,s\in[0,T]$ with $s\leq t$, both the resilience rate $\bm\dot\r_t(X)$ and its conditional version $\bm\dot\r_{t|s}(X)$ are well-defined and finite if and only if the respective maps
    \[
        [0,T)\ni r\mapsto \E[\r_r(X)]\in\R, \qquad [0,T)\ni r\mapsto \E[\r_r(X)|\F_s]\in L^p(\F),
    \]
    are right-differentiable at $t$. 
    In this case, they equal the respective right-derivatives at $t$.
    Analogously, if $\t\in\mathcal T_T$, then the resilience rate $\bm\dot\r_\t(X)$, and its conditional version $\bm\dot\r_{\t|\s}(X)$, for fixed $\s\in\mathcal T_T$ \textit{s.t.} $\s\leq\t$, are well-defined and finite if and only if the respective maps 
    \begin{align}
        L^\infty(\F)\supset\mathcal T_T\ni \nu&\mapsto \E[\r_{\nu\wedge T}(X)|\nu<T]\in\R,\\
        L^\infty(\F)\supset\mathcal T_T\ni \nu&\mapsto \frac{\E[\r_{\nu\wedge T}(X)|\F_\s]}{\P(\nu<T)}\in L^p(\F),
    \end{align}
    are right-Gateaux-differentiable at $\t$ in the direction $\1_\Om$.
    In this case, the resilience rates equal the right-Gateaux-derivatives at $\t$ in the direction $\1_\Om$ of the respective functions.
\end{remark}
\begin{remark}
\label{REM:relation_conditional_resilience_jumps}
    Let us remark that, in the setting of Definition~\ref{DEF:resilience_rate_jumps}, if $\bm\dot\rho_{\t|\t}(X)$ exists in $L^p(\F)$, then $\bm\dot\rho_{\t}(X)$ exists, is finite and $\bm\dot\rho_{\t}(X)=\E[\bm\dot\rho_{\t|\t}(X)]$.
\end{remark}

We also highlight that our measure of resilience $\bm\dot\rho_\tau(X)$, for any fixed choice of the parameters for which it is well-defined, is a deterministic quantity independent of the paths of the risk measure $\rho$ or of the random realizations of $X$ and $\t$.
A random quantity arises only when considering the conditional resilience measure $\bm\dot\rho_{\t|\s}(X)$, where the source of randomness stems from the conditional expectation of the rate at the future time $\t$, given the information available at the present time $\s$.
Consequently, the conditional version can be viewed as a stochastic process either as $\big(\bm\dot\rho_{t|s}(X)\big)_{s\in[0,t]}$ for fixed $t\in(0,T)$, or as $\big(\bm\dot\rho_{t|s}(X)\big)_{t\in[s,T)}$ for fixed $s\in[0,T)$.

Let us note that, in the first part of Definition~\ref{DEF:resilience_rate_jumps}, one could alternatively consider the conditional expectation to the sigma-algebra generated by the event $\{\tau<T\}$, instead of the expectation conditioned to the same event.
This alternative would entail defining the resilience rate as a random variable that coincides with the current definition of $\bm\dot\rho_\tau(X)$ on the event $\{\tau < T\}$, and is identically zero on the complementary event $\{\tau = T\}$.
However, the latter is not relevant for the purposes of our study, as the recovery rate of the risk evaluation of the claim $X$ is only meaningful before maturity.
Therefore, we have opted for the current definition.

Note also that our measure of financial resilience, just like the dynamic risk measure it is derived from, reflects possibly subjective risk preferences, with decision-theoretic foundations, and is not a purely objective statistical property of the underlying stochastic financial process.
Further discussion on this point will be provided in the examples; see Section~\ref{sec:examples}.

To aid the mathematical formalization, we present a preliminary proposition that can be regarded as an extension of Proposition~2.2 in \cite{Jiang_2008_Convexity_translation_invariance_subadditivity_g-expectations_related_risk_measures} which, instead, is limited to the statement \itemref{IT:PROP:integral_average_1} for the case $q>1$ (with strict inequality).

\begin{proposition}
\label{PROP:integral_average}
    The following statements hold true.
    \begin{enumerate}[label=(\roman*)]
    \item \label{IT:PROP:integral_average_1}
         Let $\psi\in L^q_T$ for some $q\geq 1$. 
         For $\ell_1$-a.e.\ $t\in[0,T)$, we have:
        \begin{align} 
        \label{EQ:PROP:integral_average_determ_time}
            \psi_t&=L^q(\F)\text{ \!-\!}\lim_{\varepsilon\to0^+}\frac{1}{\varepsilon}\int_{t}^{t+\varepsilon}\psi_r\d r,\\
            \E[\psi_{t}|\F_s]
            &=L^q(\F)\text{ \!-\!}\lim_{\varepsilon\to0^+}\E\left[\left.\frac{1}{\varepsilon}\int_{t}^{t+\varepsilon}\psi_r\d r\right|\F_s\right], \qquad \forall\, s\in[0,t].
        \end{align}
        The limits hold for all $t\in[0,T)$, if $\psi$ has $\P$-a.s.\ right-continuous trajectories.
    \item \label{IT:PROP:integral_average_2}
        Assume that the $\F\otimes\mathscr B\big([0,T]\big)$-measurable process $\psi:\Om\times[0,T]\to\R$ satisfies the following conditions.
        \begin{itemize}[noitemsep, topsep=0pt, leftmargin=1.3em]
            \item $\P$-a.s., for any $t\in[0,T)$, the right limit $\psi_{t^+}:=\lim_{s\to t^+}\psi_s$ exists.
            \item $\psi$ has $\P$-a.s.\ at most countably many discontinuities.
            \item There exists $q\geq 1$ such that $\E\Big[\sup_{s\in[0,T]}|\psi_s|^{q}\Big]<+\infty$.
        \end{itemize}
        Then, for any $[0,T]$-valued $\bm\F$-stopping times $\tau,\s$ such that $\s\leq\t$, we have:
        \begin{align}
        \label{EQ:PROP:integral_average_stopping_time}
                 \psi_{\tau^+}
                 &=L^q(\F)\text{ \!-\!}\lim_{\varepsilon\to0^+}\frac{1}{\varepsilon}\int_{\tau}^{\tau+\varepsilon}\psi_{r{\wedge T}}\d r,\\
                 {\E[\psi_{\t^+}|\F_\s]}&{=L^q(\F)\text{ \!-\!}\lim_{\varepsilon\to0^+}\mathbb{E}\left[\left.\frac{1}{\eps}\int_{\tau}^{\tau+\eps}\psi_{r\wedge T}\d r\right|\mathcal{F}_{\s}\right],}
        \end{align}
        where $\psi_{\tau^+}(\omega):=\ds\lim_{s\to (\tau(\om))^+}\psi_{s{\wedge T}}(\om)$, $\P\text{-a.e. }\om\in\Om.$
    \end{enumerate}
\end{proposition}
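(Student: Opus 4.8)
The plan is to prove part \itemref{IT:PROP:integral_average_1} via the classical Lebesgue differentiation theorem (in its vector-valued / Bochner form, or simply applied pathwise plus a uniform integrability argument) and then bootstrap part \itemref{IT:PROP:integral_average_2} from it by a ``freezing the stopping time'' device: approximate $\tau$ by countably-valued stopping times and patch the deterministic-time statements together on the corresponding partition of $\Omega$.

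For \itemref{IT:PROP:integral_average_1}, I would first observe that $\psi \in L^q_T$ means $r \mapsto \psi_r$ is (a representative of) an element of $L^q\big([0,T]; L^q(\F)\big)$, since $\E\big[\int_0^T |\psi_r|^q \d r\big]<+\infty$ allows us to view $\psi$ as an $L^q(\F)$-valued $q$-integrable function of $r$ by Fubini. The Lebesgue differentiation theorem for Bochner-integrable functions then gives, for $\ell_1$-a.e.\ $t$, that $\frac1\varepsilon\int_t^{t+\varepsilon}\psi_r\,\d r \to \psi_t$ in $L^q(\F)$; this is exactly the first display. The conditional version follows by applying the (bounded, linear) operator $\E[\,\cdot\,|\F_s]$, which is a contraction on $L^q(\F)$ for $q\ge 1$, to both sides: $L^q$-convergence is preserved under contractions, and $\E\big[\frac1\varepsilon\int_t^{t+\varepsilon}\psi_r\,\d r\,\big|\,\F_s\big]\to \E[\psi_t|\F_s]$. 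For the ``right-continuous trajectories'' refinement, I would argue pathwise: if $r\mapsto\psi_r(\omega)$ is right-continuous at $t$ then $\frac1\varepsilon\int_t^{t+\varepsilon}\psi_r(\omega)\,\d r\to \psi_t(\omega)$ for that $\omega$ and \emph{every} such $t$; combined with a domination bound (e.g.\ using that the running average is controlled in $L^q$) and Vitali/dominated convergence, one upgrades the $\P$-a.s.\ pointwise convergence to $L^q(\F)$-convergence for every $t\in[0,T)$.

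For \itemref{IT:PROP:integral_average_2}, the key point is that the hypotheses (existence of right limits $\psi_{t^+}$, at most countably many discontinuities, and the $L^q$-bound on the supremum) make $\psi$ behave, along the integration window $[\tau,\tau+\varepsilon]$, essentially like a right-continuous process: the countably-many-discontinuities assumption ensures that for $\P$-a.e.\ $\omega$ and $\ell_1$-a.e.\ $t$ the path is continuous at $t$, so $\psi_{t^+}=\psi_t$ $\P\otimes\ell_1$-a.e., and the $L^q$-sup bound gives the uniform integrability needed to pass limits inside expectations. I would then reduce to a countably-valued stopping time: write $\tau^n\downarrow\tau$ with $\tau^n$ taking values in a countable set $\{t^n_k\}\subset[0,T]$, decompose $\Omega=\bigsqcup_k\{\tau^n=t^n_k\}$ with each piece in $\F_{t^n_k}$, apply the deterministic-time result \itemref{IT:PROP:integral_average_1} on each piece (using $\psi_{t^n_k+}$ in place of $\psi_{t^n_k}$, justified by the right-limit hypothesis and the pathwise argument above), and sum up to get $\frac1\varepsilon\int_{\tau^n}^{\tau^n+\varepsilon}\psi_{r\wedge T}\,\d r\to \psi_{(\tau^n)^+}$ in $L^q(\F)$. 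Finally I would let $n\to\infty$: $\psi_{(\tau^n)^+}\to\psi_{\tau^+}$ by the right-limit assumption (here $\tau^n\downarrow\tau$ strictly), the convergence being dominated by $\sup_s|\psi_s|\in L^q$, and a careful interchange of the two limits (in $n$ and in $\varepsilon$) — controlled again by the $L^q$-sup bound — yields the claim. The conditional statement follows, as before, by applying the contraction $\E[\,\cdot\,|\F_\s]$.

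The main obstacle I anticipate is the double-limit interchange in \itemref{IT:PROP:integral_average_2}: one has a limit in $\varepsilon$ (the differentiation) nested with the approximation limit in $n$ (countably-valued $\tau^n\downarrow\tau$), and a naive diagonal argument need not work because the $\varepsilon$-convergence in \itemref{IT:PROP:integral_average_1} is only for $\ell_1$-a.e.\ $t$, not uniformly in $t$. The way around this is to exploit that the running-average operator $\psi\mapsto\big(t\mapsto\frac1\varepsilon\int_t^{t+\varepsilon}\psi_{r\wedge T}\,\d r\big)$ is a uniform contraction on $L^q$ (norms do not blow up as $\varepsilon\to0$), so the errors introduced by replacing $\tau$ with $\tau^n$ are uniformly small in $\varepsilon$; then the $n\to\infty$ and $\varepsilon\to0^+$ limits may be exchanged by a standard $3\varepsilon$-argument. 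A secondary technical nuisance is the truncation $r\wedge T$ near the endpoint $\tau=T$, but this is harmless since on $\{\tau<T\}$ the integration window eventually avoids $T$, and the contribution on $\{\tau=T\}$ is handled by the convention $\psi_{T^+}:=\psi_{T\wedge T}=\psi_T$ together with the fact that $\frac1\varepsilon\int_T^{T+\varepsilon}\psi_{r\wedge T}\,\d r=\psi_T$ identically there.
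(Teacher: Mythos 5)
Your part \itemref{IT:PROP:integral_average_1} is exactly the paper's argument: the Bochner--Lebesgue differentiation theorem for $t\mapsto\psi_t\in L^q(\F)$ gives the first display for $\ell_1$-a.e.\ $t$, and the conditional version follows because $\E[\,\cdot\,|\F_s]$ is an $L^q$-contraction (the paper phrases this via conditional Jensen). No issues there.

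For part \itemref{IT:PROP:integral_average_2}, however, your route through countably-valued approximations $\tau^n\downarrow\tau$ has a genuine gap precisely at the double-limit interchange you flag. The replacement error is
\[
\frac1\eps\int_{\tau}^{\tau+\eps}\psi_{r\wedge T}\d r-\frac1\eps\int_{\tau^n}^{\tau^n+\eps}\psi_{r\wedge T}\d r
=\frac1\eps\int_{\tau}^{\tau^n}\psi_{r\wedge T}\d r-\frac1\eps\int_{\tau+\eps}^{\tau^n+\eps}\psi_{r\wedge T}\d r,
\]
whose natural bound is $\tfrac{2(\tau^n-\tau)}{\eps}\sup_{[0,T]}|\psi|$. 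The contraction property you invoke controls the \emph{size} of the averaging operator $A_\eps$, not the increment $A_\eps\psi(\tau)-A_\eps\psi(\tau^n)$: as a function of the evaluation point, $A_\eps\psi$ is Lipschitz only with constant of order $\sup|\psi|/\eps$, which blows up as $\eps\to0^+$. Concretely, once $\eps\lesssim\tau^n-\tau$ the two averaging windows are essentially disjoint and the error can be of order $\sup|\psi|$ no matter how large $n$ is; so the error is \emph{not} uniformly small in $\eps$ for fixed large $n$, and the $3\varepsilon$-argument does not close. The only way to make the first term of your triangle inequality small is to use the existence of the right limit at $\tau$ pathwise --- i.e., to bound $\big|\frac1\eps\int_\tau^{\tau+\eps}\psi_{r\wedge T}\d r-\psi_{\tau^+}\big|$ by the oscillation of $\psi$ on $(\tau,\tau+\eps]$ around $\psi_{\tau^+}$ --- but once you have that estimate the detour through $\tau^n$ is superfluous. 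This is what the paper does directly: it sandwiches the window average between $\psi_{t_1^n\wedge T}$ and $\psi_{t_2^n\wedge T}$ with $t_i^n\in[\tau,\tau+\eps_n]$, uses the right-limit hypothesis and the squeeze theorem to get $\P$-a.s.\ convergence to $\psi_{\tau^+}$, and concludes in $L^q$ by dominated convergence with dominating variable $2^q\sup_{[0,T]}|\psi|^q$ (whose $\F$-measurability is exactly what the countable-discontinuities hypothesis is used for). Your handling of the truncation at $T$ and of the conditional statement is fine, but the core stopping-time step needs to be replaced by this direct pathwise argument.
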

\begin{proof}
    The first convergence in statement~\itemref{IT:PROP:integral_average_1} directly follows from the Lebesgue differentiation theorem for the Bochner-integrable function $[0,T]\ni t \mapsto \psi_t\in L^q(\F)$ (cf.\ \cite[Theorem~9, Chapter~II]{Diestel+Uhl_1977_Vector_measures}).
    
\textit{Step $1$.}
    We now proceed to prove the first convergence in the second statement. 
    To begin with, we show that $\sup_{[0,T]}|\psi|^q$ is $\F$-measurable, thus giving sense to its expectation.
    For $\P$-a.e.\ $\om\in\Om$, let $D(\om)\subset[0,T]$ denote the countable set of discontinuities of $\psi(\om)$.
    Then $\psi(\om)$ is continuous at any $t\in C(\om):=[0,T]\setminus D(\om)$.
    The supremum of $|\psi(\om)|^q$ over the set $C(\om)$ equals the supremum over the dense countable subset $C(\om)\cap \Q$, thanks to the continuity of $|\psi(\om)|^q$ over $C(\om)$.
    Therefore, we have:
    \[
        \sup_{t\in[0,T]}|\psi_t(\om)|^q=\max\left\{\sup_{t\in D(\om)}|\psi_t(\om)|^q, \sup_{t\in\Q\cap C(\om)}|\psi_t(\om)|^q\right\},
    \]
    which shows that $\sup_{t\in[0,T]}|\psi_t|^q$ is the supremum over a countable family of $\F$-measurable random variables, hence is $\F$-measurable itself.
    
    Let us now fix an infinitesimal sequence of real positive numbers $(\eps_n)_{n\in\N}$.
    We aim to apply Lebesgue's dominated convergence theorem, with respect to integration in probability, to the sequence of random variables
    \begin{equation}
    \label{EQ:PROP:integral_average_sequence}
        \left(\left|\frac{1}{\eps_n}\int_\tau^{\tau+\eps_n}\psi_{r\wedge T}\d r -\psi_{\tau^+}\right|^q\right)_{n\in\N}.
    \end{equation}
 
    Let $\tilde\Om$ denote the set of all $\om\in\Om$ such that, for $t\in[0,T)$, the right limit of $\psi(\om)$ at time $t$ exists in $\R\cup\{\pm\infty\}$, denoted by $\psi_{t^+}(\om)$.
    Fix $\omega\in\tilde \Omega$.
    For any $n\in\N$, there exist $t^n_1(\omega),t^n_2(\omega)\in [\tau(\omega),\tau(\omega)+\eps_n]$ such that 
    \begin{equation}
    \label{EQ:PROP:integral_average_bounds_sequences}
        \psi_{t^n_1(\omega)\wedge T}(\omega)\leq \frac{1}{\eps_n}\int_{\tau(\omega)}^{\tau(\omega)+\eps_n}\psi_{r\wedge T}(\omega)\d r \leq \psi_{t^n_2(\omega)\wedge T}(\omega).
    \end{equation}    
    Let us momentarily fix $n\in\N$ and prove the existence of $t_1^n(\omega)$ that satisfies the left inequality above.
    The right inequality can then be proved similarly, \textit{mutatis mutandis}.
    Denote, for the sake of brevity, $I:= \inf_{t\in[\tau(\omega),\tau(\omega)+\eps_n]}\psi_{t\wedge T}(\omega)$ and $\langle \psi\rangle:= \frac{1}{\eps_n}\int_{\tau(\omega)}^{\tau(\omega)+\eps_n}\psi_{s\wedge T}(\omega)\d s$. 
    By direct inspection, we have $I\leq \langle \psi\rangle$.
    If $I=-\infty$, then for any $M\in\R$, there exists ${t\in [\tau(\omega),\tau(\omega)+\eps_n]}$ such that $\psi_{t\wedge T}(\omega)\leq M$. 
    Choose $M=\langle \psi\rangle$ and conclude.
    Suppose now that $I\in\R$. 
    If $I<\langle\psi\rangle$, then for any $\eps>0$ there exists $t\in [\tau(\omega),\tau(\omega)+\eps_n]$ such that $\psi_{t\wedge T}(\omega)\leq I+\eps$.
    Choose $\eps=\langle \psi\rangle -I$ and conclude.
    If $I=\langle\psi\rangle$, then ${\int_{\tau(\omega)}^{\tau(\omega)+\eps_n}(\psi_{r\wedge T}(\omega)-I)\d r=0}$, and since the integrand is non-negative, we infer that the function $\psi_\cdot(\omega)$ is constant.
    Were this the case, then the claim would be trivially satisfied.
    
    Moreover, for $i=1,2$, the bounds $\tau(\omega)\leq t^n_i(\omega)\leq \tau(\omega)+\eps_n$, $n\in\N$, imply that $t^n_i(\omega)\longrightarrow \big(\t(\omega)\big)^+$ as $n\to\infty$.
    Hence, for the fixed $\om$ above, we have 
    \[
        \lim_{n\to\infty}\psi_{t^n_i(\omega)\wedge T}(\omega)=\lim_{s\to(\t(\om))^+}\psi_{s\wedge T}(\om)=:\psi_{\tau^+}(\omega), \qquad i=1,2.
    \]
    By the squeeze theorem, the middle term in  \eqref{EQ:PROP:integral_average_bounds_sequences} converges to $\psi_{\tau^+}(\omega)$ as well, as $n\to\infty$.
    Namely, we proved that
    \[
        \lim_{n\to\infty}\frac{1}{\eps_n}\int_{\tau}^{\tau+\eps_n}\psi_{r\wedge T}\d r =\psi_{\tau^+}, \qquad \P\text{-a.s.},
    \]
    in particular, the sequence in \eqref{EQ:PROP:integral_average_sequence} is $\P\text{-a.s.}$ infinitesimal.

    Finally, for any $n\in\N$, we have $\P\text{-a.s.}$
    \begin{align}
       \left|\frac{1}{\eps_n}\int_\tau^{\tau+\eps_n}\psi_{r\wedge T}\d r-\psi_{\t^+}\right|^q
       \leq2^q\sup_{r\in[0,T]}|\psi_r|^q,
    \end{align}
    thanks to the fact that $\tau$ takes values in the interval $[0,T]$. 
    By assumption, the random variable on the right-hand side is integrable with respect to $\P$, and allows us to apply Lebesgue's dominated convergence theorem to the sequence in \eqref{EQ:PROP:integral_average_sequence}:
    \[
        \lim_{n\to\infty}\E\left[\left|\frac{1}{\eps_n}\int_\tau^{\tau+\eps_n}\psi_{r\wedge T}\d r -\psi_{\tau^+}\right|^q\right]=0.
    \]
    
    Eventually, the thesis follows from the arbitrariness of the sequence $(\eps_n)_{n\in\N}$.
    
\textit{Step $2$.}
    The second convergence in statement~\itemref{IT:PROP:integral_average_2} follows from the $\F_\s$-measurability of $\E[\psi_{\t^+}|\F_\s]$ and Jensen's inequality for conditional expectations:
     \begin{align}
        \E\left[\left|\E\left[\left.\frac1\eps\int_\t^{\t+\eps}\psi_{r\wedge T}\d r\right|\F_\s\right]-\E[\psi_{\t^+}|\F_\s]\right|^q\right]
        \leq \E\left[\left|\frac1\eps\int_\t^{\t+\eps}\psi_{r\wedge T}\d r-\psi_{\t^+}\right|^q\right],
    \end{align}
    where the second term is infinitesimal, as $\eps\to 0^+$, by the first convergence in statement~\itemref{IT:PROP:integral_average_2}.
    Similarly, one can prove that the second convergence in the first statement follows from the first convergence.
\end{proof}

\begin{example}
\label{EX:entropic_1}
    We consider the entropic risk measure as a standing example throughout this paper.
    It is intimately connected to the Kullback-Leibler divergence (see \cite{Csiszar_1975_I-divergence_geometry_probability_distributions_minimization_problems,Ben-Tal+Teboulle_1986_Expected_utility_penalty_functions_duality_stochastic_nonlinear_programming,Ben-Tal+Teboulle_1987_Penalty_functions_duality_stochastic_programming_phi_divergence_functionals,Frittelli_2000_Introduction_theory_value_coherent_no-arbitrage_principle,Laeven+Stadje_2013_Entropy_coherent_entropy_convex_measures_risk, Follmer+Schied_2016_Stochastic_finance}), and it is used by \cite{Hansen+Sargent_2001_Robust_Control_Model_Uncertainty} to obtain robustness in macroeconomic models.
    We recall (see, e.g., \cite{Barrieu+ElKaroui_2009_Pricing_hedging_optimally_designing_derivatives_minimization_risk_measures}) that the dynamic entropic risk measure, with parameter $\g>0$, is defined as follows:
    \begin{equation}
    \label{EQ:dyn_entr_risk_meas}
        e^\g_t(X):=\frac1\g\ln\E\big[e^{\g X}\big|\F_t\big],\qquad\forall\, t\in[0,T],
    \end{equation}
    for any $\F_T$-measurable random variable $X$ such that $\E[e^{p X}]<\infty$ for all $p\geq \g$.
    We now proceed to compute its resilience rate, according to Definition~\ref{DEF:resilience_rate_jumps}.
    To this purpose, for fixed $\g$ and $X$ as above, we denote by $M$ the càdlàg version (cf.\ \cite[Corollary~5.1.9]{Cohen+Elliott_2015_Stochastic_calculus_applications})  of the square-integrable martingale
    \begin{equation}
    \label{EQ:EX:entropic_1:def_M}
        M_t:=\E\big[e^{\g X}\big|\F_t\big], \qquad\forall\, t\in[0,T].
    \end{equation}
    Then, by the martingale representation theorem (see, e.g., \cite[Theorem~14.5.7]{Cohen+Elliott_2015_Stochastic_calculus_applications} or \cite[Lemma~4.24, Chapter~III]{Jacod+Shiryaev_2003_Limit_theorems_stochastic_processes}), there exist processes $H\in \mathcal H^2_T(\R^m)$, $K\in \mathcal H^2_T(\Lambda^2)$ such that:
    \begin{equation}
    \label{EQ:EX:entropic_1:martingale_repr_th}
        M_t=\E[e^{\g X}]+\int_0^tH_s\cdot \d W_s + \int_{[0,t]\times\R^d_\ast}K_s(x)\d \tilde N(s,x), \qquad \forall\, t\in[0,T].
    \end{equation}
    By direct inspection, we have $\P(M_t>0)=1$ for all $t$, hence we can apply It\^o's generalized formula (see, e.g., \cite[Theorem~14.2.3]{Cohen+Elliott_2015_Stochastic_calculus_applications}) to explicitly compute $\ln M_t$, yielding:
    \begin{equation}
    \label{EQ:entropic}
    \begin{aligned}
        \ln M_t
        =&\, \ln\E[e^{\g X}]+\int_0^t\frac{H_s}{M_{s^-}}\cdot\d W_s +\int_{[0,t]\times\R^d_\ast}\ln\left(1+\frac{K_s(x)}{M_{s^-}}\right)\d \tilde N(s,x) \\
        &+\int_0^t\left(- \frac12\frac{\norm{H_s}^2}{M^2_{s^-}}+\int_{\R^d_\ast}\left[\ln\left(1+\frac{K_s(x)}{M_{s^-}}\right)-\frac{K_s(x)}{M_{s^-}}\right]\d \nu(x)\right)\d s,
    \end{aligned}
    \end{equation}
    where all the integrals are $\P$-a.s.\ well-defined and absolutely convergent (cf.\ \cite[Lemma~14.2.2]{Cohen+Elliott_2015_Stochastic_calculus_applications}).
    If we further assume that $X$ is bounded, we are able to compute the resilience rate of the dynamic entropic risk measure, at $\ell_1$-a.e. $t\in[0,T)$, by means of Proposition~\ref{PROP:integral_average}\itemref{IT:PROP:integral_average_1}:
    \begin{align}
        \bm\dot e^\g_t(X)
        :=&\lim_{\eps\to 0^+}\frac{1}{\g\eps}\E\left[\ln M_{t+\eps}-\ln M_t\right]\\
        =&\lim_{\eps\to 0^+}\frac{1}{\g\eps}\E\left[\int_t^{t+\eps}\left(-\frac12\frac{\norm{H_s}^2}{M^2_{s^-}}+\int_{\R^d_\ast}\left[\ln\left(1+\frac{K_s(x)}{M_{s^-}}\right)-\frac{K_s(x)}{M_{s^-}}\right]\d \nu(x)\right)\d s
        \right]\\
        =&\frac1\g\E\left[-\frac12\frac{\norm{H_t}^2}{M^2_{t^-}}+\int_{\R^d_\ast}\left[\ln\left(1+\frac{K_t(x)}{M_{t^-}}\right)-\frac{K_t(x)}{M_{t^-}}\right]\d \nu(x)\right].
    \label{EQ:rate_entropic_1}
    \end{align}
    In the second equality above, the stochastic integrals vanished because they have zero expectation.
    We postpone to Appendix~\ref{APP:example} the rigorous justification for applying the cited proposition.
    The computation of the resilience rate at stopping times, by means of Proposition~\ref{PROP:integral_average}\itemref{IT:PROP:integral_average_2}, requires further technical conditions that will be discussed later in Section~\ref{subsec:VT}.
    However, if we further assume that $H,K$ are càglàd (continuous on the left with finite right limits), then the hypotheses of Proposition~\ref{PROP:integral_average}\itemref{IT:PROP:integral_average_2} are met, and, with a similar argument as before, we infer that, for any $\tau\in\mathcal T_T$,
    \[
        \bm\dot e^\g_\t(X)=\frac1\g\E\left[\left.-\frac12\frac{\norm{H_{\t^+}}^2}{M^2_\t}+\int_{\R^d_\ast}\left[\ln\left(1+\frac{K_{\t^+}(x)}{M_\t}\right)-\frac{K_{\t^+}(x)}{M_\t}\right]\d \nu (x)\right|\t<T\right],
    \]
    were we implicitly used the trivial fact that $\lim_{t\to \t^+}M_{t^-}=M_\t$.
\end{example}

\subsection{Bouncing drift}
\label{SEC:bouncing_drift}
Let us now assume that the dynamic risk measure is induced by a BSDE. 
Then, Proposition~\ref{PROP:integral_average} can be used to formally justify the definition of our measure of financial resilience, as we establish in the following theorem. 
\begin{theorem}
\label{TH:resilience_jumps}
    Let $\big(\rho(X),Z,U\big)$ be a solution to the BSDE \eqref{EQ:BSDE} with parameters $(g,T,X)$. 
    \begin{enumerate}[label=(\roman*)]
        \item \label{IT:th:resilience_jumps:deterministic}
            If there exists $q\geq1$ such that
            \begin{equation}
            \label{EQ:th:resilience_ass}
                \E\left[\int_0^T\big|g(\rho_t(X),Z_t,U_t)\big|^q\d t\right]<+\infty,  
            \end{equation}
            then, for $\ell_1$-a.e.\ $t\in[0,T)$, and all $s\in[0,t]$, both $\bm\dot\rho_{t|s}(X)$ and $\bm\dot\rho_t(X)$ exist, are finite, and satisfy:
            \begin{align}
                \bm\dot\rho_{t|s}(X)&=-\E\big[g\big(t,\rho_t(X),Z_t,U_t\big)\big|\F_s\big],\qquad\P\text{-a.s.}, \\
            \label{eq:bd}
                \bm\dot\rho_t(X)&=-\mathbb{E}\left[g\big(t,\rho_t(X),Z_t,U_t\big)\right]. 
            \end{align}
        \item \label{IT:th:resilience_jumps:stopping_times}
            Assume the following:
            \begin{itemize}[noitemsep, topsep=0pt, leftmargin=1.3em]
                \item $\P$-a.s., for any $t\in[0,T)$, the right limit 
                $\lim_{s\to t^+}g\big(s,\rho_s(X),Z_s,U_s\big)$ exists. 
                \item $g\big(\,\cdot\,,\rho(X),Z,U\big)$ has $\P$-a.s. at most countably many discontinuities.
                \item There exists $q\geq 1$ such that $\E\Big[\sup_{t\in[0,T]}\big|g(t,\rho_t(X),Z_t,U_t)\big|^q\Big]<+\infty$.
            \end{itemize}
            Then, for any $\tau,\s\in\mathcal{T}_T$ such that $\s\leq \t$, we have:
            \begin{align}
                \bm\dot\rho_{\tau|\s}(X)&=-\frac{\E\left[\ds\1_{\{\t<T\}}\left.\lim_{s\to \t^+}g\big(s,\rho_{s}(X),Z_{s},U_{s}\big)\right|\F_\s\right]}{\P(\t<T)},\qquad\P\text{-a.s.},\\
                \bm\dot\rho_{\tau}(X)&=-\E\left[\lim_{s\to \t^+}g\big(s,\rho_{s}(X),Z_{s},U_{s}\big){\Big|\tau<T}\right].
            \end{align}
    \end{enumerate}
\end{theorem}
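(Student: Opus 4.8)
The strategy is to recast $\r(X)$ as a semimartingale by means of the BSDE \eqref{EQ:BSDE} and then to invoke Proposition~\ref{PROP:integral_average}. Write $\psi_t:=g\big(t,\r_t(X),Z_t,U_t\big)$ and $M_t:=\int_0^t Z_s\cdot\d W_s+\int_{[0,t]\times\R^d_\ast}U_s(x)\d\tilde N(s,x)$. Rearranging \eqref{EQ:BSDE}, which --- by the càdlàg regularity of every process involved --- holds simultaneously for all $t\in[0,T]$, $\P$-a.s., gives $\r_t(X)=\r_0(X)-\int_0^t\psi_s\d s+M_t$; here $\int_0^{\,\cdot\,}\psi_s\d s$ is continuous of finite variation and $M$ is a square-integrable $\bm\F$-martingale, hence --- by Doob's maximal inequality --- lies in $\mathcal S^2_T$ and is in particular of class~$(D)$. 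Therefore, for any $[0,T]$-valued $\bm\F$-stopping times $\theta_1\le\theta_2$ and any $\bm\F$-stopping time $\s\le\theta_1$, the optional sampling theorem gives $\E\big[M_{\theta_2}-M_{\theta_1}\,\big|\,\F_\s\big]=0$, so that
\[
    \E\big[\r_{\theta_2}(X)-\r_{\theta_1}(X)\,\big|\,\F_\s\big]=-\,\E\Big[\int_{\theta_1}^{\theta_2}\psi_s\d s\,\Big|\,\F_\s\Big].
\]

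For part~\itemref{IT:th:resilience_jumps:deterministic}, I would take $\theta_1=t$ deterministic in $[0,T)$, $\theta_2=t+\eps$ (admissible for $\eps$ small, since then $t+\eps<T$), and $\s=s\in[0,t]$, divide by $\eps$, and send $\eps\to0^+$. Hypothesis~\eqref{EQ:th:resilience_ass} says exactly that $\psi\in L^q_T$, so Proposition~\ref{PROP:integral_average}\itemref{IT:PROP:integral_average_1} yields, for $\ell_1$-a.e.\ $t\in[0,T)$ and every $s\in[0,t]$, the convergence $\tfrac1\eps\E\big[\int_t^{t+\eps}\psi_r\d r\,\big|\,\F_s\big]\to\E[\psi_t\,|\,\F_s]$ in $L^q(\F)$, which is the asserted expression for $\bm\dot\rho_{t|s}(X)$; finiteness follows from $\psi_t\in L^q(\F_t)$. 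The unconditional formula is then obtained by taking $s=0$ (as $\F_0$ is $\P$-trivial) or from Remark~\ref{REM:relation_conditional_resilience_jumps}.

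For part~\itemref{IT:th:resilience_jumps:stopping_times}, I would apply the displayed identity with $\theta_1=\t$, $\theta_2=(\t+\eps)\wedge T$ and $\s\le\t$, giving $\tfrac1\eps\E\big[\r_{(\t+\eps)\wedge T}(X)-\r_\t(X)\,\big|\,\F_\s\big]=-\E\big[\tfrac1\eps\int_\t^{(\t+\eps)\wedge T}\psi_s\d s\,\big|\,\F_\s\big]$. The single delicate point is that this integral stops at $(\t+\eps)\wedge T$, whereas Proposition~\ref{PROP:integral_average}\itemref{IT:PROP:integral_average_2} controls $\tfrac1\eps\int_\t^{\t+\eps}\psi_{r\wedge T}\d r$; the two differ by $\tfrac1\eps\int_{(\t+\eps)\wedge T}^{\t+\eps}\psi_{r\wedge T}\d r=\tfrac{(\t+\eps-T)^+}{\eps}\,\psi_T$. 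Because $\t\le T$ forces $0\le\tfrac{(\t+\eps-T)^+}{\eps}\le1$ with $\tfrac{(\t+\eps-T)^+}{\eps}\to\1_{\{\t=T\}}$ $\P$-a.s., and because $|\psi_T|\le\sup_{s\in[0,T]}|\psi_s|\in L^q(\F)$ by assumption, dominated convergence gives $\tfrac{(\t+\eps-T)^+}{\eps}\psi_T\to\1_{\{\t=T\}}\psi_T$ in $L^q(\F)$. Together with the convergence $\tfrac1\eps\int_\t^{\t+\eps}\psi_{r\wedge T}\d r\to\psi_{\t^+}$ in $L^q(\F)$ supplied by Proposition~\ref{PROP:integral_average}\itemref{IT:PROP:integral_average_2} --- whose three hypotheses are precisely the three bullets assumed in part~\itemref{IT:th:resilience_jumps:stopping_times} --- and with $\psi_{\t^+}=\psi_T$ on $\{\t=T\}$, this produces $\tfrac1\eps\int_\t^{(\t+\eps)\wedge T}\psi_s\d s\to\1_{\{\t<T\}}\psi_{\t^+}$ in $L^q(\F)$, where $\psi_{\t^+}=\lim_{s\to\t^+}g\big(s,\r_s(X),Z_s,U_s\big)$ on $\{\t<T\}$. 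Taking conditional expectations (a contraction on $L^q$) and dividing by $\P(\t<T)$ gives the stated formula for $\bm\dot\rho_{\t|\s}(X)$; the unconditional one follows with $\s=0$ upon noting $\E[\1_{\{\t<T\}}\psi_{\t^+}]/\P(\t<T)=\E[\psi_{\t^+}\,|\,\t<T]$, or via Remark~\ref{REM:relation_conditional_resilience_jumps}.

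The step I expect to be the main obstacle is exactly this $\wedge T$ bookkeeping: one must recognize that the truncation built into the increment $\r_{(\t+\eps)\wedge T}(X)-\r_\t(X)$ and the ``freeze-at-$T$'' convention of Proposition~\ref{PROP:integral_average}\itemref{IT:PROP:integral_average_2} disagree precisely by $\tfrac{(\t+\eps-T)^+}{\eps}\psi_T$, whose limit $\1_{\{\t=T\}}\psi_T$ is what --- after cancellation with $\psi_{\t^+}$ --- produces the indicator $\1_{\{\t<T\}}$ in the representation. A secondary, more routine, concern is justifying the substitution of stopping times into \eqref{EQ:BSDE} together with the optional sampling step (hence the preliminary check that $M\in\mathcal S^2_T$), and confirming that conditioning commutes with the $L^q$-limits involved.
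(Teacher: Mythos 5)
Your proposal is correct and follows essentially the same route as the paper's proof: rearrange the BSDE into a semimartingale identity, kill the martingale part by optional sampling, and invoke Proposition~\ref{PROP:integral_average} (part \itemref{IT:PROP:integral_average_1} for deterministic times, part \itemref{IT:PROP:integral_average_2} for stopping times). The only difference is cosmetic bookkeeping of the $\wedge T$ truncation in part \itemref{IT:th:resilience_jumps:stopping_times} --- you isolate the correction $\tfrac{(\t+\eps-T)^+}{\eps}\psi_T\to\1_{\{\t=T\}}\psi_T$ and cancel it against $\psi_{\t^+}$ on $\{\t=T\}$, whereas the paper first inserts $\1_{[0,T)}(\t)$ and shows the residual boundary term vanishes --- but the two computations are equivalent.
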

\begin{proof}[Proof of \itemref{IT:th:resilience_jumps:deterministic}]
    We prove the first part of item~\itemref{IT:th:resilience_jumps:deterministic} by means of statement~\itemref{IT:PROP:integral_average_1} of Proposition~\ref{PROP:integral_average}. 
    For any $t\in[0,T)$ and $\eps>0$, we have, by direct inspection from equation~\eqref{EQ:BSDE}:
    \[
        \rho_{t+\eps}-\rho_t=-\int_t^{t+\varepsilon}g(r,\rho_r,Z_r,U_r)\d r+\int_t^{t+\varepsilon}Z_r\cdot \d W_r+\int_{(t,t+\eps]\times \R^d_\ast }U_r{(x)}\d\tilde N(r,x),\quad\P\text{-a.s.},
    \]
    where we suppress the dependence of $\rho$ on $X$ for simplicity.
    For $s\in[0,t]$, we take the conditional expectation with respect to $\F_s$, and divide by $\eps$:
    \begin{equation}
    \label{EQ:proof:th:resilience_jumps}
        \frac1\eps\E\left[\left.\rho_{t+\eps}-\rho_t\right|\mathcal{F}_s\right]
        =-\frac{1}{\eps}\mathbb{E}\left[\left.\int_t^{t+\eps}g(r,\rho_r,Z_r,U_r)\d r\right|\mathcal{F}_s\right],
    \end{equation}
    where both the It\^o integral and the integral with respect to the Poisson random measure have disappeared due to the martingale property.
    Recall indeed that $Z\in\mathcal H^2_T(\R^m)$ and $U\in\mathcal H^2_T(\Lambda^2)$ from Definition~\ref{DEF:solution_BSDE_jumps}, hence both their stochastic integral processes are martingales, see \cite[Theorem~1.33, Chapter~II]{Jacod+Shiryaev_2003_Limit_theorems_stochastic_processes}.
    
    Since $(Z,U)$ is $\mathcal P$-measurable and $\rho$ is càdlàg and $\bm{\mathcal F}$-adapted, $g(\rho,Z,U)$ is $\F\otimes\mathscr B\big([0,T]\big)$-measurable.
    This, together with condition \eqref{EQ:th:resilience_ass}, yields $g(\,\cdot\,,\rho,Z,U)\in L^q_T$.
    Thus, by the first thesis of Proposition~\ref{PROP:integral_average}, we get, for $\ell_1\text{-a.e. } t\in[0,T)$ and any $s\in[0,t]$,
    \[
        \frac1\eps\E\left[\left.\int_t^{t+\eps}g(r,\rho_r,Z_r,U_r)\d r\right|\F_s\right]\xrightarrow[]{\eps\to 0^+}\E[g(t,\rho_t,Z_t,U_t)|\F_s], \quad \textit{in }L^q(\F).
    \]
    Eventually, the first part of \itemref{IT:th:resilience_jumps:deterministic} follows from Equation~\eqref{EQ:proof:th:resilience_jumps} and from Definition~\ref{DEF:resilience_rate_jumps} of resilience rate, where $\tau=t$ identically. 

    The statement $\bm\dot\rho_t=-\E\left[g(t,\rho_t,Z_t,U_t)\right]$ follows from Remark~\ref{REM:relation_conditional_resilience_jumps}.
\end{proof}
\begin{proof}[Proof of \itemref{IT:th:resilience_jumps:stopping_times}]
    Let us now fix $\tau\in\mathcal T_T$ and denote by $\psi$ the process $g(\,\cdot\,,\rho(X),Z,U)$.
    We have
    \begin{align}
        \frac1\eps\big[\rho_{(\tau+\eps)\wedge T}-\rho_\t\big]
        &= -\frac1\eps\int_\t^{(\t+\eps)\wedge T}\!\!\!\!
\!\!\!\!\!\!\psi_s\d s + \frac 1\eps\int_\t^{(\tau+\eps)\wedge T}\!\!\!\!\!\!\!\!Z_s\cdot\d W_s + \frac 1\eps\int_{(\tau,(\t+\eps)\wedge T]\times\R^d_\ast}\!\!\!\!\!\!\!\!U_s(x)\d \tilde N(s,x),
    \end{align}
    where we again implied the dependence of $\rho$ in $X$.
    If we take the expectation on both sides of the above expression, we get, by Doob's optional sampling theorem (see, for instance \cite[Theorems~16,~18, Chapter~I]{Protter_2005_Stochastic_integration_differential_equations}): 
    \begin{align}
        &\ \frac1\eps\E\big[\rho_{(\tau+\eps)\wedge T}-\rho_\t\big]
        =-\E\left[\frac1\eps\int_\t^{(\t+\eps)\wedge T}\psi_s\d s\right]
        =-\E\left[\frac1\eps\int_\t^{(\t+\eps)\wedge T}\psi_{s\wedge T}\d s\right].
    \end{align}
    Here, after employing the martingale property in the first equality, we have noticed that $\psi_s=\psi_{s\wedge T}$ for all $s\in[\tau,(\tau+\eps)\wedge T]$.
    We now proceed by splitting the interval of integration depending on the values of the stopping time $\tau$:
    \begin{align}
        \ &-\E\left[\1_{[0,T-\eps]}(\t)\frac1\eps\int_\t^{\t+\eps}\psi_{s\wedge T}\d s + \1_{(T-\eps,T)}(\t)\frac1\eps\int_\t^{T}\psi_{s\wedge T}\d s\right]\\
        =\ &-\E\left[\Big(\1_{[0,T-\eps]}(\t)+\1_{(T-\eps,T)}(\t)\Big)\frac1\eps\int_\t^{\t+\eps}\psi_{s\wedge T}\d s - \1_{(T-\eps,T)}(\t)\frac1\eps\int_T^{\t+\eps}\psi_T\d s\right]\\
        =\ &-\E\left[\1_{[0,T)}(\t)\frac1\eps\int_\t^{\t+\eps}\psi_{s\wedge T}\d s\right] + \E\left[\1_{(T-\eps,T)}(\t)\frac{\t+\eps-T}{\eps}\psi_T\right].
    \end{align}
    For the first equality we used the additivity of the second integral and noticed that, if $\tau\in(T-\eps,T)$, then $\psi_{s\wedge T}=\psi_T$ for $s\in[T,\tau+\eps]$.
    In the last passage, we employed the linearity of expectation, reassembled the first two indicator functions and computed the second integral.
    
    In view of the hypotheses on the process $\psi=g(\,\cdot\,,\r,Z,U)$, we can apply Proposition~\ref{PROP:integral_average}\itemref{IT:PROP:integral_average_2} to the integral in the last line of the above chain of equalities, and conclude that the first expectation converges to $-\E\left[\1_{[0,T)}(\t)\lim_{s\to \t^+}\psi_{s}\right]$.
    
    As far as the second expectation is concerned, we observe that $\tau\in(T-\eps,T)$ implies $\tau+\eps-T< \eps$, hence 
    \[
        \left|\1_{(T-\eps,T)}(\t)\frac{\t+\eps-T}{\eps}\psi_T\right|
        \leq \1_{(T-\eps,T)}(\t)|\psi_T|\xrightarrow{\eps\to 0^+}0, \qquad \P\text{-a.s.}
    \]
    The boundedness of the indicator function and the hypotheses on $\psi=g(\,\cdot\,,\rho,Z,U)$ allow us to apply the dominated convergence theorem to infer that the second expectation converges to $0$, as $\eps\to 0^+$.
    Therefore, we conclude that
    \[
        \frac1\eps\E[\rho_{(\tau+\eps)\wedge T}-\rho_\tau]\xrightarrow[]{\eps\to 0^+} -\E\left[\1_{[0,T)}(\t)\lim_{s\to \t^+}\psi_{s}\right],
    \]
    and the thesis follows after dividing by $\P(\tau<T)$.

    The statement concerning the conditional resilience rate is proved analogously.
\end{proof}

We highlight that the hypothesis on the number of discontinuities of the process $\psi$ in Proposition~\ref{PROP:integral_average}\itemref{IT:PROP:integral_average_2} and of $g\big(\,\cdot\,,\rho(X),Z,U\big)$ in Theorem~\ref{TH:resilience_jumps}\itemref{IT:th:resilience_jumps:stopping_times} is sufficient to assure that the supremum in time is an $\F$-measurable random variable.

\begin{example}[continued from Example~\ref{EX:entropic_1}]
\label{EX:ecntropy_cont_2}
    Let us now recall that the dynamic entropic risk measure, in our filtration $\bm\F$, is actually induced by the following BSDE:
    \begin{equation}
    \label{EQ:entropy_BSDE}
        e^\g_t(X)=X+\int_t^T g(Z_s,U_s)\d s - \int_t^T Z_s \cdot\d W_s-\int_{(t,T]\times\R^d_\ast}U_s(x)\d\tilde N(s,x),
    \end{equation}
    $\forall\,t\in[0,T]$, with driver
    \begin{equation}
    \label{EQ:EX:entropic:driver}
        g:\R^m\times\Lambda^2\ni(z,u)\mapsto\frac\g2 \norm z^2 + \frac1\g\int_{\R^d_\ast}\left(e^{\g u}-\g u-1\right)\d \nu;
    \end{equation}
    see \cite[Theorem~4.1]{Becherer_2006_Bounded_solutions_backward_SDEs_jumps_utility_optimization_indifference_hedging}, \cite[Proposition~3.2]{ElKaroui+Matoussi+Ngoupeyou_2016_Quadratic_Exponential_Semimartingales_Application_BSDEs_jumps} or \cite[Example~4.1]{Mabitsela+Guambe+Kufakunesu_2022_Note_representation_BSDE-based_dynamic_risk_measures_dynamic_capital_allocations}.
    It satisfies condition~\ref{IT:Q_condition}, thus admitting a unique solution for any bounded terminal condition.
    Choose $X\in L^\infty(\F_T)$ and denote by $(e^\g(X),Z,U)$ the unique solution to the BSDE with parameters $(g,T,X)$, where $e^\g(X)$ has the explicit expression in equation~\eqref{EQ:dyn_entr_risk_meas}.
    We are going to show that the process $g(Z,U)$ is in $L^1_T$, which allows us to apply Theorem~\ref{TH:resilience_jumps}\itemref{IT:th:resilience_jumps:deterministic} and conclude that, for $\ell_1$-a.e. $t\in[0,T)$:
    \begin{equation}
    \label{EQ:rate_entropic_2}
        \bm\dot e^\g_t(X)=-\E\left[g(Z_t,U_t)\right]=-\E\left[\frac\g2\norm{Z_t}^2+\frac1\g\int_{\R^d_\ast}\left(e^{\g U_t(x)}-\g U_t(x)-1\right)\d \nu (x)\right].
    \end{equation}

    One easily verifies that $\norm{Z}^2\in L^1_T$ because $Z\in BMO(\R^m)\subset \mathcal H^2_T(\R^m)$, thus, we will only show that $\int_{\R^d_\ast}h(\g U)\d \nu\in L^1_T$, where $h(u):=e^u-u-1$ for $u\in\R$.
    To begin with, it is easily proved that $|h(u)|\leq e^{|u|}u^2/2$ for all $u\in\R$.
    Therefore, we obtain
    \begin{align}
        \E\left[\int_0^T\left |\int_{\R^d_\ast}h\big(\g U_t(x)\big)\d \nu(x)\right|\d t \right]
        \leq \frac{\g^2}{2}e^{\g C}\E\left[\int_0^T\int_{\R^d_\ast}U^2_t(x)\d\nu(x)\d t\right],
    \end{align}
    where $C>0$ is the constant that $\P\otimes\ell_1\otimes\nu$-essentially bounds $U$.
    The last expectation is finite because $U\in BMO(\Lambda^2)\subset \mathcal H^2_T(\Lambda^2)$.

    Let us now observe that formulas~\eqref{EQ:rate_entropic_1} and \eqref{EQ:rate_entropic_2} indeed coincide, providing two alternative representations of the same quantity: the first in terms of the coefficients $H$ and $K$ from the martingale representation theorem applied to the process $M=\exp(\g e^\g)$, and the second in terms of the solutions $Z$ and $U$ of the BSDE that induces $e^\g$.
    To show this, it is sufficient to compare the BSDE~\eqref{EQ:entropy_BSDE} with the explicit forward definition of $e^\g=(1/\g)\ln M$ from equation~\eqref{EQ:entropic}. 
    It is then clear that $H/M_-=\g Z$ and $\ln(1+K/M_-)=\g U$.
    The application of Theorem~\ref{TH:resilience_jumps}\itemref{IT:th:resilience_jumps:stopping_times} to the dynamic entropic risk measure requires further structure that will be considered later in Section~\ref{subsec:VT}.

    We emphasize that equation~\eqref{EQ:rate_entropic_2} reduces to a much simpler version when the filtration is generated only by the Brownian motion. 
    Indeed, in this case the dependence of the driver on the $u$-component vanishes and the resilience rate is given by
    \[
        \bm\dot e^\g_t(X)=-\frac\g2\E\left[\norm{Z_t}^2\right], \qquad \forall\, X\in L^\infty\big(\F^W_T\big), \ \ell_1\text{-a.e. }t\in[0,T).
    \]
\end{example}

In the case of a Brownian setting, the previous Theorem~\ref{TH:resilience_jumps} has the following stronger version.
\begin{corollary}
\label{COR:resil_brownian}
    Let $(\rho,Z)$ be a solution to the Brownian BSDE \eqref{EQ:brownian_BSDE} with parameters $(g,T,X)$. 
    \begin{enumerate}[label=(\roman*)]
        \item 
        \label{IT:cor:resil_brown:all_t}
            If there exists $q\geq1$ such that:
                \begin{equation}
                    \E\left[\int_0^T\big|g(t, \rho_t,Z_t)\big|^q\d t\right]<+\infty,   
                \end{equation}
            then for $\ell_1$-a.e. $t\in[0,T)$, $\bm\dot\rho_t(X)$ is well-defined, finite, and satisfies:
            \begin{equation}
            \label{EQ:brown_bounc_drift}
                \bm\dot\rho_{t}(X)=-\E[g(t,\rho_t,Z_t)].
            \end{equation}
            In addition, if the process $g(\,\cdot\,,\rho,Z):\Om\times[0,T]\ni(\om,t)\mapsto g\big(\om,t,\rho_t(\om),Z_t(\om)\big)$ has $\P$-a.s.\ continuous trajectories, then equation~\eqref{EQ:brown_bounc_drift} holds for all $t\in[0,T)$.
        \item 
        \label{IT:cor:resil_brown:tau}
            Assume the following:
            \begin{itemize}[noitemsep, topsep=0pt, leftmargin=1.3em]
                \item $g\big(\,\cdot\,,\rho,Z\big)$ has $\P$-a.s. continuous trajectories.
                \item There exists $q\geq 1$ such that $\E\Big[\sup_{t\in[0,T]}\big|g(t,\rho_t,Z_t)\big|^q\Big]<+\infty$.
            \end{itemize}
            Then, for any $\tau,\s\in\mathcal T_T$ such that $\s\leq \t$, we have:
            \begin{align}
                \bm\dot\rho_{\t|\s}(X)&=-\frac{\E\big[\1_{\{\tau<T\}}g(\t,\rho_\t,Z_\t)\big|\F_\s\big]}{\P(\tau<T)}, \qquad \P\text{-a.s.,}\\
                \label{EQ:bounc_det}
                \bm\dot\rho_{\t}(X)&=-\E[g(\t,\rho_\t,Z_\t)|\tau<T]. 
            \end{align}
    \end{enumerate}
\end{corollary}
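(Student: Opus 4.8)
The plan is to read Corollary~\ref{COR:resil_brownian} off the proof of Theorem~\ref{TH:resilience_jumps}: the Brownian setting is the instance of the Brownian--Poissonian one in which the jump component is absent (formally $\nu\equiv0$, so $\Lambda^2=\{0\}$, $\tilde N\equiv0$, $\bm\F=\bm\F^W$, and $U\equiv0$), and every step of that proof --- the cancellation of the stochastic integrals by the martingale property, Doob's optional sampling, the appeal to Proposition~\ref{PROP:integral_average} --- carries over verbatim once the $\tilde N$-terms are deleted. Since a solution of the Brownian BSDE \eqref{EQ:brownian_BSDE} satisfies $\rho\in\mathcal S^2_T$ by definition, the process $\rho(X)$ is automatically of class~$(D)$, so the quantities in Definition~\ref{DEF:resilience_rate_jumps} are meaningful without any further hypothesis. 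The Brownian structure buys exactly two improvements over Theorem~\ref{TH:resilience_jumps}: in part~\itemref{IT:cor:resil_brown:all_t}, the passage from ``$\ell_1$-a.e.\ $t$'' to ``all $t$'' under path-continuity of $g(\,\cdot\,,\rho,Z)$; and in part~\itemref{IT:cor:resil_brown:tau}, the replacement of the right-limit $\lim_{s\to\t^+}g(s,\rho_s,Z_s)$ by the value $g(\t,\rho_\t,Z_\t)$.

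For part~\itemref{IT:cor:resil_brown:all_t}, the integrability assumption is precisely \eqref{EQ:th:resilience_ass} with $U\equiv0$, so Theorem~\ref{TH:resilience_jumps}\itemref{IT:th:resilience_jumps:deterministic} already yields \eqref{EQ:brown_bounc_drift} for $\ell_1$-a.e.\ $t\in[0,T)$. To obtain it for every $t$, fix $t\in[0,T)$ and $\eps\in(0,T-t)$; from \eqref{EQ:brownian_BSDE}, $\rho_{t+\eps}-\rho_t=-\int_t^{t+\eps}g(r,\rho_r,Z_r)\d r+\int_t^{t+\eps}Z_r\cdot\d W_r$, and taking expectations kills the It\^o integral, which is a true martingale as $Z\in\mathcal H^2_T(\R^m)$, so $\eps^{-1}\E[\rho_{t+\eps}-\rho_t]=-\E\big[\eps^{-1}\int_t^{t+\eps}g(r,\rho_r,Z_r)\d r\big]$. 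Since $g(\,\cdot\,,\rho,Z)\in L^q_T$ and has $\P$-a.s.\ continuous, hence right-continuous, trajectories, the last assertion of Proposition~\ref{PROP:integral_average}\itemref{IT:PROP:integral_average_1} makes the right-hand side converge (in $L^q$, hence in expectation) to $-\E[g(t,\rho_t,Z_t)]$ as $\eps\to0^+$ for \emph{every} $t\in[0,T)$; by the deterministic-time reduction of Definition~\ref{DEF:resilience_rate_jumps}, this is \eqref{EQ:brown_bounc_drift} at all such $t$.

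For part~\itemref{IT:cor:resil_brown:tau}, write $\psi:=g(\,\cdot\,,\rho,Z)$. Path-continuity of $\psi$ trivially verifies the first two bullet hypotheses of Theorem~\ref{TH:resilience_jumps}\itemref{IT:th:resilience_jumps:stopping_times} (right limits exist everywhere; there are no discontinuities), and the third is the assumed uniform $L^q$-bound, so that theorem applies and gives, for $\t,\s\in\mathcal T_T$ with $\s\leq\t$,
\[
    \bm\dot\rho_{\t|\s}(X)=-\frac{\E\left[\left.\1_{\{\t<T\}}\lim_{s\to\t^+}\psi_s\,\right|\F_\s\right]}{\P(\t<T)},\qquad
    \bm\dot\rho_{\t}(X)=-\E\left[\left.\lim_{s\to\t^+}\psi_s\,\right|\t<T\right].
\]
Because $\t$ takes values in $[0,T]$ and $\psi$ has $\P$-a.s.\ continuous paths, the right limit of $\psi$ at $\t$ coincides with its value: $\lim_{s\to\t^+}\psi_{s\wedge T}=\psi_{\t\wedge T}=\psi_\t$, $\P$-a.s. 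In particular $\1_{\{\t<T\}}\lim_{s\to\t^+}g(s,\rho_s,Z_s)=\1_{\{\t<T\}}g(\t,\rho_\t,Z_\t)$, and on $\{\t<T\}$ one has $\lim_{s\to\t^+}g(s,\rho_s,Z_s)=g(\t,\rho_\t,Z_\t)$. Substituting these identities into the two displays produces exactly \eqref{EQ:bounc_det} and its conditional counterpart.

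I expect no genuine obstacle here: the analytic substance --- Lebesgue-type differentiation at stopping times, $\F$-measurability of the time-supremum, the dominated-convergence argument controlling the boundary behaviour near $T$ --- has already been settled in Proposition~\ref{PROP:integral_average} and Theorem~\ref{TH:resilience_jumps}. The only point requiring a line of care, and precisely the one that makes this version sharper, is the identity $\lim_{s\to\t^+}g(s,\rho_s,Z_s)=g(\t,\rho_\t,Z_\t)$: it is where path-continuity of $g(\,\cdot\,,\rho,Z)$ is genuinely used, and it is why the clean formula in terms of $g(\t,\rho_\t,Z_\t)$ needs continuity rather than merely the ``countably many discontinuities / existing right limits'' of Theorem~\ref{TH:resilience_jumps}\itemref{IT:th:resilience_jumps:stopping_times} --- for a merely right-limited $\psi$ one would remain stuck with $\psi_{\t^+}$, which can differ from $\psi_\t$ at the random jump times.
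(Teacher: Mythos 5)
Your proposal is correct and follows essentially the same route as the paper, which (in Remark~\ref{REM:COR_brown_res_rate}) derives the corollary as a direct specialization of Theorem~\ref{TH:resilience_jumps}: the ``all $t$'' strengthening in part~\itemref{IT:cor:resil_brown:all_t} comes from the right-continuity clause of Proposition~\ref{PROP:integral_average}\itemref{IT:PROP:integral_average_1}, and path-continuity both verifies the hypotheses of Theorem~\ref{TH:resilience_jumps}\itemref{IT:th:resilience_jumps:stopping_times} and lets one replace the right limit $\lim_{s\to\t^+}g(s,\rho_s,Z_s)$ by $g(\t,\rho_\t,Z_\t)$. Your closing observation about where continuity is genuinely needed is exactly the point the paper makes.
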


In the sequel, we will refer to the right-hand side of \eqref{eq:bd}, \eqref{EQ:brown_bounc_drift} as the \textbf{\emph{bouncing drift}}.

\begin{remark}
\label{REM:COR_brown_res_rate}    
    Corollary~\ref{COR:resil_brownian} is a straightforward application of Theorem~\ref{TH:resilience_jumps} to the special case of a Brownian filtration. 
    Indeed, the further assumption on the $\P$-a.s.\ pathwise continuity for $g(\,\cdot\,,\r,Z)$ yields several simplifications of the statement. 
    First, it ensures that we can apply the second part of Proposition~\ref{PROP:integral_average}\itemref{IT:PROP:integral_average_1} to obtain the validity of equation~\eqref{EQ:brown_bounc_drift} for all times.
    Second, it allows us to apply Theorem~\ref{TH:resilience_jumps}\itemref{IT:th:resilience_jumps:stopping_times}.
    Finally, it simplifies the expressions for the resilience rates from Theorem~\ref{TH:resilience_jumps}\itemref{IT:th:resilience_jumps:stopping_times}.

    The respective hypotheses can indeed be verified in the Brownian setting (see Proposition~\ref{PROP:verification_stopptimes_brown}); however, we are not aware of any non-trivial sufficient conditions that ensure continuity  of the solutions to Brownian-Poissonian BSDEs.
    Therefore, even though the last corollary could, in principle, be correctly stated in the Brownian-Poissonian filtration, it would be of limited practical use, as the required assumptions are too restrictive to be satisfied in non-trivial cases.
\end{remark}

We emphasize here that the deterministic setting with $t \in [0,T)$ and the setting with stopping times $\tau \in \mathcal{T}_T$ allow for a similar interpretation. 
That is, instead of employing stopping times as in Definition~\ref{DEF:resilience_rate_jumps}, one could alternatively condition upon a stress scenario where the risk-acceptance set is breached at a fixed deterministic time $t$. 
In this case, the claim under management, $X$, is evaluated conditionally upon, for example, $Y$ exceeding a certain threshold at time $t \in [0,T)$, and the rate at which the resulting risk measure recovers, i.e., bounces back is analyzed. 
This is a typical approach in stress testing using risk models.
As is evident from the previous theorem, upon comparing items \itemref{IT:th:resilience_jumps:deterministic} and \itemref{IT:th:resilience_jumps:stopping_times}, the approach with deterministic times requires fewer conditions on the pathwise regularity of the trajectories, but it is mathematically and practically less elegant and appealing.

\begin{remark}
    \label{REM:greek1}
    In this remark, we stress the difference between our definition of the resilience rate and the Greek $\Theta$ as usually defined in the option pricing literature. 
    It is well known that, when considering a price $(S_t)_{t\in[0,T]}$ of a given underlying financial asset, the solution $P_t$ at time $t\in[0,T]$ to a BSDE with terminal condition $X=f(S_T)$ and a carefully chosen driver $g$ can be interpreted as the fair price (or the hedging cost) at time $t$ of a European-style derivative written on the underlying $S$ with payoff $X$ (see also Section~\ref{sec:examples}). 
    In general, if $p:\R_{+}\times[0,T]\to\R_+$ is the pricing functional for a financial instrument (namely, $p(S_t,t)=P_t$ for all $t\in[0,T]$), the Greek $\Theta$ is defined as $\Theta(S,t):=\partial_{t}p(S,t)$. 
    At first glance, both the Greek $\Theta$ and the resilience rate $\bm\dot P$ in this setting consist of a time-derivative of the fair price of the financial instrument.
    However, we note that, for the Greek $\Theta$, this time-derivative is taken as a partial derivative of the price, assuming that the underlying $S$ is fixed. 
    By contrast, the resilience rate accounts for the (expected) \textit{total} derivative of the price, explicitly incorporating its time-dependence through $t$, rather than treating $S$ as a fixed quantity.
    In this context, it is important to note that while the pricing functional is a deterministic differentiable function, the definition of the resilience rate (cf.\ Definition~\ref{DEF:resilience_rate_jumps}) involves taking the time-derivative of the expected price of the instrument. 
    This is necessary because the price process itself usually exhibits non-differentiable trajectories.
\end{remark}

\subsection{Verification of the assumptions}
\label{subsec:VT}

In this subsection, we provide natural, sufficient conditions under which the assumptions in Theorem~\ref{TH:resilience_jumps} and Corollary~\ref{COR:resil_brownian} are verified.
We first deal with the case of deterministic times.

\begin{proposition}
\label{PROP:verification_deterministic}
    Assume any of the following two conditions:
    \begin{itemize}[label=\scriptsize$\bullet$, noitemsep, topsep=0pt]
        \item The driver $g$ satisfies the condition~\ref{IT:L_condition} and $X\in L^2(\F_T)$.
        \item The driver $g$ satisfies the condition~\ref{IT:Q_condition} and $X\in L^\infty(\F_T)$.
    \end{itemize}
    Then the assumption of Theorem~\ref{TH:resilience_jumps}\itemref{IT:th:resilience_jumps:deterministic} is satisfied.\\
    Moreover, assume any of the following two conditions:
    \begin{itemize}[label=\scriptsize$\bullet$, noitemsep, topsep=0pt]
        \item The Brownian driver $g$ satisfies the condition~\ref{IT:BL_condition} and $X\in L^2\big(\F^W_T\big)$.
        \item The Brownian driver $g$ satisfies the condition~\ref{IT:BQ_condition} and $X\in L^\infty\big(\F^W_T\big)$.
    \end{itemize}
    Then the first assumption of Corollary~\ref{COR:resil_brownian}\itemref{IT:cor:resil_brown:all_t} is satisfied.
\end{proposition}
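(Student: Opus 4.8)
The plan is to reduce the proposition, in each of the four listed cases, to a single integrability check: that the process $g(\,\cdot\,,\rho(X),Z,U)$ — or $g(\,\cdot\,,\rho(X),Z)$ in the Brownian cases — belongs to $L^q_T$ for some $q\ge 1$, which is exactly the hypothesis of Theorem~\ref{TH:resilience_jumps}\itemref{IT:th:resilience_jumps:deterministic} (respectively the first hypothesis of Corollary~\ref{COR:resil_brownian}\itemref{IT:cor:resil_brown:all_t}). The routine is always the same: first invoke the relevant existence-and-regularity statement (i.e.\ \ref{IT:L_condition}, \ref{IT:Q_condition}, \ref{IT:BL_condition}, or \ref{IT:BQ_condition}) to identify the function spaces containing the solution; then bound $|g|$ along the solution using the growth estimate contained in that same assumption; and finally estimate the $L^q_T$-norm term by term, using the $\mathcal S^p_T$-regularity of $\rho(X)$ and the $\mathcal H^2_T$- (or $BMO$-) regularity of $Z$ and $U$. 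The measurability of $g(\,\cdot\,,\rho(X),Z,U)$, which makes these integrals meaningful, follows as in the proof of Theorem~\ref{TH:resilience_jumps}\itemref{IT:th:resilience_jumps:deterministic} from $\mathcal P$-measurability of $(Z,U)$ and $\bm\F$-adaptedness of the càdlàg process $\rho(X)$.

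In the two Lipschitz cases I would take $q=2$. Under \ref{IT:L_condition} with $X\in L^2(\F_T)$, Definition~\ref{DEF:solution_BSDE_jumps} gives $(\rho(X),Z,U)\in\mathcal S^2_T\times\mathcal H^2_T(\R^m)\times\mathcal H^2_T(\Lambda^2)$, and the linear growth bound of \ref{IT:L_condition} yields, for a constant $C$ depending only on $K$,
\[
\E\!\left[\int_0^T\big|g(t,\rho_t(X),Z_t,U_t)\big|^2\d t\right]\le C\,\E\!\left[\int_0^T\Big(\varphi_t^2+|\rho_t(X)|^2+\norm{Z_t}^2+\|U_t\|_{\Lambda^2}^2\Big)\d t\right],
\]
where the four contributions are finite because $\varphi\in L^2_T$, $\int_0^T|\rho_t(X)|^2\d t\le T\sup_{t\in[0,T]}|\rho_t(X)|^2$ with $\rho(X)\in\mathcal S^2_T$, and $Z\in\mathcal H^2_T(\R^m)$, $U\in\mathcal H^2_T(\Lambda^2)$. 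The Brownian case \ref{IT:BL_condition} with $X\in L^2\big(\F^W_T\big)$ is the same argument, with $g(\,\cdot\,,0,0)\in L^2_T$ in the role of $\varphi$ and $(\rho(X),Z)\in\mathcal S^2_T\times\mathcal H^2_T(\R^m)$.

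In the quadratic/exponential cases the correct choice is $q=1$. Under \ref{IT:Q_condition} with $X\in L^\infty(\F_T)$ one has $(\rho(X),Z,U)\in\mathcal S^\infty_T\times BMO(\R^m)\times BMO(\Lambda^2)$, $g$ is independent of $y$, $g\ge 0$, and $U$ is $\P\otimes\ell_1\otimes\nu$-essentially bounded by some $C_U>0$; the growth bound then gives
\[
\E\!\left[\int_0^T g(t,Z_t,U_t)\d t\right]\le K_1T+K_1\E\!\left[\int_0^T\norm{Z_t}^2\d t\right]+K_2\,\E\!\left[\int_0^T\!\!\int_{\R^d_\ast}\Big(e^{U_t(x)/K_2}-\tfrac{U_t(x)}{K_2}-1\Big)\d\nu(x)\d t\right].
\]
The middle term is finite since $Z\in BMO(\R^m)\subset\mathcal H^2_T(\R^m)$; for the last term I would use the elementary bound $|e^u-u-1|\le\tfrac12 u^2e^{|u|}$ (already invoked in Example~\ref{EX:ecntropy_cont_2}) applied to $u=U_t(x)/K_2$, which is bounded in modulus by $C_U/K_2$, reducing that term to $\tfrac12 e^{C_U/K_2}K_2^{-2}\,\E\big[\int_0^T\|U_t\|_{\Lambda^2}^2\d t\big]<+\infty$ because $BMO(\Lambda^2)\subset\mathcal H^2_T(\Lambda^2)$. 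The Brownian case \ref{IT:BQ_condition} with $X\in L^\infty\big(\F^W_T\big)$ is shorter still: $(\rho(X),Z)\in\mathcal S^\infty_T\times BMO(\R^m)$, so $|\rho_t(X)|\le C_\rho$ for all $t$, $\P$-a.s., and $|g(t,y,z)|\le K(1+|y|+\norm z^2)$ gives $\E\big[\int_0^T|g(t,\rho_t(X),Z_t)|\d t\big]\le K\big(T+TC_\rho+\E[\int_0^T\norm{Z_t}^2\d t]\big)<+\infty$.

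I expect the main obstacle to be precisely the realization that $q=2$ is not available in the quadratic/exponential cases — neither a finite fourth moment of $\norm{Z}$ nor a finite second exponential moment of $U$ is guaranteed by the known regularity of solutions to such BSDEs — so one must settle for $q=1$ and lean on the two structural facts that $BMO$-processes belong to $\mathcal H^2_T$ and that $U$ is essentially bounded, the latter linearizing the exponential nonlinearity in the jump integrand of \ref{IT:Q_condition}. Once this is in place, the remaining estimates are routine.
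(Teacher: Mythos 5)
Your proof is correct and follows essentially the same route as the paper's: linear growth plus the $\mathcal S^2_T\times\mathcal H^2_T$-regularity gives $q=2$ in the Lipschitz cases, while in the quadratic cases one settles for $q=1$ using the $BMO\subset\mathcal H^2_T$ inclusions, the essential boundedness of $\rho(X)$ (resp.\ $U$), and the bound $|e^u-u-1|\le\tfrac12u^2e^{|u|}$ exactly as in Example~\ref{EX:ecntropy_cont_2}. The only difference is that you spell out the estimates the paper declares "trivial."
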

\begin{proof}
    The first estimate for the driver $g$ in condition~\ref{IT:L_condition}, together with the minimal regularity imposed for a solution to the BSDE~\eqref{EQ:BSDE}, cf.\ Definition~\ref{DEF:solution_BSDE_jumps}, trivially implies $g(\,\cdot\,,\rho(X),Z,U)\in L^2_T$.
    Hence, the assumption of Theorem~\ref{TH:resilience_jumps}\itemref{IT:th:resilience_jumps:deterministic} is satisfied with $q=2$.
    Similarly, the Lipschitz assumption on the Brownian driver $g$ in condition~\ref{IT:BL_condition} yields
    \[
        \big|g(\,\cdot\,,\rho(X),Z)\big|\leq |g(\,\cdot\,,0,0)|+K\big(|\rho(X)|+\norm{Z}\big),\qquad\P\otimes\ell_1\text{-a.e.},
    \]
    which again implies $g(\,\cdot\,,\rho(X),Z)\in L^2_T$, thanks to the hypothesis $g(\,\cdot\,,0,0)\in L^2_T$.
    
    In case of quadratic growth for a Brownian $g$ and bounded terminal condition, the first estimate in condition~\ref{IT:BQ_condition} yields $g(\,\cdot\,,\r(X),Z)\in L^1_T$.
    Similarly, if $g$ satisfies the condition~\ref{IT:Q_condition} and $X\in L^\infty(\F_T)$, then $(Z,U)\in \mathcal H^2_T(\R^m)\times\mathcal H^2_T(\Lambda^2)$ and $U$ is ${\P\otimes\ell_1\otimes\nu\text{-essentially bounded}}$.
    Thus, reasoning as in Example~\ref{EX:ecntropy_cont_2}, we get $g(\,\cdot\,,Z,U)\in L^1_T$.
\end{proof}

\subsubsection{Stopping times in Brownian-Poissonian filtration}
\label{SEC:verif_assu_brown-poisson}
We now analyze when the additional conditions in item~\itemref{IT:th:resilience_jumps:stopping_times} of Theorem~\ref{TH:resilience_jumps} are verified.
Let $n\in \N$ and assume the following:
\begin{gather}
    \mu:\R^n\to \R^n, \qquad \s:\R^n\to \R^{n\times m}, \qquad \g:\R^n\times \R^d_\ast \to \R^{n}, \qquad h:\R^n\to\R,\\
    f:\R^n\times\R\times\R^m\times\Lambda^2\to \R.
\end{gather}
We consider the following system of forward-backward stochastic differential equations (FBSDE) with coefficients $\mu,\s,\g,h,f$, initial point $x\in\R^n$ and horizon $T>0$:
\begin{equation}\label{EQ:FBSDE_jumps}
    \begin{cases}
        \ds X_t=x+\int_0^t\mu(X_{s^-})\d s +\int_0^t\s(X_{s^-}) \d W_s+\int_{[0,t]\times \R^d_\ast }\gamma(X_{s^-},\xi)\d\tilde N(s,\xi),\\
        \ds Y_t=h(X_T)+\int_t^Tf(X_s,Y_s,Z_s, U_s)\d s -\int_t^TZ_s\cdot \d W_s-\int_{(t,T]\times \R^d_\ast } U_s{(\xi)}\d\tilde N(s,\xi).
    \end{cases}
\end{equation}
The second equation of the above system is a special case of the BSDE \eqref{EQ:BSDE}, where the driver $g$ takes the form
\[
    g(\om,t,y,z,u)=f(X_t(\om),y,z,u), \qquad \forall\, (\om,t,y,z,u)\in\Om\times[0,T]\times\R\times\R^m\times\Lambda^2,
\]
and the terminal condition is $h(X_T)$.
Further assume the following regularities for the coefficients:
\begin{enumerate}[label=$(A\arabic*)$, noitemsep, topsep=0pt]
    \item $\mu\in C^3(\R^n;\R^n)$ with bounded partial derivatives of order $1,2,3$.
    \item $\s\in C^3(\R^n;\R^{n\times m})$ with bounded partial derivatives of order $1,2,3$.
    \item $\g:\R^n\times \R^d_\ast \to\R^n$ measurable \textit{s.t.}:
    \begin{itemize}[label=\scriptsize$\bullet$, noitemsep, topsep=0pt, leftmargin=1em]
        \item For $\xi\in \R^d_\ast $, the map $\R^n\ni x \mapsto \g(x ,\xi)\in\R^n$ has continuous and bounded partial derivatives of order $1,2,3$.
        \item There exists $K_1>0$ such that $\norm{\g(0,\xi)}\leq K_1(1\wedge \norm{\xi})$ for all $\xi\in \R^d_\ast $.
        \item There exists $K_2>0$ such that for $(x,\xi)\in\R^n\times \R^d_\ast$:
        \[
            \left|\!\left|\frac{\partial^{|\a|}}{\partial x^\a}\g(x,\xi)\right|\!\right|\leq K_2(1\wedge \norm{\xi}),\qquad\forall\,  \a\in \N_0^n \text{ s.t. } 1\leq \sum_{i=1}^n\a_i\leq 3.
        \]
    \end{itemize}
    \item $h\in C^3(\R^n)$ with partial derivatives with polynomial growth.
    \item 
    $f\in C^3(\R^n\times \R\times \R^m\times \Lambda^2)$ with bounded partial (Fréchet) derivatives of order $1,2,3$.
\end{enumerate}

Then, see \cite[Section~3]{Buckdahn+Pardoux_1994_BSDEs_jumps_associated_integro-partial_differential_equations} and \cite[Section~5.2]{Matoussi+Sabbagh+Zhou_2015_Obstacle_problem_semilinear_parabolic_partial_integro-differential_equations}, we have the following result. 
For any $x\in \R^n$, there exists a unique solution $(X,Y,Z,U)$ to the FBSDE~\eqref{EQ:FBSDE_jumps} with coefficients $\mu$, $\s$, $\g$, $h$, $f$, starting point $x$ and horizon $T$ s.t.\ ${(X,Y,Z,U)\in \mathcal S^p_T\times \mathcal S^p_T\times \mathcal H^p_T(\R^m)\times \mathcal H^p_T(\Lambda^2)}$ for any $p\geq 1$.
Moreover, there exists a left-continuous version of $(Z,U):\Om\times[0,T]\to\R^m\times\Lambda^2$ and a function $u\in C^{1,2}\big([0,T]\times\R^n\big)$ with partial derivatives of polynomial growth such that the following representation holds $\P$-a.s., for $\ell_1\otimes\nu$-a.e.\ $(s,\xi)$:
\begin{equation}\label{EQ:representation}
    Y_s=u(s,X_s),\quad Z_s=\s^{\ss{\top}}(X_{s^-})\nabla u(s,X_{s^-}),\quad U_s{(\xi)}=u\big(s,X_{s^-}+\g(X_{s^-},\xi)\big)-u(s,X_{s^-}).
\end{equation}
This representation for the solution allows us to give sufficient conditions for the verification of the assumptions of item~\itemref{IT:th:resilience_jumps:stopping_times} in Theorem~\ref{TH:resilience_jumps}.
\begin{proposition}
\label{PROP:verification_stopptimes}
    Assume $(A1),\dots,(A5)$ above, and fix $x\in\R^n$.
    Let $(X,Y,Z,U)$ be the solution to the FBSDE~\eqref{EQ:FBSDE_jumps} with coefficients $\mu,\s,\g,h,f$, starting point $x$ and horizon $T$.
    Then the following properties hold:
    \begin{itemize}[noitemsep, topsep=0pt]
        \item $\P$-a.s., for any $t\in[0,T)$, the right limit 
        $\lim_{s\to t^+}f\big(X_s,Y_s,Z_s,U_s\big)$ exists. 
        \item $f\big(X,Y,Z,U\big)$ has $\P$-a.s.\ at most countably many discontinuities.
        \item There exists $q\geq 1$ such that $\E\Big[\sup_{t\in[0,T]}\big|f(X_t,Y_t,Z_t,U_t)\big|^q\Big]<+\infty$.
    \end{itemize}
    In particular, for any $\tau,\s\in\mathcal{T}_T$ such that $\s\leq \t$, we have:
    \begin{align}
        \bm\dot Y_{\tau|\s}\big(h(X_T)\big)&=-\frac{\E\left[\ds\1_{\{\t<T\}}\left.f\big(X_\t,Y_\t,Z_{\t^+},U_{\t^+}\big)\right|\F_\s\right]}{\P(\t<T)},\qquad\P\text{-a.s.},\\
        \bm\dot Y_{\tau}\big(h(X_T)\big)&=-\E\left[f\big(X_\t,Y_\t,Z_{\t^+},U_{\t^+}\big){\Big|\tau<T}\right].
    \end{align}
\end{proposition}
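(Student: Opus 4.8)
The plan is to verify, for the process $g(\,\cdot\,,Y,Z,U)$ with $g(\om,t,y,z,u)=f\big(X_t(\om),y,z,u\big)$ and $(Y,Z,U)$ the backward components of the FBSDE solution --- which, as already observed, solve the BSDE \eqref{EQ:BSDE} with parameters $\big(g,T,h(X_T)\big)$ --- the three path-regularity hypotheses of Theorem~\ref{TH:resilience_jumps}\itemref{IT:th:resilience_jumps:stopping_times}. Granting these, the two displayed identities are precisely the conclusion of that theorem with $X$ there equal to $h(X_T)$ and $\rho(X)=Y$, once one notes that $g\big(s,Y_s,Z_s,U_s\big)=f\big(X_s,Y_s,Z_s,U_s\big)$ and that, on $\{\t<T\}$, $\lim_{s\to\t^+}f\big(X_s,Y_s,Z_s,U_s\big)=f\big(X_\t,Y_\t,Z_{\t^+},U_{\t^+}\big)$ by right-continuity of $X$ and $Y$, the càglàd regularity of the version of $(Z,U)$ fixed below, and the continuity of $f$. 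Throughout, I take as representatives of $Z$ and $U$ the processes given pathwise by the right-hand sides of \eqref{EQ:representation} with the function $u\in C^{1,2}\big([0,T]\times\R^n\big)$ supplied by the FBSDE theory recalled above; by the cited results these are versions of the respective solution components.

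\emph{Step~1 (path regularity of $(X,Y,Z,U)$).} Since $X$ is the càdlàg solution of the forward equation, its left-limit process $X_-=(X_{s^-})_{s\in[0,T]}$ is càglàd, with $\lim_{r\to s^+}X_{r^-}=X_s$; continuity of $u$ gives that $Y_s=u(s,X_s)$ is càdlàg. Set $G(s,x):=\s^{\ss{\top}}(x)\nabla u(s,x)$ and $\Phi(s,x):=\big[\xi\mapsto u\big(s,x+\g(x,\xi)\big)-u(s,x)\big]$, so that $Z_s=G(s,X_{s^-})$ and $U_s=\Phi(s,X_{s^-})$. If $G\colon[0,T]\times\R^n\to\R^m$ and $\Phi\colon[0,T]\times\R^n\to\Lambda^2$ are jointly continuous, then $Z=G(\,\cdot\,,X_-)$ and $U=\Phi(\,\cdot\,,X_-)$ are càglàd (in $\R^m$, resp.\ $\Lambda^2$), being continuous images of the càglàd process $s\mapsto(s,X_{s^-})$. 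Continuity of $G$ is clear from $\s\in C^3$ and $u\in C^{1,2}$. For $\Phi$: from $(A3)$ one gets $\norm{\g(x,\xi)}\le\norm{\g(0,\xi)}+\norm{\g(x,\xi)-\g(0,\xi)}\le(K_1+K_2\norm{x})(1\wedge\norm{\xi})$, and, combining this with the polynomial growth of $\nabla u$ and the mean-value inequality $|u(s,x+\g(x,\xi))-u(s,x)|\le\sup_{\theta\in[0,1]}\norm{\nabla u\big(s,x+\theta\g(x,\xi)\big)}\,\norm{\g(x,\xi)}$, one obtains a bound $|\Phi(s,x)(\xi)|\le C(1+\norm{x})^{k}(1\wedge\norm{\xi})$ which is, in particular, uniform for $(s,x)$ in bounded sets; since $\int_{\R^d_\ast}(1\wedge\norm{\xi})^2\d\nu=\int_{\R^d_\ast}(1\wedge\norm{\xi}^2)\d\nu<+\infty$, pointwise (in $\xi$) continuity of $(s,x)\mapsto\Phi(s,x)(\xi)$ together with dominated convergence in $\Lambda^2$ yields joint continuity of $\Phi$ (and $\Phi(s,x)\in\Lambda^2$ for every $(s,x)$).

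\emph{Step~2 (the three hypotheses).} Each component of $(X,Y,Z,U)$ is either càdlàg ($X$, $Y$) or càglàd ($Z$, $U$), so it possesses a right limit at every $t\in[0,T)$; hence, $f$ being continuous, the right limit $\lim_{s\to t^+}f\big(X_s,Y_s,Z_s,U_s\big)$ exists for every such $t$, which is the first hypothesis. A càdlàg or càglàd path (real- or $\Lambda^2$-valued) has at most countably many discontinuities, so $t\mapsto(X_t,Y_t,Z_t,U_t)$ has at most countably many discontinuities $\P$-a.s., and therefore so does its continuous image $t\mapsto f\big(X_t,Y_t,Z_t,U_t\big)$, the second hypothesis. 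For the moment bound, the bounded first Fréchet derivatives of $f$ in $(A5)$ give the linear growth $|f(x,y,z,u)|\le C\big(1+\norm{x}+|y|+\norm{z}+\|u\|_{\Lambda^2}\big)$, so with $q=1$ it suffices to bound $\E\big[\sup_t\norm{X_t}\big]$, $\E\big[\sup_t|Y_t|\big]$, $\E\big[\sup_t\norm{Z_t}\big]$ and $\E\big[\sup_t\|U_t\|_{\Lambda^2}\big]$: the first two are finite since $X,Y\in\mathcal S^p_T$ for all $p\ge1$; the linear growth of $\s$ and the polynomial growth of $\nabla u$ give $\norm{Z_s}\le C(1+\norm{X_{s^-}})^{k}$, and the Step~1 estimate gives $\|U_s\|_{\Lambda^2}^2\le C(1+\norm{X_{s^-}})^{2k}\int_{\R^d_\ast}(1\wedge\norm{\xi}^2)\d\nu$, so both $\sup_t\norm{Z_t}$ and $\sup_t\|U_t\|_{\Lambda^2}$ are bounded by $C\big(1+\sup_t\norm{X_t}\big)^{k}$, which is integrable again by $X\in\mathcal S^p_T$ for all $p\ge1$. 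All three hypotheses being verified, the stated formulas follow from Theorem~\ref{TH:resilience_jumps}\itemref{IT:th:resilience_jumps:stopping_times}.

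The step I expect to be the main obstacle is Step~1, and within it the joint continuity and growth control of the $\Lambda^2$-valued map $\Phi$: it is the only place where the jump coefficient $\g$ (with merely $(A3)$-type regularity), the polynomial --- rather than bounded --- growth of the derivatives of the value function $u$ coming from $(A4)$--$(A5)$, and the Lévy-measure integrability $\int_{\R^d_\ast}(1\wedge\norm{\xi}^2)\d\nu<+\infty$ all have to be combined into a single $\nu$-square-integrable dominating function. Once that estimate is in hand, the càglàd regularity of $Z$ and $U$ and the remaining steps are routine, relying only on elementary càdlàg/càglàd calculus and on the $\mathcal S^p_T$- and $\mathcal H^p_T$-moment bounds on $(X,Y,Z,U)$ provided by the FBSDE theory.
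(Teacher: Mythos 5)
Your proposal is correct and follows essentially the same route as the paper: both arguments rest on the Markovian representation \eqref{EQ:representation} to establish that $(Z,U)$ admit càglàd versions, deduce the right limits and the countability of discontinuities from the regulated-path structure of $(X,Y,Z,U)$ composed with the continuous $f$, and obtain the moment bound by combining the Lipschitz/linear growth of $f$ with the polynomial-growth estimates $\norm{Z_t}\le C(1+\norm{X_{t^-}})^{k}$ and $\|U_t\|_{\Lambda^2}\le C(1+\norm{X_{t^-}})^{k}$ together with $X\in\mathcal S^p_T$ for all $p\ge1$. Your only addition is to make explicit the joint continuity of the $\Lambda^2$-valued map $(s,x)\mapsto\Phi(s,x)$ via a dominated-convergence argument, a point the paper treats with a brief ``similarly''; this is a welcome clarification but not a different method.
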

\begin{proof}
    First, let us show the existence of the limit. 
    Let $\overline\Om$ be the set of full probability where the trajectories of $X,Y$ are right-continuous and the representation in equation~\eqref{EQ:representation} holds true.
    Let us now fix $\om\in\overline\Om$, implied in the notation for the sake of clarity, and $t\in[0,T)$.
    The following limit exists by the representation \eqref{EQ:representation} and because $X(\om)$ is càdlàg:
    \[
        \lim_{s\to t^+}Z_s
        =\lim_{s\to t^+}\s^{\ss{\top}}(X_{s^-})\nabla u(s,X_{s^-})
        =\s^{\ss{\top}}(X_{t})\nabla u(t,X_{t}),
    \]
    and similarly for $\lim_{s\to t^+}U_s$ in $\Lambda^2$.
    This shows in particular that the trajectories of $(Z,U)$ are $\P$-a.s.\ càglàd.
    Hence, by continuity of $f$ and right-continuity of $X(\om),Y(\om)$:
    \begin{align*}
        \lim_{s\to t^+}f(X_s,Y_s,Z_s,U_s)
        &=f\left(X_t,Y_t,\lim_{s\to t^+}Z_s,\lim_{s\to t^+}U_s\right),
    \end{align*}
    the limit for $U$ being interpreted in $\Lambda^2$.

    We now discuss the number of discontinuities for the trajectories of $f(X,Y,Z,U)$.
    We have proved above that $(Z,U)$ are $\P$-a.s.\ càglàd, and $(X,Y)$ are $\P$-a.s.\ càdlàg. 
    Hence, $(X,Y,Z,U)$ admits $\P$-a.s.\ finite right and left limits at any time, and the same holds for the composition with the continuous function $f$.  
    Therefore, for $\P$-a.e.\ $\om \in \Om$, the $\om$-trajectory of $f(X,Y,Z,U)$ on $[0,T]$ is a regulated function, which are known to have at most countably many discontinuities (see, e.g., \cite[Theorem~4.7]{Banas+Kot_2017_regulated_functions}).

    Concerning the estimate, the boundedness of the first-order partial derivatives of $f$ implies the uniform Lipschitz condition in all its variables, thus, for a certain $L_f\geq 0$ we have
    \begin{align}\label{EQ:verification_f}
        \E\left[\sup_{t\in[0,T]}|f(X_t,Y_t,Z_t,U_t)|\right]
        \leq &\ \, |f(0,0,0,0)|\nonumber\\
        &+L_f\, \E\left[\sup_{t\in[0,T]}\big(\norm{X_t}+|Y_t|+\norm{Z_t}+\|U_t\|_{\Lambda^2}\big)\right].
    \end{align}
    If we once again use the representation \eqref{EQ:representation}, along with assumption $(A2)$ and the power-growth regularity on the partial derivatives of  $u$, we get, for constants $L_\s\geq 0$, $C_{1,2}\geq 0$ and a power $\a\geq 0$:
    \begin{align}
        \norm{Z_t}
        = \norm{\s^{\ss{\top}}(X_{t^-})\nabla u(t,X_{t^-})}
        &\leq C_1\big(\norm{\s^{\ss{\top}}(0)}+L_\s\norm{X_{t^-}}\big)\big(1+|t|^\a+\norm{X_{t^-}}^\a\big)\\
        &\leq C_2\big(1+|t|^\a+\norm{X_{t^-}}^{\a+1}\big).
    \label{EQ:verification_Z}
    \end{align}
    Similarly, for a constant $C>0$ possibly different at every line, and for a power $\b\geq 0$:
    \begin{align}
        \|U_t\|^2_{\Lambda^2}
        &= \int_{\R^d_\ast }\big|\!\big|u\big(t,X_{t^-}+\g(X_{t^{-}},\xi)\big)-u(t,X_{t^-})\big|\!\big|^2\d\nu(\xi)\\
        \notag &\leq C\int_{\R^d_\ast }\norm{\g(X_{t^-},\xi)}^2\sup_{\norm{x}\leq R_t(\xi)}\norm{\nabla u(t,x)}^2\d\nu(\xi)\\
        &\leq C\int_{\R^d_\ast }\norm{\g(X_{t^-},\xi)}^2\big(1+|t|^\b+\norm{X_{t^-}}^\b+\norm{\g(X_t^{-},\xi)}^\b\big)\d\nu(\xi),\label{EQ:verification_U}
    \end{align}
    where the supremum is among $x\in \R^n$ such that $\norm{x}\leq R_t(\xi):=\norm{X_{t^-}}\vee \norm{X_{t^-}+\g(X_{t^{-}},\xi)}$.
    The term with $\g$ can be estimated as follows by means of the assumption $(A3)$:
    \begin{align*}
        \norm{\g(X_{t^-},\xi)}
        \leq \norm{\g(0,\xi)}+\norm{X_{t^-}}\sup_{\norm{x}\leq \norm{X_{t^-}}}\norm{D_x \g(x,\xi)}
        \leq K(1\wedge \norm{\xi})(1+\norm{X_{t^-}}),
    \end{align*}
    where the supremum is among $x\in \R^n$ s.t. $\norm x\leq\norm{X_{t^-}}$, and $D_x\g(x,\xi)$ is the Jacobian of $\g$ with respect to its first variable.
    By plugging this estimate into equation~\eqref{EQ:verification_U}, we get, for a different $C>0$,
    \begin{align}\label{EQ:verification_U_2}
        \|U_t\|^2_{\Lambda^2}
        &\leq C\big(1+\norm{X_{t^-}}^{2+\b}\big)\big(1+|t|^\b\big)\int_{\R^d_\ast }(1\wedge \norm{\xi}^2)\d\nu(\xi),
    \end{align}
    where the integral is finite thanks to the initial assumptions on the measure $\nu$.
    Inserting both \eqref{EQ:verification_Z} and \eqref{EQ:verification_U_2} back into \eqref{EQ:verification_f}, we find a constant $C_T>0$ such that 
    \begin{align*}
        \E\left[\sup_{t\in[0,T]}|f(X_t,Y_t,Z_t,U_t)|\right]
        \leq \, C_T \E\left[\sup_{t\in[0,T]}\big(\norm{X_t}^q+|Y_t|\big)\right],
    \end{align*}
    where $q:=1+\max\{\a,\b/2\}$.
    Since $X\in \mathcal S^p_T$ for all $p\geq 1$, the right-hand side of the last equation is finite.
\end{proof}

Of course, Proposition~\ref{PROP:verification_stopptimes} can also be applied to the special case of Brownian FBSDEs.
However, the assumptions of Corollary~\ref{COR:resil_brownian}\itemref{IT:cor:resil_brown:tau} are verified under slightly more general conditions, which are discussed in detail in Appendix~\ref{APP:verif_assu_brownian}.

\section{Properties of the Resilience Rate}
\label{sec:properties}

In Section~\ref{SEC:funct_spaces_dyn_risk_measures}, we recalled the main properties of dynamic risk measures. 
Considerable attention has been devoted in the literature to characterizing these properties for BSDE-induced risk measures in terms of the underlying driver (see Section~\ref{SEC:BSDEs} for a brief overview). 
It is natural to address the same problem for our measure of resilience. 
Thus, in this section, we study properties of the resilience rate and identify conditions on the underlying BSDE and dynamic risk measure that are sufficient and/or necessary for these properties to hold.

\begin{example}[Continued from Examples \ref{EX:entropic_1}, \ref{EX:ecntropy_cont_2}]
\label{EX:entropy_cont_3}
    Let us consider once again the dynamic entropic risk measure of equation~\eqref{EQ:dyn_entr_risk_meas}. 
    Recall the definitions of time consistency and cash-additivity from Section~\ref{SEC:funct_spaces_dyn_risk_measures}, which are well known to be satisfied by the dynamic entropic risk measure, and recall the notation introduced in Example~\ref{EX:entropic_1}. 
    We show that the associated resilience rate satisfies properties similar to time consistency and cash-additivity.
    
    One may choose to evaluate the entropic risk measure at time $t$ for the random variable $e^\g_s(X)$, for some $s\in(t,T)$, instead of for the terminal random variable $X$.
    This is indeed feasible as $e^\g_s(X)$ is $\F_s$-measurable, hence $\F_T$-measurable, and, for any $p\geq \g$, we have 
    \[
        \E\left[e^{pe^\g_s(X)}\right]
        =\E\left[\left(\E[e^{\g X}|\F_s]\right)^{p/\g}\right]\leq \E\left[\E\left[\left.e^{pX}\right|\F_s\right]\right]
        =\E[e^{pX}]<+\infty,
    \]
    where we first used equation~\eqref{EQ:dyn_entr_risk_meas}, followed by the conditional Jensen inequality, then we invoked elementary  properties of the conditional expectation, and finally we recalled the hypotheses on $X$ from Example~\ref{EX:entropic_1}.
    In this case, we have 
    \[
        M_t
        =\E\left[\left.e^{\g e^\g_s(X)}\right|\F_t\right]
        =\E\left[\left.\E[e^{\g X}|\F_s]\right|\F_t\right]
        =\E\left[\left.e^{\g X}\right|\F_t\right],
    \]
    which shows that the martingale $M$ does not change with the proposed modification of the terminal random variable $X$ (which actually also derives from the time consistency property of the dynamic entropic risk measure).
    Therefore, the processes $H$ and $K$ representing $M$ in equation~\eqref{EQ:EX:entropic_1:martingale_repr_th} remain unchanged, and so does the resilience rate in equation~\eqref{EQ:rate_entropic_1}. 
    Hence, we have proved that the resilience rate inherits some kind of time consistency from the entropic risk measure:
    \[
        \bm\dot e^\g_t\left(e^\g_s(X)\right)=\bm\dot e^\g_t(X), \qquad \forall\, s\in(t,T).
    \]

    Similarly, if $h\in L^\infty(\F_t)$, we can translate the terminal random variable $X$ to $X+h$, as $X+h$ is $\F_T$-measurable and $\E[e^{p(X+h)}]$ is finite for all $p\geq \g$.
    With this modification, the martingale $M$ becomes
    \[
        \tilde M_t
        :=\E\left[\left.e^{\g(X+h)}\right|\F_t\right]
        =e^{\g h}M_t,
    \]
    where we used the definition of $M$ in equation~\eqref{EQ:EX:entropic_1:def_M}, and the $\F_t$-measurability of $h$.
    Consequently, from equation~\eqref{EQ:EX:entropic_1:martingale_repr_th}, we see that the processes $\tilde H$ and $\tilde K$ representing $\tilde M$ are $\tilde H:= e^{\g h} H$ and $\tilde K := e^{\g h} K$, thus the resilience rate from equation~\eqref{EQ:rate_entropic_1} remains unchanged, as it only depends on the fractions $\tilde H/\tilde M_-=H/M_-$ and $\tilde K/\tilde M_-=K/M_-$.
    To conclude, we have showed that the resilience rate enjoys a cash-invariance property of the following kind:
    \[
        \bm\dot e^\g_t(X+h)=\bm\dot e^\g_t(X), \qquad \forall\, h\in L^\infty(\F_t).
    \]
\end{example}

\begin{notation}
\label{NOT:res}
    If $g$ is a driver that induces a dynamic risk measure $\rho(g)$ on $L^p(\F_T)$, for some $p\in[1,+\infty]$ (see Remark~\ref{REM:induced_dyn_risk_meas}), and if $X\in L^p(\F_T)$, then $\big(\rho(g,X), Z(g,X), U(g,X)\big)$
    denotes the solution to the BSDE \eqref{EQ:BSDE} with parameters $(g,T,X)$.
    For $t\in[0,T)$ and $X\in L^p(\F_T)$, the resilience rate of $\rho_t(g,X)$ will be denoted as $\bm\dot\rho_t(g,X)$ and will naturally define two functions: $\bm\dot\rho(g,X):[0,T)\ni t\mapsto \bm\dot\rho_t(g,X)\in\R$ for fixed $X\in L^p(\F_T)$, and $\bm\dot\rho_t(g):L^p(\F_T)\ni X\mapsto \bm\dot\rho_t(g,X)\in\R$ for fixed time $t\in[0,T)$.
    The dependence on $g$ or $X$ will sometimes be suppressed if it is clear from the context.
\end{notation}

Although the results in this section are stated for deterministic times, most of them (except for Corollaries~\ref{COR:prop_brown_res_rate}, \ref{COR:conv_RAS_prop} and Section~\ref{SEC:continuity}) can be extended to stopping times, as soon as the conditions for the existence of the resilience rates are satisfied.
In more detail, whenever the existence of the resilience rate is assumed \textit{a priori}, the statements remain true by replacing deterministic times in $[0,T)$ with stopping times in $\mathcal T_T$. 
Instead, if the existence of the resilience rate follows from a BSDE --- thanks to Theorem~\ref{TH:resilience_jumps}\itemref{IT:th:resilience_jumps:deterministic} or Corollary~\ref{COR:resil_brownian}\itemref{IT:cor:resil_brown:all_t} --- then the solution of the BSDE needs to satisfy the stronger assumptions of Theorem~\ref{TH:resilience_jumps}\itemref{IT:th:resilience_jumps:stopping_times} or Corollary~\ref{COR:resil_brownian}\itemref{IT:cor:resil_brown:tau}, in order to allow for stopping times.
In case of Corollaries~\ref{COR:prop_brown_res_rate}, \ref{COR:conv_RAS_prop} the generalization to stopping times does not hold, as the proof relies on differentiability in time of the map $t\mapsto \E[\rho_t(X)]$.
Section~\ref{SEC:continuity} deals with time integration of $t\mapsto \bm \dot \rho_t(X)$, hence a statement for stopping times is not viable.

\subsection{Time consistency}
\label{SEC:time_cons}
The following proposition shows that our resilience rate satisfies a special form of time consistency (e.g., \cite{Bion-Nadal_2008_Dynamic_risk_measures_time_consistency__risk_measures_BMO_martingales,Delbaen+Peng+RosazzaGianin_2010_Representation_penalty_term_dynamic_concave_utilities,Follmer+Schied_2016_Stochastic_finance}), which we denote as $(\bm\dot\rho,\rho)$-time consistency.

\begin{proposition}
\label{PROP:time_consistency}
    Let $p\in[1,+\infty]$ and let $\rho$ be a dynamic risk measure on $L^p(\F_T)$.
    If $\rho$ is time consistent, then the following ${(\bm\dot\rho,\rho)\text{-time consistency}}$ is satisfied: If $X\in L^p(\F_T)$ and $t\in[0,T)$ are such that $\bm\dot\rho_t(X)$ exists, then 
    \begin{equation} 
    \label{EQ:time_consistency_jumps}
        \bm\dot\rho_{t}\big(\rho_s(X)\big)= \bm\dot\rho_{t}(X), \qquad \forall\, s\in(t,T).
    \end{equation}
\end{proposition}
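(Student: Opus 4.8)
The plan is to unwind Definition~\ref{DEF:resilience_rate_jumps} for the claim $\rho_s(X)$ and then use time consistency to collapse the relevant difference quotients back to those of $X$. First I would observe that $\rho_s(X) \in L^p(\F_s) \subseteq L^p(\F_T)$, so the quantity $\bm\dot\rho_t\big(\rho_s(X)\big)$ is meaningful in the sense of Definition~\ref{DEF:resilience_rate_jumps} (for deterministic times the class~$(D)$ hypothesis is not needed, as noted right after that definition). By definition,
\[
    \bm\dot\rho_t\big(\rho_s(X)\big) = \lim_{\eps \to 0^+} \frac{1}{\eps}\E\big[\rho_{t+\eps}\big(\rho_s(X)\big) - \rho_t\big(\rho_s(X)\big)\big],
\]
whenever the limit exists in $\overline\R$.

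Next I would fix $\eps \in (0, s-t)$, so that $t < t+\eps < s$, and apply the time consistency axiom twice: once with the pair of times $(t+\eps, s)$ and once with $(t, s)$. This yields $\rho_{t+\eps}\big(\rho_s(X)\big) = \rho_{t+\eps}(X)$ and $\rho_t\big(\rho_s(X)\big) = \rho_t(X)$, $\P$-a.s. Hence, for every $\eps \in (0, s-t)$,
\[
    \frac{1}{\eps}\E\big[\rho_{t+\eps}\big(\rho_s(X)\big) - \rho_t\big(\rho_s(X)\big)\big] = \frac{1}{\eps}\E\big[\rho_{t+\eps}(X) - \rho_t(X)\big].
\]
Since the two families of difference quotients coincide on a right-neighbourhood of $0$, their limits as $\eps \to 0^+$ exist simultaneously and agree; as $\bm\dot\rho_t(X)$ exists by hypothesis, so does $\bm\dot\rho_t\big(\rho_s(X)\big)$, and they are equal, which is exactly \eqref{EQ:time_consistency_jumps}.

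There is no substantial obstacle: the whole argument rests on the fact that the increment $\rho_{t+\eps}(\,\cdot\,) - \rho_t(\,\cdot\,)$ only probes the risk measure on $[t, t+\eps]$, and for $\eps$ small this interval lies strictly before $s$, where time consistency flattens $\rho_s(X)$ back to $X$. The only point requiring minor care is the direction of the time consistency identity (the relation $\rho_u(\rho_v(\,\cdot\,)) = \rho_u(\,\cdot\,)$ demands $u \le v$), which is what forces the restriction $\eps < s - t$; since we take the limit as $\eps \to 0^+$, this restriction is harmless.
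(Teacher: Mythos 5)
Your proof is correct and follows essentially the same route as the paper's: both reduce the difference quotient for $\rho_s(X)$ to that for $X$ via the time consistency identity and then pass to the limit. You are in fact slightly more careful than the paper in restricting to $\eps<s-t$ so that time consistency (which requires the evaluation time not to exceed $s$) genuinely applies; since the limit is taken as $\eps\to 0^+$, this restriction is harmless, as you note.
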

\begin{proof}
    Fix $0\leq t< s<T$.
    The thesis follows from direct inspection of Definition~\ref{DEF:resilience_rate_jumps}, once observed that the time consistency of $\rho$ implies that
    \begin{align}
        \frac1\eps\E\left[\rho_{(t+\eps)\wedge T}\big(\rho_s(X)\big)- \rho_{t}\big(\rho_s(X)\big)\right]
        =\frac1\eps{\E\left[\rho_{(t+\eps)\wedge T}(X)- \rho_{t}(X)\right]}, \qquad \forall\, \eps>0.
    \end{align}
\end{proof}
The proposition above admits a clear interpretation: the resilience evaluation at time $t$ of a risky claim can equivalently be computed starting from its risk assessment at any later time $s>t$.

\begin{remark}
\label{REM:strong_time_consistency}
    We recall that if the risk measure $\r$ is induced by a BSDE, it enjoyes a property stronger than the usual time consistency of page~\pageref{PAGE:time_consistency}, namely the flow property (or strong time consistency) stated on page~\pageref{PAGE:flow_property} (see \cite{DiNunno+RosazzaGianin_2024_Fully_dynamic_risk_measures_horizon_risk_time-consistency_relations_BSDEs_BSVIEs}).
    Let us assume we are in this setting.
    Then one can easily obtain a somewhat stronger thesis for Proposition~\ref{PROP:time_consistency}.
    For each $s\in(0,T]$, let $[0,s]\times L^2(\F_s)\ni (t,X)\mapsto \rho_{t,s}(X)\in L^2(\F_t)$  denote the dynamic risk measure induced by a driver $g$ on $L^p(\F_s)$, for some $p\in[2,+\infty]$.
    Then, for $X\in L^p(\F_T)$ and for $\ell_1$-a.e. $t\in[0,s)$, we have that
    \[
        \bm\dot\rho_{t,s}\big(\rho_{s,T}(X)\big)=\bm\dot\rho_{t,T}(X), \qquad \forall\, s\in(t,T).
    \] 
\end{remark}

\subsection{Cash-insensitivity and positive homogeneity}
\label{SEC:properties_cash-ins_pos-hom}

As discussed in the Introduction, two fundamental properties of risk measures are cash-additivity and positive homogeneity. 
In this subsection, we show that they lead to interesting consequences concerning resilience rates.

\begin{proposition}
\label{PROP:res_rate_properties}
    Let $\rho$ be a dynamic risk measure on $L^p(\F_T)$ for some $p\in[1,+\infty]$.
    \begin{enumerate}[label=(\roman*)]
    \item   
    \label{IT:PROP:res_rate_properties:cash_add}
        If $\rho$ is cash-additive and if $\bm\dot\rho_t(X)$ exists for some $t\in[0,T)$ and $X\in L^p(\F_T)$, then $\bm\dot\rho_t(X)$ is cash-insensitive; that is, for any $h\in L^p(\F_t)$, $\bm\dot\rho_t(X+h)$ exists and 
        \[
            \bm\dot\rho_t(X+h)=  \bm\dot\rho_t(X).
        \]
    \item   
    \label{IT:PROP:res_rate_properties:pos_hom}
        If $\rho$ is positively homogeneous and if $\bm\dot\rho_t(X)$ exists for some $t\in[0,T)$ and $X\in L^p(\F_T)$, then $\bm\dot\rho_t(X)$ is positively homogeneous; that is, for any $\a\geq 0$, $\bm\dot\rho(\a X)$ exists and 
        \[
            \bm\dot\rho_t(\alpha X)= \alpha \bm\dot\rho_t(X).
        \]
    \end{enumerate}
\end{proposition}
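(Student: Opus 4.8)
The plan is to argue directly from the difference-quotient definition of the resilience rate at a deterministic time,
\[
    \bm\dot\rho_t(X)=\lim_{\eps\to0^+}\frac1\eps\,\E\big[\rho_{t+\eps}(X)-\rho_t(X)\big],
\]
and to use the hypothesised axiom --- cash-additivity in part~\itemref{IT:PROP:res_rate_properties:cash_add}, positive homogeneity in part~\itemref{IT:PROP:res_rate_properties:pos_hom} --- to rewrite the inner difference $\rho_{t+\eps}(X')-\rho_t(X')$ for the transformed claim $X'\in\{X+h,\,\alpha X\}$ in terms of the corresponding difference for $X$ itself. Once the two differences are shown to be equal (resp.\ proportional), both the existence and the value of $\bm\dot\rho_t(X')$ follow at once from the assumed existence of $\bm\dot\rho_t(X)$. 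This is exactly the pattern used in the proof of Proposition~\ref{PROP:time_consistency}.

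For part~\itemref{IT:PROP:res_rate_properties:cash_add}, fix $t\in[0,T)$ and $h\in L^p(\F_t)$; then $X+h\in L^p(\F_T)$, so $\bm\dot\rho_t(X+h)$ is meaningful. For $\eps>0$ small enough that $t+\eps<T$ we have $h\in L^p(\F_t)\subseteq L^p(\F_{t+\eps})$, so cash-additivity applied at both times $t$ and $t+\eps$ gives
\[
    \rho_{t+\eps}(X+h)-\rho_t(X+h)=\big(\rho_{t+\eps}(X)+h\big)-\big(\rho_t(X)+h\big)=\rho_{t+\eps}(X)-\rho_t(X),\qquad\P\text{-a.s.}
\]
Taking expectations, dividing by $\eps$ and letting $\eps\to0^+$ shows that $\bm\dot\rho_t(X+h)$ exists and equals $\bm\dot\rho_t(X)$.

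For part~\itemref{IT:PROP:res_rate_properties:pos_hom}, fix $t\in[0,T)$ and $\alpha\ge0$; again $\alpha X\in L^p(\F_T)$. If $\alpha>0$, positive homogeneity yields $\rho_{t+\eps}(\alpha X)-\rho_t(\alpha X)=\alpha\big(\rho_{t+\eps}(X)-\rho_t(X)\big)$ for all $\eps>0$ with $t+\eps<T$, hence
\[
    \frac1\eps\,\E\big[\rho_{t+\eps}(\alpha X)-\rho_t(\alpha X)\big]=\alpha\cdot\frac1\eps\,\E\big[\rho_{t+\eps}(X)-\rho_t(X)\big],
\]
and passing to the limit gives that $\bm\dot\rho_t(\alpha X)$ exists and equals $\alpha\,\bm\dot\rho_t(X)$, the right-hand side being well-defined in $\overline\R$ since $\alpha$ is a finite nonnegative scalar. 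The case $\alpha=0$ is immediate: positive homogeneity forces $\rho_t(0)=0$ for every $t$, so $\bm\dot\rho_t(0)=0=0\cdot\bm\dot\rho_t(X)$.

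There is no genuine obstacle here; the argument is bookkeeping around Definition~\ref{DEF:resilience_rate_jumps}. The only points that need a line of care are that the translated/rescaled claim still belongs to $L^p(\F_T)$ (so its resilience rate is well-posed), that $h$ is $\F_{t+\eps}$-measurable so that cash-additivity may be invoked at the later time, and the degenerate value $\alpha=0$. Finally, as noted at the start of Section~\ref{sec:properties}, whenever the relevant resilience rates at stopping times are well-defined --- e.g.\ when $\rho(X+h)$ and $\rho(\alpha X)$ are of class~$(D)$ --- the identical computation with $\rho_{(\tau+\eps)\wedge T}$ in place of $\rho_{t+\eps}$, and conditioning on $\{\tau<T\}$ (or on $\F_\s$ for the conditional version), extends both statements verbatim to $\tau,\s\in\mathcal T_T$.
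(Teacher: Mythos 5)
Your proof is correct and follows essentially the same route as the paper's: apply the axiom (cash-additivity, resp.\ positive homogeneity) at both times $t$ and $t+\eps$ so that the difference quotients for $X+h$ (resp.\ $\alpha X$) coincide with (resp.\ are $\alpha$ times) those for $X$, then pass to the limit. Your extra remarks --- that $h\in L^p(\F_t)\subseteq L^p(\F_{t+\eps})$ justifies invoking cash-additivity at the later time, and the separate treatment of $\alpha=0$ --- are sound refinements of the same argument.
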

\begin{proof}
    Let us show \itemref{IT:PROP:res_rate_properties:cash_add}.
    Fix $t\in[0,T)$ and $X\in L^p(\F_T)$ such that $\bm\dot\rho_t(X)$ exists.
    If $h\in L^p(\F_t)$, then the cash-additivity of $\rho$ yields $\rho_s(X+h)=\rho_s(X)$ for all $s\in[t,T]$. 
    Hence, for any $\eps\in(0,T-t]$, we have
    \begin{align}
        \frac1\eps[\rho_{t+\eps}(X+h)-\rho_t(X+h)]
        =\frac1\eps[\rho_{t+\eps}(X)-\rho_t(X)].
    \end{align}
    By definition, the right-hand side of the above equality converges to $\bm\dot\rho_t(X)$ as $\eps\to 0^+$. 
    In particular, we infer that $\bm\dot\rho_t(X+h)$ exists and it equals $\bm\dot\rho_t(X)$.
    Item \itemref{IT:PROP:res_rate_properties:pos_hom} is proved analogously.
\end{proof}

Let us highlight that the previous proposition does not rely on the fact that the dynamic risk measure is induced by a BSDE. 
However, if one were to restrict to the study of dynamic risk measures induced by Brownian BSDEs, a much stronger result can be proved. 
Indeed, it is known that cash-additivity and positive homogeneity are closely related to the structure of the BSDE driver, specifically, they are equivalent to the driver being independent of the $y$-component or being positively homogeneous in the spatial variables, respectively (see Section~\ref{SEC:BSDEs} for a formal statement).
We now show that, in the Brownian setting, analogous equivalences hold for the resilience rate.

\begin{corollary}
\label{COR:prop_brown_res_rate}
    Let $g$ satisfy condition~\ref{IT:BL_condition} and $p=2$, or let $g$ satisfy condition~\ref{IT:BQ_condition} and $p=\infty$.
    Further assume:
    \begin{itemize}[label=\scriptsize$\bullet$, noitemsep, topsep=0pt]
        \item $\P\otimes\ell_1$-a.e.\ we have $g(\,\cdot\,,y,0)=0$ for all $y\in\R$.
        \item For all $X\in L^p\big(\F^W_T\big)$, the process $g\big(\,\cdot\,,\rho(X),Z(X)\big)$ is $\P\text{-a.s.}$ continuous.
    \end{itemize}
    Then the following assertions hold true.
    \begin{enumerate}[label=(\roman*)]
        \item 
        \label{IT:COR:prop_brown_res_rate:cash_ins_converse}
        The following conditions are equivalent:
            \begin{itemize}[label=\scriptsize$\bullet$, noitemsep, topsep=0pt, leftmargin=1em]
                \item $g$ does not depend on $y$, i.e., it is defined on $\Om\times[0,T]\times\R^m$.
                \item $\rho$ is cash-additive.
                \item For all $X\in L^p\big(\F^W_T\big)$ and $t\in[0,T)$, $\bm\dot\rho_t(X)$ is cash-insensitive.
            \end{itemize}
        \item 
        \label{IT:COR:prop_brown_res_rate:pos_hom_converse}
        The following conditions are equivalent:
            \begin{itemize}[label=\scriptsize$\bullet$, noitemsep, topsep=0pt, leftmargin=1em]
                \item $g$ is $\P\otimes\ell_1$-a.e.\ positively homogeneous in $(y,z)$.
                \item $\rho$ is positively homogeneous.
                \item For all $X\in L^p(\F^W_T)$ and $t\in[0,T)$, $\bm\dot\rho_t(X)$ is positively homogeneous.
            \end{itemize}  
    \end{enumerate}
\end{corollary}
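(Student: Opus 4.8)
The plan is to establish both (i) and (ii) by closing the chain (a)$\Rightarrow$(b)$\Rightarrow$(c)$\Rightarrow$(a), in which only the last arrow carries real content. The equivalence (a)$\Leftrightarrow$(b) is precisely the characterization of cash-additivity (item~$6'$) and of positive homogeneity of a BSDE-induced dynamic risk measure recorded in Section~\ref{SEC:BSDEs}: under the standing hypothesis that $g(\,\cdot\,,y,0)=0$ $\P\otimes\ell_1$-a.e., cash-additivity of $\rho$ is equivalent to $g$ being independent of $y$, and positive homogeneity of $\rho$ is equivalent to $g$ being $\P\otimes\ell_1$-a.e.\ positively homogeneous in $(y,z)$; these are stated for \ref{IT:BL_condition} and hold analogously for quadratic Brownian drivers under \ref{IT:BQ_condition}. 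Hence it remains to prove (b)$\Rightarrow$(c) and (c)$\Rightarrow$(b).

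For (b)$\Rightarrow$(c), first observe that under \ref{IT:BL_condition} with $p=2$, or \ref{IT:BQ_condition} with $p=\infty$, Proposition~\ref{PROP:verification_deterministic} supplies the integrability hypothesis of Corollary~\ref{COR:resil_brownian}\itemref{IT:cor:resil_brown:all_t}; combined with the standing assumption that $g(\,\cdot\,,\rho(X),Z(X))$ has $\P$-a.s.\ continuous trajectories, its second part gives that $\bm\dot\rho_t(X)$ exists and is finite for \emph{every} $t\in[0,T)$ and every $X\in L^p\big(\F^W_T\big)$, with $\bm\dot\rho_t(X)=-\E[g(t,\rho_t(X),Z_t(X))]$. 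Once this existence is in hand, cash-insensitivity (resp.\ positive homogeneity) of $\bm\dot\rho_t(X)$ follows immediately from Proposition~\ref{PROP:res_rate_properties}\itemref{IT:PROP:res_rate_properties:cash_add} (resp.\ \itemref{IT:PROP:res_rate_properties:pos_hom}), whose proof uses only the definition of the resilience rate and cash-additivity (resp.\ positive homogeneity) of $\rho$, not the BSDE structure.

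The core is (c)$\Rightarrow$(b). Fix $X\in L^p\big(\F^W_T\big)$; for the cash statement fix also $h\in L^p\big(\F^W_t\big)$ and set $\phi(s):=\E[\rho_s(X+h)]-\E[\rho_s(X)]-\E[h]$ for $s\in[t,T]$, while for the homogeneity statement fix $\alpha\ge 0$ and set $\psi(s):=\E[\rho_s(\alpha X)]-\alpha\,\E[\rho_s(X)]$ for $s\in[0,T]$. By Remark~\ref{REM:rate_derivative} and the existence of all resilience rates established in the previous step, $s\mapsto\E[\rho_s(\,\cdot\,)]$ is right-differentiable on $[0,T)$ with right-derivative equal to the corresponding resilience rate. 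Since $h\in L^p\big(\F^W_t\big)\subseteq L^p\big(\F^W_s\big)$ whenever $s\ge t$, hypothesis~(c) yields $\bm\dot\rho_s(X+h)=\bm\dot\rho_s(X)$ for all $s\in[t,T)$ and $\bm\dot\rho_s(\alpha X)=\alpha\,\bm\dot\rho_s(X)$ for all $s\in[0,T)$, so $\phi$ and $\psi$ have identically vanishing right-derivative. They are also continuous: in the Brownian setting $\rho(X),\rho(X+h),\rho(\alpha X)$ have $\P$-a.s.\ continuous paths (by \ref{IT:BL_condition}, resp.\ \ref{IT:BQ_condition}) and lie in $\mathcal S^2_T$ (resp.\ $\mathcal S^\infty_T$), hence are of class~$(D)$, so $s\mapsto\E[\rho_s(\,\cdot\,)]$ is continuous by a.s.\ path continuity together with uniform integrability. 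A continuous real function on a compact interval whose right-derivative vanishes everywhere is constant, and $\phi(T)=\psi(T)=0$; therefore $\E[\rho_s(X+h)]=\E[\rho_s(X)]+\E[h]$ for $s\in[t,T]$ and $\E[\rho_s(\alpha X)]=\alpha\,\E[\rho_s(X)]$ for $s\in[0,T]$.

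It remains to upgrade these identities in expectation to the pathwise identities $\rho_t(X+h)=\rho_t(X)+h$ and $\rho_t(\alpha X)=\alpha\rho_t(X)$, which I would do by localizing via the zero-one law: since $g(\,\cdot\,,0,0)=0$ (a special case of the standing hypothesis), uniqueness of solutions to the Brownian BSDE~\eqref{EQ:brownian_BSDE} forces $\rho_t(\1_A Y)=\1_A\rho_t(Y)$ for all $A\in\F^W_t$ and all $Y\in L^p\big(\F^W_T\big)$, valid under both \ref{IT:BL_condition} and \ref{IT:BQ_condition}. Evaluating the two expectation identities at $s=t$ with $X,h$ replaced by $\1_A X,\1_A h$ (resp.\ $X$ replaced by $\1_A X$) gives $\E[\1_A(\rho_t(X+h)-\rho_t(X)-h)]=0$ and $\E[\1_A(\rho_t(\alpha X)-\alpha\rho_t(X))]=0$ for every $A\in\F^W_t$; since the integrands are $\F^W_t$-measurable, they vanish $\P$-a.s., and the case $t=T$ is trivial. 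Thus $\rho$ is cash-additive (resp.\ positively homogeneous), and (a)$\Leftrightarrow$(b) returns (a), closing the equivalences. I expect the main obstacle to lie not in the ``vanishing right-derivative implies constant'' step itself, but in its two supporting facts: the continuity of $s\mapsto\E[\rho_s(X)]$, which has to be anchored on the $\P$-a.s.\ path continuity of the Brownian BSDE solution plus its class-$(D)$ property (false in general without continuity), and the zero-one law under the quadratic condition \ref{IT:BQ_condition}, which is not among the properties listed in Section~\ref{SEC:BSDEs} and must be obtained from uniqueness of the quadratic BSDE after checking that $(\1_A\rho(Y),\1_A Z(Y))$ is an admissible solution in $\mathcal S^\infty_T\times BMO(\R^m)$ for the terminal condition $\1_A Y$.
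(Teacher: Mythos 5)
Your proof is correct, and for the substantive implication it takes a genuinely different route from the paper's. Both arguments share the two analytic pillars you identify --- everywhere-existence of the right-derivative of $t\mapsto\E[\rho_t(\,\cdot\,)]$ (from the standing path-continuity assumption on $g(\,\cdot\,,\rho,Z)$ via Corollary~\ref{COR:resil_brownian}\itemref{IT:cor:resil_brown:all_t}) and continuity of that map on all of $[0,T]$ via uniform integrability --- and both ultimately rest on the converse representation theorems of Jiang and Zheng--Li. The difference lies in what is extracted from hypothesis (c). The paper uses cash-insensitivity only against deterministic constants $c$, deduces $\E[\rho_t(X+c)]=\E[\rho_t(X)]+c$ on $[0,T]$, evaluates at $t=0$, where $\F^W_0$ is trivial, to obtain $\rho_0(X+c)=\rho_0(X)+c$, and then invokes the cited theorems directly, since time-$0$ translation invariance by constants is exactly their hypothesis; no localization is needed. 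You instead exploit the full strength of (c) (shifts by arbitrary $h\in L^p\big(\F^W_t\big)$), obtain the identity in expectation at every time, and upgrade it to the pathwise identity via the zero-one law, thereby recovering dynamic cash-additivity (condition (b)) before passing back to (a). Your route proves a stronger intermediate statement and is more self-contained on the risk-measure side, at the price of the extra localization lemma, which under \ref{IT:BQ_condition} is not among the properties listed in Section~\ref{SEC:BSDEs} and must be checked by hand (you flag this correctly; the verification that $\big(\1_A\rho(Y),\1_AZ(Y)\big)$ is an admissible solution on $[t,T]$ is routine given $g(\,\cdot\,,0,0)=0$, uniqueness and the flow property). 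A minor further difference: the paper establishes two-sided differentiability of $t\mapsto\E[\rho_t(X)]$ on $(0,T)$, whereas you work with the right-derivative only and use the (correct) fact that a continuous function on a compact interval with everywhere-vanishing right-derivative is constant; either suffices, and both correctly avoid the a.e.-only pitfall that the paper discusses after the corollary in connection with the Brownian--Poissonian setting.
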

\begin{proof}
    Let us prove part \itemref{IT:COR:prop_brown_res_rate:cash_ins_converse}, as part \itemref{IT:COR:prop_brown_res_rate:pos_hom_converse} can be proved analogously.
    
    In the Brownian setting, and with the additional hypothesis of $g(\,\cdot\,,y,0)=0$ for any $y\in\R$, we already know from \cite[Theorem~3.1]{Jiang_2008_Convexity_translation_invariance_subadditivity_g-expectations_related_risk_measures} and \cite[Theorem~5.3]{Zheng+Li_2018_Representation_theorem_generators_quadratic_BSDEs} that the cash-additivity of $\rho$ is equivalent to $g$ being independent of $y$.
    
    Moreover, $\bm\dot\rho_t(X)$ exists for any choice of $t\in[0,T)$ and $X\in L^p\big(\F^W_T\big)$ by Corollary~\ref{COR:resil_brownian}\itemref{IT:cor:resil_brown:all_t}, which can be applied thanks to the assumption on the $\P$-a.s.\ pathwise continuity of the process $g\big(\,\cdot\,,\rho(X),Z(X)\big)$, for $X\in L^p\big(\F^W_T\big)$.
    
    It only remains to prove that the cash-insensitivity of the resilience rate implies that $g$ does not depend on the $y$ variable.
    Let us assume that $\bm\dot\rho_t(X)$ is cash-insensitive for all $t\in[0,T)$ and $X\in L^p\big(\F^W_T\big)$. 
    It is straightforward to verify that 
    \begin{equation}
    \label{EQ:COR:iff_1}
        \bm\dot\rho_t(X) =\frac{\d}{\d t }\mathbb{E}\left[\rho_t(X)\right],\qquad \forall\, t\in(0,T).
    \end{equation} 
    Indeed, by pathwise continuity of  $g(\,\cdot\,,\rho,Z)$ and the integral mean theorem, it holds $\P$-a.s.\ that
    \[
        \lim_{\eps\to0}\frac{1}{\varepsilon}\int_t^{t+\varepsilon}g(s,\rho_s,Z_s)\d s = g(t,\rho_t, Z_t), \qquad \forall\, t\in(0,T).
    \]
    Arguing exactly as in the second part of the proof of Proposition~2.2 in \cite{Jiang_2008_Convexity_translation_invariance_subadditivity_g-expectations_related_risk_measures}, it is easy to see that
    \[
        \lim_{\eps\to0}\frac{1}{\varepsilon}\mathbb{E}\left[\int_t^{t+\varepsilon}g(s,\rho_s,Z_s)\d s\right] = \mathbb{E}[g(t,\rho_t,Z_t)], \qquad \forall\, t\in(0,T).
    \]
    The last convergence immediately yields that, for all $t\in(0,T)$,
    \begin{align}
        \lim_{\eps\to0}\frac{1}{\eps}\mathbb{E}\left[\rho_{t+\varepsilon}(X)-\rho_t(X)\right]
        =-\lim_{\eps\to0}\frac{1}{\varepsilon}\mathbb{E}\left[\int_t^{t+\varepsilon}g(s,\rho_s,Z_s)\d s\right]  
        = -\mathbb{E}[g(t,\rho_t,Z_t)],
    \end{align}
    where the left-hand side defines the derivative in equation \eqref{EQ:COR:iff_1}, while the right-hand side equals the resilience rate $\bm\dot\rho_t(X)$ by Corollary~\ref{COR:resil_brownian}\itemref{IT:cor:resil_brown:all_t}, which applies  thanks to Proposition~\ref{PROP:verification_deterministic}.
    By the previous point and cash-insensitivity, we have, for any $c\in\R$, 
    \begin{equation}
    \label{EQ:COR:iff_2}
        \frac{\d}{\d t}\mathbb{E}\left[\rho_t(X)\right] 
        =\bm\dot\rho_t(X)
        =\bm\dot\rho_t(X+c)
        = \frac{\d}{\d t}\mathbb{E}\left[\rho_t(X+c)\right], \qquad \forall\, t\in(0,T).
    \end{equation}
    In particular, the real-valued function $t\mapsto \mathbb{E}\left[\rho_t(X)-\rho_t(X+c)\right]$ is differentiable on $(0,T)$ with null derivative. 
    It follows that there exists $K\in\R$ such that
    \begin{equation}
    \label{EQ:COR:iff_3}
        \mathbb{E}\left[\rho_t(X)\right] = \mathbb{E}\left[\rho_t(X+c)\right] + K, \qquad\forall\, t\in(0,T).
    \end{equation} 
    Furthermore, for all $\xi\in L^p\big(\F^W_T\big)$, the map $t\mapsto \mathbb{E}\left[\rho_t(\xi)\right]$ is continuous on the closed interval $[0,T]$. 
    To see this, let us recall that, in the Brownian setting, $\rho(\xi)$ is $\P$-a.s.\ pathwise continuous and verifies $\mathbb{E}\left[\sup_{t\in[0,T]}|\rho_t(\xi)|^2\right]<+\infty$,
    hence the family $\big(\rho_t(\xi)\big)_{t\in[0,T]}$ is uniformly $\P$-integrable. 
    By Vitali's theorem, we infer the continuity of $[0,T]\ni t\mapsto \mathbb{E}\left[\rho_t(\xi)\right]$, obtaining 
    \[
        \mathbb{E}\left[\rho_t(X)\right] = \mathbb{E}\left[\rho_t(X+c)\right] + K, \qquad \forall\, t\in [0,T].
    \]
    If we now take $t=T$, we have
    \[
        \mathbb{E}[X] 
        = \mathbb{E}\left[\rho_T(X)\right] 
        = \mathbb{E}\left[\rho_T(X+c)\right] + K 
        = \mathbb{E}[X+c] +K
        = \mathbb{E}[X] +c  +K, 
    \]
    which proves that $K=-c$, yielding $\mathbb{E}\left[\rho_t(X)\right] = \mathbb{E}\left[\rho_t(X+c)\right] -c $. 
    If we take $t=0$, and recall that $\F^W_0$ is the trivial $\s$-algebra, it results that $\rho_0(X+c) = \rho_0(X) + c$. 
    By \cite[Theorem~3.1]{Jiang_2008_Convexity_translation_invariance_subadditivity_g-expectations_related_risk_measures} and \cite[Theorem~5.3]{Zheng+Li_2018_Representation_theorem_generators_quadratic_BSDEs}, we know that the last condition is equivalent to $g$ being independent of $y$.
    This completes the chain of equivalences.
\end{proof}

The reason why the last corollary cannot be stated in the Brownian-Poissonian setting is that Corollary~\ref{COR:resil_brownian} is formulated only for the Brownian framework (see also Remark~\ref{REM:COR_brown_res_rate}).
Consequently, if we were to adopt a general Brownian-Poissonian filtration, Theorem~\ref{TH:resilience_jumps}\itemref{IT:th:resilience_jumps:deterministic} would imply the validity of equations~\eqref{EQ:COR:iff_1} and~\eqref{EQ:COR:iff_2} only for $\ell_1$-a.e.\ $t \in [0,T)$. This would not be sufficient to infer the existence of a single constant $K \in \mathbb{R}$ satisfying equation~\eqref{EQ:COR:iff_3}, not even for $\ell_1$-a.e.\ $t \in [0,T)$.
Indeed, the Cantor function provides a simple counterexample: it is everywhere right-differentiable and almost everywhere differentiable with zero derivative, yet it is not almost everywhere constant.

\subsection{Convexity and star-shapedness}
    Let us note that not all the standard properties of risk measures are naturally inherited by the resilience rate. 
    Whereas convexity or star-shapedness are relevant characteristics when measuring profits and losses, this does not necessarily pertain to evaluating financial resilience. 
    Indeed, while diversification benefits and liquidity risk are crucial when determining capital requirements, there is no fundamental financial reason to expect the same principles to apply when measuring the speed at which the risk evaluation is expected to recover after a stress scenario. 
    Nevertheless, we can still establish some interesting connections between the convexity or star-shapedness of the dynamic risk measure and the concavity or anti-star-shapedness of the resilience rate, though under rather restrictive assumptions.

\begin{proposition}
\label{PROP:convexity}
    Let $g$ satisfy any of the conditions \ref{IT:BL_condition}, \ref{IT:L_condition}, \ref{IT:BQ_condition}, or \ref{IT:Q_condition}, and the following assumptions:
    \begin{enumerate}[label=\arabic*., noitemsep, topsep=0pt]
        \item $g$ does not depend on $(z,u)$, i.e., $g:\Om\times[0,T]\times\R\ni(\om,t,y)\mapsto g(\om,t,y)\in\R$.
        \item \label{IT:COR:mon} $g$ is $\P\otimes\ell_1$-a.e.\ non-decreasing in $y$,
        \item \label{IT:COR:conc} $g$ is $\P\otimes\ell_1$-a.e.\ convex in $y$.
    \end{enumerate}
    Then for $\ell_1$-a.e.\ $t\in[0,T)$, $\bm\dot\rho_t$ is concave, namely, for all terminal conditions\footnote{The terminal conditions belong to the appropriate spaces specified in Proposition~\ref{PROP:verification_deterministic}.} $X_1,X_2$ and all $\l\in[0,1]$,
    \[
        \l\bm\dot\rho_t(X_1) + (1-\l)\bm\dot\rho_t(X_2)\leq \bm\dot\rho_t\big(\l X_1+(1-\l)X_2\big).
    \]
    Moreover, if we replace the property \itemref{IT:COR:conc} above with
    \begin{enumerate}[noitemsep, topsep=0pt]
        \item[3\,$'$.] $g$ is $\P\otimes\ell_1$-a.e. star-shaped in $y$,
    \end{enumerate}
    then for $\ell_1$-a.e.\ $t\in[0,T)$, $\bm\dot\rho_t$ is anti-star-shaped, namely, for all terminal conditions\footnotemark[\value{footnote}]  $X$ and all $\l\in[0,1]$,
    \[
         \l\bm\dot\rho_t(X) + (1-\l)\bm\dot\rho_t(0)\leq \bm\dot\rho_t(\l X).
    \]
\end{proposition}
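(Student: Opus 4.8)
The plan is to reduce the proposition to the identity $\bm\dot\rho_t(X)=-\E[g(t,\rho_t(X))]$ and then to transport the geometric property of $g$ (convexity, resp.\ star-shapedness) through the composition $X\mapsto\rho_t(X)\mapsto g\big(t,\rho_t(X)\big)\mapsto-\E[\,\cdot\,]$, using the monotonicity of $g$ in $y$ to pass inequalities through the middle arrow and the linearity of $-\E[\,\cdot\,]$ to pass them through the last one.

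\emph{Step 1 (representation).} Since $g$ depends on $y$ only, the integrability condition~\eqref{EQ:th:resilience_ass} (or its Brownian counterpart) holds automatically under any of \ref{IT:L_condition}, \ref{IT:Q_condition}, \ref{IT:BL_condition}, \ref{IT:BQ_condition} by Proposition~\ref{PROP:verification_deterministic}. Hence Theorem~\ref{TH:resilience_jumps}\itemref{IT:th:resilience_jumps:deterministic} (resp.\ Corollary~\ref{COR:resil_brownian}\itemref{IT:cor:resil_brown:all_t}) yields, for $\ell_1$-a.e.\ $t\in[0,T)$, $\bm\dot\rho_t(Y)=-\E\big[g(t,\rho_t(Y))\big]$ for every admissible terminal condition $Y$; since any single instance of the inequalities to be proved involves only finitely many terminal conditions, the exceptional $\ell_1$-null set of times may be chosen common to all of them.

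\emph{Step 2 (concavity).} Assumption~\itemref{IT:COR:conc} together with the driver-to-risk-measure correspondences recalled in Section~\ref{SEC:BSDEs} (for the quadratic conditions, via the comparison theorem applied directly) makes $\rho$ convex, so for $\l\in[0,1]$ and terminal conditions $X_1,X_2$, setting $X_\l:=\l X_1+(1-\l)X_2$, we have $\rho_t(X_\l)\le\l\rho_t(X_1)+(1-\l)\rho_t(X_2)$, $\P$-a.s. Applying successively the monotonicity of $g(t,\cdot)$ (assumption~\itemref{IT:COR:mon}) and its convexity (assumption~\itemref{IT:COR:conc}):
\[
    g\big(t,\rho_t(X_\l)\big)\ \le\ g\big(t,\l\rho_t(X_1)+(1-\l)\rho_t(X_2)\big)\ \le\ \l\,g\big(t,\rho_t(X_1)\big)+(1-\l)\,g\big(t,\rho_t(X_2)\big),\quad\P\text{-a.s.}
\]
Taking expectations, negating, and invoking Step~1 gives $\bm\dot\rho_t(X_\l)\ge\l\bm\dot\rho_t(X_1)+(1-\l)\bm\dot\rho_t(X_2)$, i.e.\ concavity of $\bm\dot\rho_t$.

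\emph{Step 3 (anti-star-shapedness) and main obstacle.} The same scheme applies with ``convex'' replaced by ``star-shaped'', but here the middle step is genuinely more delicate. Under the standing normalization $g(\,\cdot\,,0,0,0)=0$ one has $\rho_t(0)=0$; property $3'$ then makes $\rho$ star-shaped, so $\rho_t(\l X)\le\l\rho_t(X)+(1-\l)\rho_t(0)=\l\rho_t(X)$, $\P$-a.s.; monotonicity of $g(t,\cdot)$ gives $g(t,\rho_t(\l X))\le g(t,\l\rho_t(X))$, and star-shapedness of $g$ in $y$ gives $g(t,\l\rho_t(X))\le\l g(t,\rho_t(X))+(1-\l)g(t,0)=\l g(t,\rho_t(X))+(1-\l)g(t,\rho_t(0))$. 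Negating the expectation and using Step~1 yields $\l\bm\dot\rho_t(X)+(1-\l)\bm\dot\rho_t(0)\le\bm\dot\rho_t(\l X)$. The obstacle is precisely the reliance on $\rho_t(0)=0$: a merely star-shaped $g$ only controls chords of its graph emanating from $(0,g(t,0))$, so after the monotonicity step one must bring the second argument $\rho_t(0)$ to the origin — convexity would permit an arbitrary second argument, star-shapedness does not. This is why the normalization is the operative hypothesis for the star-shaped part and why that part is not a mechanical transcription of the convex one; the only remaining routine point is the common-null-set bookkeeping from Step~1, which moreover evaporates whenever $g(\,\cdot\,,\rho(X),Z,U)$ has $\P$-a.s.\ right-continuous paths (e.g.\ in the Brownian cases with $g$ continuous in $(t,y)$), since then the representation of Step~1 holds for every $t\in[0,T)$.
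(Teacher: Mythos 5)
Your argument for the concavity part is correct and coincides with the paper's: both rest on the representation $\bm\dot\rho_t(\,\cdot\,)=-\E\big[g\big(t,\rho_t(\,\cdot\,)\big)\big]$ supplied by Proposition~\ref{PROP:verification_deterministic} and Theorem~\ref{TH:resilience_jumps}\itemref{IT:th:resilience_jumps:deterministic}, and then chain the convexity of $g$ in $y$, the monotonicity of $g$ in $y$, and the convexity of $\rho_t$ (itself a consequence of the convexity of $g$); you merely traverse the same chain of inequalities in the opposite direction, so this is essentially the paper's proof.

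The substantive point concerns the star-shaped part, which the paper dispatches with ``proved analogously.'' Your explicit write-up is sound as a derivation, but it rests on a hypothesis that is not in the statement of Proposition~\ref{PROP:convexity}: you appeal to a ``standing normalization'' $g(\,\cdot\,,0,0,0)=0$ in order to conclude $\rho_t(0)=0$. No such normalization is assumed there --- it appears only inside condition~\ref{IT:Q_condition}, under which the driver is $y$-independent and hence, combined with assumption~1, trivial. Your diagnosis of \emph{why} it is needed is exactly right: star-shapedness of $g$ only controls chords emanating from $\big(0,g(t,0)\big)$, so after the monotonicity step one must both identify $g\big(t,\rho_t(0)\big)$ with $g(t,0)$ and collapse $\lambda\rho_t(X)+(1-\lambda)\rho_t(0)$ to $\lambda\rho_t(X)$, and the same normalization also underlies the passage from star-shapedness of $g$ in $y$ to star-shapedness of $\rho$ (which, unlike convexity, the overview in Section~\ref{SEC:BSDEs} records only in the Brownian Lipschitz case). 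So you have not so much proved the second assertion as shown that its ``analogous'' proof requires $g(\,\cdot\,,0)=0$ as an additional hypothesis; state it as an added assumption rather than as a standing one. With that caveat made explicit, the argument is complete, and your observation is a useful sharpening of what the paper leaves implicit.
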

\begin{proof}
    Let us fix terminal conditions $X_1,X_2$ and $\lambda\in[0,1]$.
    By Proposition~\ref{PROP:verification_deterministic} and Theorem~\ref{TH:resilience_jumps}\itemref{IT:th:resilience_jumps:deterministic} we have, for $\ell_1$-a.e.\ $t\in[0,T)$,
    \begin{align*}
        \lambda\bm\dot\rho_t(X_1) + (1-\lambda)\bm\dot\rho_t(X_2)
        &= -\mathbb{E}\big[\lambda g\big(t,\rho_t(X_1)\big)\big]-(1-\lambda)\E\big[g\big(t,\rho_t(X_2)\big)\big] \\
        &= -\mathbb{E}\big[\lambda g\big(t,\rho_t(X_1)\big)+(1-\lambda)g\big(t,\rho_t(X_2)\big)\big] \\
        &\leq -\mathbb{E}\big[g\big(t,\lambda\rho_t(X_1)+(1-\lambda)\rho_t(X_2)\big)\big] \\
        &\leq -\mathbb{E}\left[g\big(t,\rho_t(\l X_1+(1-\l)X_2)\big)\right] \\
        &=\bm\dot\rho_t\big(\lambda X_1+(1-\lambda)X_2\big).
    \end{align*}
    The first inequality holds by \itemref{IT:COR:conc}
    The second inequality is due to \itemref{IT:COR:mon} and to the convexity of $\rho_t$ at all times $t\in[0,T]$, which again follows from \itemref{IT:COR:conc} (see the overview on page~\pageref{PAGE:convex}).
    The final equality holds thanks to Theorem~\ref{TH:resilience_jumps}\itemref{IT:th:resilience_jumps:deterministic}. \\
    \noindent The second part of the statement can be proved analogously.
\end{proof}

Note that the previous corollary is useful, for instance, in the case of linear drivers. 
By Girsanov’s theorem, a linear BSDE can be transformed into a BSDE whose driver no longer depends on $(z,u)$, allowing to isolate exposure to, for example, ambiguous interest rates under the risk-neutral equivalent martingale measure. 
By Proposition~\ref{PROP:convexity}, the resilience rate is then automatically concave. 
In particular, any derivative priced using the Black-Scholes model will exhibit a concave resilience rate.
See also Section~\ref{sec:examples} for further details.

\subsection{Monotonicity}
The comparison theorem is a fundamental tool in the theory of BSDEs. 
One of its main applications in the context of risk measures is in establishing their monotonicity property.
We show that a similar result can be derived for the resilience rate, under appropriate assumptions.

\begin{proposition}
\label{PROP:res_rate_comp_th}
    Let $g_1,g_2$ be drivers satisfying the condition~\ref{IT:L_condition}
    and the following comparison condition: For $y_1,y_2\in\R$,
    \begin{equation}
    \label{EQ:PROP:RAS_neq_mon_cond}
       y_1\geq y_2\quad\Longrightarrow\quad 
       \inf_{(z,u)\in\R^m\times\Lambda^2} g_1(\,\cdot\,,y_1,z,u)\geq \sup_{(z,u)\in\R^m\times\Lambda^2} g_2(\,\cdot\,,y_2,z,u).
    \end{equation}
    Further, assume that $g_1$ satisfies the condition~\ref{IT:hp_comparison_jump}.
    Then the corresponding resilience rates $\bm\dot\rho(g_1)$, $\bm\dot\rho(g_2)$ are monotone --- that is, they satisfy a form of comparison principle --- in the following sense: 
    For $X_1, X_2\in L^2(\F_T)$,
    \[
        X_1\geq X_2 \quad \Longrightarrow\quad  \bm\dot\rho_t(g_1,X_1)\leq \bm\dot\rho_t(g_2,X_2),\qquad \ell_1\text{-a.e. } t\in[0,T).
    \]
\end{proposition}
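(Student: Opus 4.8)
The plan is to deduce the claim entirely from the bouncing-drift representation of Theorem~\ref{TH:resilience_jumps}\itemref{IT:th:resilience_jumps:deterministic} together with two applications of the comparison condition~\eqref{EQ:PROP:RAS_neq_mon_cond}: one through the classical comparison theorem for BSDEs with jumps, and one pointwise at the solution processes.

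First I would record the representation. Since $g_1$ and $g_2$ satisfy~\ref{IT:L_condition} and $X_1,X_2\in L^2(\F_T)$, Proposition~\ref{PROP:verification_deterministic} guarantees the hypothesis of Theorem~\ref{TH:resilience_jumps}\itemref{IT:th:resilience_jumps:deterministic}, so for $\ell_1$-a.e.\ $t\in[0,T)$ and $i\in\{1,2\}$,
\[
    \bm\dot\rho_t(g_i,X_i)=-\E\big[g_i\big(t,\rho_t(g_i,X_i),Z_t(g_i,X_i),U_t(g_i,X_i)\big)\big].
\]
Hence it suffices to show that, for $\ell_1$-a.e.\ $t$, the expectation of the $g_1$-generator evaluated at $\big(\rho_t(g_1,X_1),Z_t(g_1,X_1),U_t(g_1,X_1)\big)$ dominates the expectation of the $g_2$-generator evaluated at $\big(\rho_t(g_2,X_2),Z_t(g_2,X_2),U_t(g_2,X_2)\big)$.

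Second, I would establish the pathwise ordering $\rho_t(g_1,X_1)\ge\rho_t(g_2,X_2)$ for every $t\in[0,T]$, $\P$-a.s. Taking $y_1=y_2$ in~\eqref{EQ:PROP:RAS_neq_mon_cond} yields in particular $g_1(\,\cdot\,,y,z,u)\ge g_2(\,\cdot\,,y,z,u)$ for all $(y,z,u)$, $\P\otimes\ell_1$-a.e.; combined with $X_1\ge X_2$ and the assumption that $g_1$ satisfies~\ref{IT:hp_comparison_jump}, the comparison theorem for BSDEs with jumps (cf.\ Remark~\ref{REM:condition_C} and \cite[Theorem~2.5]{Royer_2006_Backward_stochastic_differential_equations_jumps_related_non-linear_expectations}) delivers the ordering. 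Here one must watch the direction in which the comparison theorem is invoked: the required driver inequality is needed at the solution of the \emph{smaller} BSDE, i.e.\ $g_1\big(\,\cdot\,,\rho(g_2,X_2),Z(g_2,X_2),U(g_2,X_2)\big)\ge g_2\big(\,\cdot\,,\rho(g_2,X_2),Z(g_2,X_2),U(g_2,X_2)\big)$, which is exactly~\eqref{EQ:PROP:RAS_neq_mon_cond} at equal arguments, while~\ref{IT:hp_comparison_jump} on the larger driver $g_1$ is precisely the extra ingredient that the jump comparison theorem needs beyond~\ref{IT:L_condition}.

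Third, I would fix a time $t$ in the set of full Lebesgue measure on which the representation of the first step holds, the ordering of the second step holds $\P$-a.s., and the $\P\otimes\ell_1$-a.e.\ inequality~\eqref{EQ:PROP:RAS_neq_mon_cond} holds $\P$-a.s.\ in $\om$ (that such $t$ form a full-measure set is a Fubini argument). For $\P$-a.e.\ $\om$, applying~\eqref{EQ:PROP:RAS_neq_mon_cond} with $y_1=\rho_t(g_1,X_1)(\om)\ge\rho_t(g_2,X_2)(\om)=y_2$ and bounding $g_1$ at $\big(Z_t(g_1,X_1)(\om),U_t(g_1,X_1)(\om)\big)$ below by the infimum over $(z,u)$, which dominates the supremum over $(z,u)$, which in turn dominates $g_2$ at $\big(Z_t(g_2,X_2)(\om),U_t(g_2,X_2)(\om)\big)$, gives the pointwise inequality
\[
    g_1\big(\om,t,\rho_t(g_1,X_1),Z_t(g_1,X_1),U_t(g_1,X_1)\big)\ \ge\ g_2\big(\om,t,\rho_t(g_2,X_2),Z_t(g_2,X_2),U_t(g_2,X_2)\big).
\]
Taking $\P$-expectations and inserting the first-step representation yields $\bm\dot\rho_t(g_1,X_1)\le\bm\dot\rho_t(g_2,X_2)$ for $\ell_1$-a.e.\ $t\in[0,T)$, as claimed. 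I expect the main obstacle to be organizational rather than conceptual: reconciling the three distinct ``almost everywhere'' qualifications (the $\ell_1$-a.e.\ set of times where the bouncing-drift representation is valid, the $\P$-a.s.\ pathwise ordering of the risk processes, and the $\P\otimes\ell_1$-a.e.\ comparison condition), and checking carefully that~\ref{IT:L_condition} for both drivers plus~\ref{IT:hp_comparison_jump} for $g_1$ really do supply exactly the hypotheses of the jump comparison theorem; once the ordering $\rho_t(g_1,X_1)\ge\rho_t(g_2,X_2)$ is in hand, the twofold use of~\eqref{EQ:PROP:RAS_neq_mon_cond} does all the remaining work.
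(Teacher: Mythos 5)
Your proposal is correct and follows essentially the same route as the paper's proof: first invoke the comparison theorem for BSDEs with jumps (using condition \itemref{IT:hp_comparison_jump} on $g_1$ and the driver ordering obtained from \eqref{EQ:PROP:RAS_neq_mon_cond} with $y_1=y_2$) to get $\rho_t(g_1,X_1)\geq\rho_t(g_2,X_2)$, then apply \eqref{EQ:PROP:RAS_neq_mon_cond} at these ordered arguments together with the bouncing-drift representation of Theorem~\ref{TH:resilience_jumps}\itemref{IT:th:resilience_jumps:deterministic}. Your additional care about the direction in which Royer's comparison theorem is applied and about reconciling the various almost-everywhere qualifications is sound but does not change the substance of the argument.
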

\begin{proof}
    Assume the hypotheses and fix $X_1, X_2\in L^2(\F_T)$ such that $X_1\geq X_2$. 
    Let us highlight that the condition \itemref{IT:hp_comparison_jump} on $g_1$ and our comparison condition in equation~\eqref{EQ:PROP:RAS_neq_mon_cond} allow us to apply the comparison theorem for BSDEs with jumps (see \cite[Theorem~2.5]{Royer_2006_Backward_stochastic_differential_equations_jumps_related_non-linear_expectations}) and infer that, for any $t\in[0,T]$, $\rho_t(g_1,X_1)\geq \rho_t(g_2,X_2)$, $\P$-a.s.
    Then, using Theorem~\ref{TH:resilience_jumps}\itemref{IT:th:resilience_jumps:deterministic} and equation~\eqref{EQ:PROP:RAS_neq_mon_cond} we obtain:
    \begin{align*}
        \bm\dot\rho_t(g_1,X_1)
        &=  -\mathbb{E}\left[g_1\big(t,\rho_t(g_1,X_1),Z_t(g_1,X_1), U_t(g_1,X_1)\big)\right] \\
        &\leq -\mathbb{E}\left[g_2\big(t,\rho_t(g_2,X_2),Z_t(g_2,X_2), U_t(g_2,X_2)\big)\right] \\
        &= \bm\dot\rho_t(g_2,X_2), \qquad \ell_1\text{-a.e. } t\in[0,T].  
    \end{align*}
\end{proof}

The comparison condition introduced in equation~\eqref{EQ:PROP:RAS_neq_mon_cond} is stronger than the pointwise order relation in $y$ for the functions $g_1$, $g_2$, as can be seen by taking $y_1=y_2$. 
Also, it is strictly stronger, as the simple counterexample $g_i(\om,t,y,z,u)=-i+y$, for $i=1,2$, shows.
Moreover, in the simpler case where $g_1=g_2=:g$, the condition reduces to $g$ being $\P\otimes\ell_1$-a.e.\ non-decreasing in $y$ and independent of $(z,u)$.

Let us remark that, in the Brownian setting, the assumption on $g_1$ satisfying the condition \itemref{IT:hp_comparison_jump} can be omitted, as the Brownian version of the comparison theorem in \cite[Theorem~4.4.1]{Zhang_2017_Backward_stochastic_differential_equations} can be applied in place of its Brownian-Poissonian counterpart in \cite[Theorem~2.5]{Royer_2006_Backward_stochastic_differential_equations_jumps_related_non-linear_expectations}, and the same result still holds. 
See also Remark~\ref{REM:condition_C}.

\subsection{Continuity}
\label{SEC:continuity}

Another important property for risk measures is continuity from above/below, which is intimately related to the concepts of lower semi-continuity and the Lebesgue property of the risk functional (see, e.g., \cite{Biagini+Frittelli_2009_Extension_Namioka-Klee_theorem_Fatou_property_risk_measures} for an exhaustive overview of these topics). 
The financial meaning of continuity from above/below is intuitive: it ensures that the limit of a sequence of acceptable (or unacceptable) positions remains acceptable (or unacceptable).
We aim to extend this concept to resilience rates: given a sequence of resilient parameters, their limit should also be resilient (see also Section~\ref{SEC:res_acc_fam} for further discussion).
In the following, we provide sufficient conditions under which this intuition holds.

We start with a general lemma for BSDEs with jumps; Step~1 in its proof extends Theorem~4.2.3 in \cite{Zhang_2017_Backward_stochastic_differential_equations}.
\begin{lemma}
\label{LEM:stability_jumps}
    Let us consider a sequence of drivers $(g^n)_{n\in\N}$ such that $g^n(\,\cdot\,,0,0,0)\in L^2_T$, for all $n\in\N$, and satisfying the following uniform Lipschitz condition:
    \begin{enumerate}[label=\arabic*., noitemsep, topsep=0pt]
        \item 
        \label{IT:LEM:stability_jumps:uniform_lipschitz}
        There exists $K\geq 0$ such that, for all $n\in\N$, $y_1,y_2\in\R$, $z_1,z_2\in\R^m$, $u_1,u_2\in\Lambda^2$:
        \[
            \big|g^n(\,\cdot\,,y_1,z_1,u_1)-g^n(\,\cdot\,,y_2,z_2,u_2)\big|\leq K\big(|y_1-y_2|+\norm{z_1-z_2} + \|u_1-u_2\|_{\Lambda^2}\big).
        \]
    \end{enumerate}
    Further, assume that $g:\Om\times[0,T]\times\R\times\R^m\times\Lambda^2\to\R$ satisfies the following conditions:
    \begin{enumerate}[resume, label=\arabic*., noitemsep, topsep=0pt]
        \item 
        \label{IT:LEM:stability_jumps:convergence_g^n}
        For all $(y,z,u)\in \R\times \R^m\times \Lambda^2$, $g^n(\,\cdot\,,y,z,u)\longrightarrow g(\,\cdot\,,y,z,u)$ $\P\otimes\ell_1\text{-a.e.}$
        \item $g^n(\,\cdot\,,0,0,0)\longrightarrow g(\,\cdot\,,0,0,0)$  in $L^2(\Om\times[0,T])$.
    \end{enumerate}
    Let $(X^n)_{n\in\N}$ be a convergent sequence in $L^2(\F_T)$, and let $X\in L^2(\F_T)$ be its limit.
    If we denote the solutions to the BSDE \eqref{EQ:BSDE} with parameters $(g,T,X)$ and $(g^n,T,X^n)$, for $n\in\N$, by $(Y,Z,U)$ and $(Y^n,Z^n,U^n)$, respectively, then it results that
    \[
        g^{n}(\,\cdot\,,Y^n,Z^n,U^n)\longrightarrow g(\,\cdot\,,Y,Z,U) \quad \text{ in } L^2(\Om\times[0,T]).
    \]
\end{lemma}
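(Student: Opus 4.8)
The argument proceeds in three stages: a measure-theoretic step controlling the driver mismatch at the limiting solution, the stability estimate for the jump BSDE, and a short assembly.

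First, I would introduce the ``frozen'' mismatch process $\Delta^n_s:=g^n(s,Y_s,Z_s,U_s)-g(s,Y_s,Z_s,U_s)$, where $(Y,Z,U)$ solves the limit BSDE, and show that $\Delta^n\to 0$ in $L^2(\Om\times[0,T])$. The difficulty is that assumption~\ref{IT:LEM:stability_jumps:convergence_g^n} only gives $\P\otimes\ell_1$-a.e.\ convergence for each \emph{fixed} argument $(y,z,u)$, whereas $(Y_s,Z_s,U_s)$ is random. To bridge this, I fix a countable dense set $\mathcal D\subset\R\times\R^m\times\Lambda^2$; intersecting the corresponding countably many exceptional sets, there is a $\P\otimes\ell_1$-full-measure set of $(\om,s)$ on which $g^n(\om,s,q)\to g(\om,s,q)$ for every $q\in\mathcal D$, and on that set the equi-Lipschitz bound~\ref{IT:LEM:stability_jumps:uniform_lipschitz} upgrades this to convergence of $g^n(\om,s,\cdot)$ at \emph{every} point of $\R\times\R^m\times\Lambda^2$, in particular $\Delta^n_s(\om)\to 0$ for $\P\otimes\ell_1$-a.e.\ $(\om,s)$. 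For the $L^2$ convergence I would use the bound $|g^n(s,Y_s,Z_s,U_s)|\le|g^n(s,0,0,0)|+K(|Y_s|+\norm{Z_s}+\|U_s\|_{\Lambda^2})$ together with $g^n(\,\cdot\,,0,0,0)\to g(\,\cdot\,,0,0,0)$ in $L^2_T$, and invoke a Pratt-type (generalized dominated convergence) argument: along any subsequence, extract a further subsequence on which $g^{n_k}(\,\cdot\,,0,0,0)$ also converges $\P\otimes\ell_1$-a.e.; then $|\Delta^{n_k}_s|^2$ is dominated by $g_k:=\big(|g^{n_k}(s,0,0,0)|+K(|Y_s|+\norm{Z_s}+\|U_s\|_{\Lambda^2})+|g(s,Y_s,Z_s,U_s)|\big)^2$, which converges a.e.\ and in $L^1(\Om\times[0,T])$-norm (expand the square and use $L^2_T$-convergence of $g^{n_k}(\,\cdot\,,0,0,0)$ against the $L^2_T$ majorant), so Pratt's lemma yields $\Delta^{n_k}\to 0$ in $L^2$; the subsequence principle then gives $\Delta^n\to 0$ in $L^2(\Om\times[0,T])$.

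Second --- this is ``Step~1'' in the statement and the analytic core --- I would establish a stability estimate of the form
\[
    \E\Big[\sup_{t\in[0,T]}|Y^n_t-Y_t|^2\Big]+\E\Big[\int_0^T\!\big(\norm{Z^n_s-Z_s}^2+\|U^n_s-U_s\|_{\Lambda^2}^2\big)\d s\Big]\le C(K,T)\,\Big(\E\big[|X^n-X|^2\big]+\E\Big[\int_0^T\!|\Delta^n_s|^2\d s\Big]\Big).
\]
The proof is the classical one adapted to jumps: apply It\^o's formula to $e^{\b s}|Y^n_s-Y_s|^2$ with $\b$ large, write $g^n(s,Y^n_s,Z^n_s,U^n_s)-g(s,Y_s,Z_s,U_s)=\big[g^n(s,Y^n_s,Z^n_s,U^n_s)-g^n(s,Y_s,Z_s,U_s)\big]+\Delta^n_s$, bound the first bracket by $K(|Y^n_s-Y_s|+\norm{Z^n_s-Z_s}+\|U^n_s-U_s\|_{\Lambda^2})$ via~\ref{IT:LEM:stability_jumps:uniform_lipschitz}, absorb the quadratic terms in $Z^n-Z$ and $U^n-U$ using Young's inequality and the isometry for the compensated Poisson integral (recall $U\in\mathcal H^2_T(\Lambda^2)$), take expectations so the martingale parts vanish, apply Gr\"onwall, and recover the $\sup_t$ bound through the Burkholder--Davis--Gundy inequality. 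Since $X^n\to X$ in $L^2(\F_T)$ by hypothesis and $\Delta^n\to 0$ in $L^2(\Om\times[0,T])$ by the first step, the right-hand side tends to $0$, so $(Y^n,Z^n,U^n)\to(Y,Z,U)$ in $\mathcal S^2_T\times\mathcal H^2_T(\R^m)\times\mathcal H^2_T(\Lambda^2)$.

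Third, I assemble the conclusion: $g^n(\,\cdot\,,Y^n,Z^n,U^n)-g(\,\cdot\,,Y,Z,U)=\big[g^n(\,\cdot\,,Y^n,Z^n,U^n)-g^n(\,\cdot\,,Y,Z,U)\big]+\Delta^n$, where the first bracket has $L^2(\Om\times[0,T])$-norm at most $K$ times $\|Y^n-Y\|_{L^2_T}+\|Z^n-Z\|_{\mathcal H^2_T(\R^m)}+\|U^n-U\|_{\mathcal H^2_T(\Lambda^2)}\to 0$ by the stability estimate, while $\Delta^n\to 0$ by the first step; hence the sum converges to $0$ in $L^2(\Om\times[0,T])$, which is the claim. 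I expect the main obstacle to be the first step: promoting the a.e.\ convergence of the drivers at fixed arguments to genuine $L^2$-convergence of $\Delta^n$ requires combining the equi-Lipschitz upgrade on a dense set with a careful Pratt-type dominated convergence argument run along subsequences, whereas the stability estimate, though technical, is routine once that input is available.
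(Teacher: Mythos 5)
Your proposal is correct and follows essentially the same route as the paper: the same a priori stability estimate in $\mathcal S^2_T\times\mathcal H^2_T(\R^m)\times\mathcal H^2_T(\Lambda^2)$ (which the paper obtains by linearizing the difference of the two BSDEs and citing an existing a priori bound rather than re-running It\^o/Gr\"onwall/BDG), the same decomposition $g^n(\,\cdot\,,\Xi^n)-g(\,\cdot\,,\Xi)=\big[g^n(\,\cdot\,,\Xi^n)-g^n(\,\cdot\,,\Xi)\big]+\Delta^n$, and the same dominated-convergence treatment of the frozen mismatch $\Delta^n=g^n(\,\cdot\,,Y,Z,U)-g(\,\cdot\,,Y,Z,U)$. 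Your additional care on the two points the paper glosses over --- upgrading the fixed-argument $\P\otimes\ell_1$-a.e.\ convergence to convergence at the random argument via a countable dense subset of $\R\times\R^m\times\Lambda^2$ and the equi-Lipschitz bound, and replacing a single dominating function by a Pratt-type generalized dominated convergence argument along subsequences --- is welcome and, if anything, tightens the published argument.
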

\begin{proof}
    We denote for brevity $\Xi:=(Y,Z,U)$, $\Xi^n:=(Y^n,Z^n,U^n)$, and $\Delta \Psi^n:=\Psi^n-\Psi$, for $n\in\N$ and $\Psi\in\{g,X,Y,Z,U\}$.
    
\textit{Step} $1$.
    We start by showing the validity of the following estimate.
    There exists $C>0$ such that, for all $n\in\N$,
    \begin{equation}
    \label{EQ:LEM:stability_jumps_step1_begin}
        \E\left[\sup_{t\in[0,T]}|\Delta Y^n_t|^2 + \int_0^T\norm{\Delta Z^n_t}^2 + \|\Delta U^n_t\|^2_{\Lambda^2}\d t\right] 
        \leq \, C\,\E\left[|\Delta X^n|^2
        +\int_0^T|\Delta g^n(t,\Xi_t)|^2\d t \right].
    \end{equation}
    
    Let us fix $n\in\N$. 
    By subtracting, member by member, the BSDE with parameters $(g,T,X)$ and solution $(Y,Z,U)$ from the BSDE with parameters $(g^n,T,X^n)$ and solution $\Xi^n$, we obtain, for all $t\in[0,T]$, $\P$-a.s.,
    \begin{align*}
        \Delta Y^n_t 
        = \Delta X^n 
        &+\int_t^T g^n(s,\Xi^n_s)-g(s,\Xi_s)\d s- \int_t^T\Delta Z^n\cdot \d W_s - \int_{(t,T]\times\R^d_\ast}\Delta U^n_s(x) \d \tilde N(s,x)\\
        = \Delta X^n&+\int_t^T \Delta g^n(s,\Xi_s) + \a_s\Delta Y^n_s + \b_s\cdot\Delta Z^n_s + 
        \big\langle\g_s,\Delta U^n_s\big\rangle^{}_{\Lambda^2}
        \d s\\
        &- \int_t^T\Delta Z^n_s\cdot \d W_s - \int_{(t,T]\times\R^d_\ast}\Delta U^n_s(x)\d \tilde N(s,x),
    \end{align*}
    where we introduced $(\a,\b,\g):\Om\times[0,T]\to\R\times\R^m\times\Lambda^2$ such that
    \begin{align*}
        \a &:= \frac{g^n(\,\cdot\,,\Xi^n)-g^n(\,\cdot\,,Y,Z^n,U^n)}{\Delta Y^n}\1_{\R_\ast}(\Delta Y^n_{}),\\
        \b &:= \frac{g^n(\,\cdot\,,Y,Z^n,U^n)-g^n(\,\cdot\,,Y,Z,U^n)}{\norm{\Delta Z^n}^2}\Delta Z^n\1_{\R^m_\ast}(\Delta Z^n_{}),\\
        \g &:= \frac{g^n(\,\cdot\,,Y,Z,U^n)-g^n(\,\cdot\,,\Xi)}{\|{\Delta U^n_{}}\|_{\Lambda^2}^2}\Delta U^n\1_{\Lambda^2\setminus\{0\}}(\Delta U^n).
    \end{align*}
    Let us observe that $|\a|, \norm{\b}, \|\g\|_{\Lambda^2}\leq K$, $\P\otimes\ell_1\text{-a.e.}$, by means of assumption \itemref{IT:LEM:stability_jumps:uniform_lipschitz}
    Therefore, if we introduce $\tilde g:\Om\times[0,T]\times\R\times\R^m\times\Lambda^2\to \R$ such that 
    \[
        \tilde g(\om,t,y,z,u)= \Delta g^n\big(\om,t,\Xi_t(\om)\big) + \a_t(\om) y + \b_t(\om)\cdot z + \big\langle\g_t(\om),u\big\rangle^{}_{\Lambda^2},
    \]
    then $\tilde g$ is a driver satisfying the condition~\ref{IT:L_condition}.
    Consequently, $(\Delta Y^n, \Delta Z^n, \Delta U^n)$ is the unique solution to the BSDE \eqref{EQ:BSDE} with parameters $(\tilde g, T, \Delta X^n)$.
    By \cite[Proposition~2.1]{Tang+Wei_2012_Representation_dynamic_time-consistent_convex_risk_measures_jumps}, there exists $C>0$ such that
    \begin{equation}
    \label{EQ:LEM:stability_jumps_step1}
        \E\left[\sup_{t\in[0,T]}|\Delta Y^n_t|^2 + \int_0^T\norm{\Delta Z^n_t}^2 + \|\Delta U^n_t\|^2_{\Lambda^2}\d t\right] 
        \leq \, C\,\E\left[|\Delta X^n|^2 + \int_0^T|\tilde g(t,0,0,0)|^2\d t \right].
    \end{equation}
    Estimate \eqref{EQ:LEM:stability_jumps_step1_begin} follows from the linearity of the expectation and the definition of $\tilde g$.

\textit{Step} $2$.
    Let us now prove the statement of the lemma.
    Fix $n\in\N$.
    By the Young inequality and the linearity of both expectation and integral, we have:
    {\small{\begin{equation}
    \label{EQ:LEM:stability_step2_begin}
        \E\left[\int_0^T\big|g^n(t,\Xi^n_t)-g(t,\Xi_t)\big|^2\d t \right]\leq 2\E\left[\int_0^T\big|g^n(t,\Xi^n_t)-g^n(t,\Xi_t)\big|^2\d t\right] + 2\, \E\left[\int_0^T|\Delta g^n(t,\Xi_t)|^2\d t \right].
    \end{equation}}}
    The first expectation in the right-hand member can be estimated as follows:
    \begin{align*}
        \E\left[\int_0^T\big|g^n(t,\Xi^n_t)-g^n(t,\Xi_t)\big|^2\d t\right]
        &\leq 3K^2\E\left[\int_0^T|\Delta Y^n_t|^2 + \norm{\Delta Z^n_t}^2 + \|\Delta U^n_t\|^2\d t\right]\\
        &\leq C_T\E\left[\sup_{t\in[0,T]}|\Delta Y^n_t|^2 + \int_0^T\norm{\Delta Z^n_t}^2 + \|\Delta U^n_t\|^2\d t\right]\\
        &\leq C'_T\E\big[|\Delta X^n|^2\big] + C'_T\E\left[\int_0^T|\Delta g^n(t,\Xi_t)|^2\d t\right],
    \end{align*}
    where we first used the uniform Lipschitz property in assumption \itemref{IT:LEM:stability_jumps:uniform_lipschitz} together with the well-known inequality $(x+y+z)^2\leq 3(x^2+y^2+z^2)$ for $x,y,z\in\R$.
    For the second inequality, we estimated the first integrand with its supremum in time, integrated it, and introduced the constant $C_T:=3K^2(T\vee 1)>0$.
    For the last inequality, we resorted to the equation~\eqref{EQ:LEM:stability_jumps_step1} from the previous step of the proof, and renamed $C'_T:=CC_T$.
    
    Let us now insert this estimate back into equation~\eqref{EQ:LEM:stability_step2_begin}. 
    We find a new positive constant ${C''_T:=2C'_T+2}$ such that
    \begin{equation}
    \label{EQ:LEM:stability_jumps_step2}
    \begin{aligned}
        \E\left[\int_0^T\big|g^n(t,\Xi^n_t)-g(t,\Xi_t)\big|^2\d t \right]
        \leq C''_T\E\big[|\Delta X^n|^2\big] + C''_T\E\left[\int_0^T|\Delta g^n(t,\Xi_t)|^2\d t\right].
    \end{aligned}
    \end{equation}
    The first term in the last member is infinitesimal as $n\to 0$, by the assumptions of the lemma.
    Concerning the second term, we aim to apply Lebesgue's dominated convergence theorem in $L^2\big(\Om\times[0,T]\big)$ to the sequence of processes $\big(\Delta g^n(\,\cdot\,,\Xi)\big)_{n\in\N}$.
    First, we have $\Delta g^n(\,\cdot\,,\Xi)\longrightarrow 0$, $\P\otimes\ell_1$-a.e., by assumption \itemref{IT:LEM:stability_jumps:convergence_g^n}.
    Moreover, it is straightforward to verify that the convergence in assumption \itemref{IT:LEM:stability_jumps:convergence_g^n} allows us to extend the uniform Lipschitzianity in assumption \itemref{IT:LEM:stability_jumps:uniform_lipschitz} to the family $(\Delta g^n)_{n\in\N}$. 
    More precisely, there exists a constant $K>0$, possibly different from the one in assumption \itemref{IT:LEM:stability_jumps:uniform_lipschitz}, such that, for all $n\in\N$, $y_1,y_2\in\R$, $z_1,z_2\in\R^m$, $u_1,u_2\in\Lambda^2$,
    \[
        \big|\Delta g^n(\,\cdot\,,y_1,z_1,u_1)-\Delta g^n(\,\cdot\,,y_2,z_2,u_2)\big|\leq K\big(|y_1-y_2|+\norm{z_1-z_2} + \|u_1-u_2\|_{\Lambda^2}\big).
    \]
    Thanks to this property, and to the hypothesis \itemref{IT:LEM:stability_jumps:convergence_g^n} for $(y,z,u)=(0,0,0)=:\underline 0\in \R\times\R^m\times\Lambda^2$, we have, for sufficiently large $n\in\N$,
    \begin{align*}
        \big|\Delta g^n\big(\,\cdot\,,\Xi\big)\big|
        \leq  \big|\Delta g^n\big(\,\cdot\,,\Xi\big)-\Delta g^n(\,\cdot\,,\underline{0})\big| + \big|\Delta g^n(\,\cdot\,,\underline{0})\big| 
        \leq K\big(|Y|+\norm{Z} + \|U\|_{\Lambda^2} +1\big).
    \end{align*}
    The stochastic process in the last member belongs to $L^2\big(\Om\times[0,T]\big)$, because $(Y,Z,U)\in\mathcal S^2_T\times\mathcal H^2_T(\R^m)\times \mathcal H^2_T(\Lambda^2)$.
    Therefore, Lebesgue's dominated convergence theorem yields $\Delta g^n(\,\cdot\,,\Xi)\longrightarrow 0$ in $L^2\big(\Om\times[0,T]\big)$ as $n\to\infty$.
    The thesis follows from computing the limit as $n\to\infty$ to both members of equation~\eqref{EQ:LEM:stability_jumps_step2}, thanks to the squeeze theorem.
\end{proof}

\begin{proposition}
    Assume the hypotheses of Lemma~\ref{LEM:stability_jumps}, and the Notation~\ref{NOT:res}.
    We have:
    \[
        \bm\dot\rho(g^n,X^n)\longrightarrow \bm\dot\rho(g,X) \quad \text{ in } L^2(0,T).
    \]
\end{proposition}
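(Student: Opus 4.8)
The plan is to combine the bouncing-drift representation of Theorem~\ref{TH:resilience_jumps}\itemref{IT:th:resilience_jumps:deterministic} with the stability estimate just established in Lemma~\ref{LEM:stability_jumps}. First, I would check that each $g^n$ and the limit driver $g$ satisfy condition~\ref{IT:L_condition}: a function that is globally Lipschitz in $(y,z,u)$ with constant $K$ and whose value at $(0,0,0)$ lies in $L^2_T$ automatically satisfies the linear growth bound (with $\varphi_t=|g^n(t,0,0,0)|$), the one-sided Lipschitz condition in $y$ (with $\a=K$), and continuity in $y$; moreover $g$ inherits the Lipschitz constant $K$ as a pointwise limit and has $g(\,\cdot\,,0,0,0)\in L^2_T$ by assumption. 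Writing $\Xi:=(Y,Z,U)$ and $\Xi^n:=(Y^n,Z^n,U^n)$ as in the proof of Lemma~\ref{LEM:stability_jumps}, the linear growth bound together with $Y^n\in\mathcal S^2_T$, $Z^n\in\mathcal H^2_T(\R^m)$, $U^n\in\mathcal H^2_T(\Lambda^2)$ (and analogously for $\Xi$) gives $g^n(\,\cdot\,,\Xi^n),\,g(\,\cdot\,,\Xi)\in L^2_T$, so hypothesis~\eqref{EQ:th:resilience_ass} of Theorem~\ref{TH:resilience_jumps}\itemref{IT:th:resilience_jumps:deterministic} holds with $q=2$ (which in particular places $\bm\dot\rho(g^n,X^n)$ and $\bm\dot\rho(g,X)$ in $L^2(0,T)$). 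Hence, for $\ell_1$-a.e.\ $t\in[0,T)$,
\[
    \bm\dot\rho_t(g^n,X^n)=-\E\big[g^n(t,\Xi^n_t)\big],\qquad \bm\dot\rho_t(g,X)=-\E\big[g(t,\Xi_t)\big].
\]

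Next, I would bound the $L^2(0,T)$-distance between the two resilience rates by the $L^2(\Om\times[0,T])$-distance between the composed drivers. For $\ell_1$-a.e.\ $t$, Jensen's inequality (or Cauchy--Schwarz) applied to the expectation over $\Om$ gives
\[
    \big|\bm\dot\rho_t(g^n,X^n)-\bm\dot\rho_t(g,X)\big|^2
    =\big|\E\big[g^n(t,\Xi^n_t)-g(t,\Xi_t)\big]\big|^2
    \leq \E\big[|g^n(t,\Xi^n_t)-g(t,\Xi_t)|^2\big],
\]
and integrating over $t\in[0,T)$, together with Tonelli's theorem, yields
\[
    \big\|\bm\dot\rho(g^n,X^n)-\bm\dot\rho(g,X)\big\|_{L^2(0,T)}^2
    \leq \E\left[\int_0^T\big|g^n(t,\Xi^n_t)-g(t,\Xi_t)\big|^2\d t\right]
    =\big\|g^n(\,\cdot\,,\Xi^n)-g(\,\cdot\,,\Xi)\big\|_{L^2(\Om\times[0,T])}^2.
\]

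Finally, Lemma~\ref{LEM:stability_jumps} states precisely that $g^n(\,\cdot\,,\Xi^n)\to g(\,\cdot\,,\Xi)$ in $L^2(\Om\times[0,T])$, so the right-hand side of the last display tends to $0$ as $n\to\infty$, and the claim follows. The substantive work is carried entirely by the stability Lemma~\ref{LEM:stability_jumps} (which itself rests on the a priori estimate in its Step~1); the only points requiring attention here are the verification that $g$ and each $g^n$ fulfil condition~\ref{IT:L_condition}, so that the bouncing-drift representation is available, and the elementary passage via Jensen's inequality from the $L^2(\Om\times[0,T])$ norm of the composed drivers to the $L^2(0,T)$ norm of the resilience rates. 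I do not anticipate any further obstacle.
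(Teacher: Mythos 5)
Your argument is correct and follows essentially the same route as the paper's proof: represent both resilience rates via the bouncing-drift formula of Theorem~\ref{TH:resilience_jumps}\itemref{IT:th:resilience_jumps:deterministic}, pass from the $L^2(0,T)$ norm of the rates to the $L^2(\Om\times[0,T])$ norm of the composed drivers via Jensen's inequality and Fubini/Tonelli, and conclude with Lemma~\ref{LEM:stability_jumps}. Your preliminary check that the uniform Lipschitz hypothesis and $g^n(\,\cdot\,,0,0,0)\in L^2_T$ imply condition~\ref{IT:L_condition} (hence well-posedness and the representation) is a sound extra step that the paper leaves implicit.
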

\begin{proof}
    Let $(\rho,Z,U)$, $(\rho^n,Z^n,U^n)$, for $n\in\N$, denote the solutions to the BSDEs \eqref{EQ:BSDE} with parameters $(g,T,X)$, $(g^n,T,X^n)$, respectively.
    For all $n\in\N$,
    \begin{align*}
        \int_0^T\left|\bm\dot\rho_{t}(g^n,X^n)-\bm\dot\rho_{t}(g,X)\right|^2\d t 
        &= \int_0^T \big|\mathbb{E}\left[g^n(t,\rho^n_t,Z^n_t,U^n_t)-g(t,\rho_t,Z_t,U_t)\right]\big|^2\d t\\
        &\leq \mathbb{E}\left[\int_0^T \left|g^n(t,\rho^n_t,Z^n_t,U^n_t)-g(t,\rho_t,Z_t,U_t)\right|^2\d t\right]. 
    \end{align*}
    The first equality follows from Theorem~\ref{TH:resilience_jumps}\itemref{IT:th:resilience_jumps:deterministic} and the linearity of expectations.
    The inequality follows from Jensen's inequality and Fubini's theorem.
    The last member is infinitesimal, as $n\to\infty$, by Lemma~\ref{LEM:stability_jumps}.
\end{proof}

\subsection{Resilience-acceptance families}
\label{SEC:res_acc_fam}

In the context of static and dynamic risk measures, considerable attention has been devoted to the concept of risk-acceptance sets (e.g., \cite{Follmer+Schied_2016_Stochastic_finance,Drapeau+Kupper_2013_Risk_preferences_robust_representation}).
We can formulate a similar notion for measuring financial resilience, by means of the resilience-acceptance set.

\begin{definition}
\label{DEF:AS_jumps}
    Let $\rho$ be a dynamic risk measure on $L^p(\F_T)$, for some $p\in[1,+\infty]$. 
    For $a\in\overline\R$ and $t\in[0,T)$, we define the \textbf{resilience-acceptance set} for the dynamic risk measure $\rho$, at level $a$ and time $t$, as
    \[
        \mathcal{R}^{a}_{tT}(\rho):=\big\{X\in L^{p}(\mathcal{F}_T) \, : \, \bm\dot\rho_t(X)\leq a\big\}.
    \]
    We say that the resilience-acceptance family $\big(\mathcal R^a_{tT}(\rho)\big)_{a\in\overline\R}$ is cash-insensitive and positively homogeneous, if, for all $a\in\overline\R$ and $X\in L^p(\F_T)$, we have, respectively,
    \begin{align}
        X\in\mathcal{R}^{a}_{tT}(\rho)\quad &\Longleftrightarrow\quad X+h\in\mathcal{R}^{a}_{tT}(\rho), \qquad \forall\,h\in L^p(\F_t),\\
        X\in\mathcal{R}^{a}_{tT}(\rho)\quad &\Longleftrightarrow\quad \a X\in\mathcal{R}^{\a a}_{tT}(\rho), \qquad \forall\,\a> 0.
    \end{align}
\end{definition}
The dependence of the resilience-acceptance set on the time horizon $T$ and on the dynamic risk measure $\rho$ will be suppressed when clear from the context.
Let us remark that, as the definition has been stated in full generality for dynamic risk measures, the resilience-acceptance set $\mathcal R_t^a$ could be empty for some $t\in[0,T)$ and $a\in\overline\R$.
    
We have the following characterization for resilience-acceptance sets.
\begin{proposition}
\label{PROP:resacc_jumps}
    \begin{enumerate}[label=(\roman*), noitemsep]
        \item \label{IT:PROP:resacc_jumps_i}
        If $\rho$ is a dynamic risk measure on $L^p(\F_T)$ for some $p\in[1,+\infty]$, and if $\bm\dot\rho_t(X)$ exists for some $t\in[0,T)$ and $X\in L^p(\F_T)$, then 
        \[
            \bm\dot\rho_t(X)=\min\big\{a\in\overline\R \, : \, X\in\mathcal{R}^{a}_t(\rho)\big\}.
        \]
        \item \label{IT:PROP:resacc_jumps_ii}
        Let $g$ be a driver satisfying either the condition~\ref{IT:L_condition} or the condition~\ref{IT:Q_condition}, and let $p=2$ or $p=\infty$, respectively. 
        For any $a\in\overline\R$ and $\ell_1$-a.e.\ $t\in[0,T)$, we have that
        \[
            \mathcal{R}^a_t\big(\rho(g)\big)= \big\{X\in L^{p}(\mathcal{F}_T) \, : \, -\mathbb{E}\big[g\big(t,\rho_t(g,X),Z_t(g,X), U_t(g,X)\big)\big]\leq a\big\}.
        \]
    \end{enumerate}
\end{proposition}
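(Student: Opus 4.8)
The proof of Proposition~\ref{PROP:resacc_jumps} falls into two independent parts, matching the two displayed claims.

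For part~\itemref{IT:PROP:resacc_jumps_i}, the plan is to unwind the definition of the resilience-acceptance set. Fix $t\in[0,T)$ and $X\in L^p(\F_T)$ such that $\bm\dot\rho_t(X)$ exists in $\overline\R$. By Definition~\ref{DEF:AS_jumps}, for $a\in\overline\R$ we have $X\in\mathcal R^a_t(\rho)$ if and only if $\bm\dot\rho_t(X)\leq a$. Hence the set $\{a\in\overline\R \, : \, X\in\mathcal R^a_t(\rho)\}$ equals the order interval $[\bm\dot\rho_t(X),+\infty]\subset\overline\R$, whose minimum exists and equals $\bm\dot\rho_t(X)$ (this is where we use that $\bm\dot\rho_t(X)$ is well-defined; if it were not, the interval would be empty or ill-defined). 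Note the minimum is attained precisely because $\overline\R$ is order-complete and the defining inequality is non-strict, so $\bm\dot\rho_t(X)$ itself belongs to the set.

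For part~\itemref{IT:PROP:resacc_jumps_ii}, the plan is to invoke the representation theorem for the resilience rate together with the verification of its hypotheses. First, under either condition~\ref{IT:L_condition} with $p=2$ or condition~\ref{IT:Q_condition} with $p=\infty$, Proposition~\ref{PROP:verification_deterministic} guarantees that the integrability assumption \eqref{EQ:th:resilience_ass} of Theorem~\ref{TH:resilience_jumps}\itemref{IT:th:resilience_jumps:deterministic} holds for every admissible terminal condition $X$. Consequently, Theorem~\ref{TH:resilience_jumps}\itemref{IT:th:resilience_jumps:deterministic} applies: there is an $\ell_1$-null set $\mathcal N\subset[0,T)$ --- a priori depending on $X$ --- outside of which
\[
    \bm\dot\rho_t(g,X)=-\mathbb E\big[g\big(t,\rho_t(g,X),Z_t(g,X),U_t(g,X)\big)\big].
\]
The only subtlety is that the statement is ``for $\ell_1$-a.e.\ $t\in[0,T)$'' while the assertion in \itemref{IT:PROP:resacc_jumps_ii} is a pointwise-in-$t$ identity of sets over all of $L^p(\F_T)$; I would address this by remarking that the claimed identity
\[
    \mathcal R^a_t\big(\rho(g)\big)=\big\{X\in L^p(\F_T) \, : \, -\mathbb E\big[g\big(t,\rho_t(g,X),Z_t(g,X),U_t(g,X)\big)\big]\leq a\big\}
\]
is to be understood as holding for $\ell_1$-a.e.\ $t\in[0,T)$, and that on the (full-measure) set of such $t$ the set-equality follows by substituting the bouncing-drift representation into the definition $\mathcal R^a_t(\rho(g))=\{X : \bm\dot\rho_t(X)\leq a\}$ from Definition~\ref{DEF:AS_jumps}.

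The main obstacle --- really the only one worth flagging --- is the interchange of quantifiers just described: Theorem~\ref{TH:resilience_jumps}\itemref{IT:th:resilience_jumps:deterministic} produces, for each fixed $X$, a null exceptional set of times that may depend on $X$, whereas $\mathcal R^a_t$ ranges over all $X$ simultaneously. The resolution is that the conclusion of part~\itemref{IT:PROP:resacc_jumps_ii} is itself only asserted for $\ell_1$-a.e.\ $t$, so it suffices to note that for each such $t$ the representation is valid for every $X$ in the relevant $L^p$ space (the existence of $\bm\dot\rho_t(g,X)$ being guaranteed uniformly in $X$ by Proposition~\ref{PROP:verification_deterministic}, with the a.e.-in-$t$ caveat applied to the identity rather than to existence along a single chosen $X$). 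With this reading, the set-equality is immediate from unwinding Definition~\ref{DEF:AS_jumps}. No further calculation is needed.
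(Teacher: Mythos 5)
Your proposal is correct and follows essentially the same route as the paper's own (two-line) proof: part~\itemref{IT:PROP:resacc_jumps_i} by unwinding Definition~\ref{DEF:AS_jumps} so that the set of admissible levels is $[\bm\dot\rho_t(X),+\infty]$ with attained minimum, and part~\itemref{IT:PROP:resacc_jumps_ii} by combining Proposition~\ref{PROP:verification_deterministic} with Theorem~\ref{TH:resilience_jumps}\itemref{IT:th:resilience_jumps:deterministic} and substituting the bouncing-drift representation into the definition. The quantifier subtlety you flag in part~\itemref{IT:PROP:resacc_jumps_ii} (the $X$-dependence of the exceptional null set of times) is not discussed in the paper's proof at all; your reading of the statement is the intended one.
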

\begin{proof}
    The first statement can be proved by recalling Definition~\ref{DEF:AS_jumps} and by noticing that $\min\{a\in\overline\R \, : \, X\in\mathcal{R}^{a}_t\}
        =\min\{a\in\overline\R \, : \, \bm\dot\rho_t(X)\leq a\}=\bm\dot\rho_t(X)$.
    The second statement immediately follows from Definition~\ref{DEF:AS_jumps}, Theorem~\ref{TH:resilience_jumps}\itemref{IT:th:resilience_jumps:deterministic} and Proposition~\ref{PROP:verification_deterministic}.
\end{proof}

We show in the next proposition that resilience-acceptance families can be used to characterize the $(\bm\dot\rho,\rho)$-time consistency introduced in Section~\ref{SEC:time_cons}, under the assumption of cash-additivity for the underlying risk measure.

\begin{proposition}
\label{PROP:timecons_RAF}
    Let $\rho$ be a dynamic risk measure on $L^p(\F_T)$ for some $p\in[1,+\infty]$.
    If $\rho$ is cash-additive,
    then the $(\bm\dot\rho,\rho)$-time consistency of Section \ref{SEC:time_cons} is satisfied if and only if, for $0\leq t<s<T$, we have that
    \[
        \mathcal R^a_{tT}\subseteq \mathcal R^a_{ts}+\mathcal A^0_{sT},\qquad\forall\, a\in\overline\R,
    \]
    where $\mathcal R^a_{ts}:=\big\{X\in L^p(\F_s) \ : \ \bm\dot\r_t(X)\leq a\big\}$ and $\mathcal A^0_{sT}:=\big\{X\in L^p(\F_T) \ : \ \r_s(X)\leq 0\big\}$.
\end{proposition}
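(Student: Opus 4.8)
The plan is to establish the two implications separately; the only property of $\rho$ that enters beyond the definitions is cash-additivity at time $s$, which for every $W\in L^p(\F_T)$ gives $\rho_s\big(W-\rho_s(W)\big)=\rho_s(W)-\rho_s(W)=0$, so that $W-\rho_s(W)\in\mathcal A^0_{sT}$ and $W=\rho_s(W)+\big(W-\rho_s(W)\big)$ is a canonical decomposition of $W$ into an $\F_s$-measurable summand plus an element of $\mathcal A^0_{sT}$. For ``$(\bm\dot\rho,\rho)$-time consistency $\Longrightarrow$ inclusion'', I would fix $0\le t<s<T$, $a\in\overline\R$ and $X\in\mathcal R^a_{tT}$ (so $\bm\dot\rho_t(X)$ exists and is $\le a$), apply the canonical decomposition to $X$, and note that $\rho_s(X)\in L^p(\F_s)$ with $\bm\dot\rho_t(\rho_s(X))=\bm\dot\rho_t(X)\le a$ by $(\bm\dot\rho,\rho)$-time consistency (Proposition~\ref{PROP:time_consistency}), hence $\rho_s(X)\in\mathcal R^a_{ts}$ while $X-\rho_s(X)\in\mathcal A^0_{sT}$; thus $X\in\mathcal R^a_{ts}+\mathcal A^0_{sT}$.

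For the converse, assume the inclusion for all admissible $a,t,s$, fix $X$ and $t<s<T$ with $\bm\dot\rho_t(X)$ existing, and set $a_0:=\bm\dot\rho_t(X)$, so $X\in\mathcal R^{a_0}_{tT}$. The inclusion at level $a_0$ furnishes $X=Y+Z$ with $Y\in\mathcal R^{a_0}_{ts}$ (in particular $Y\in L^p(\F_s)$ and $\bm\dot\rho_t(Y)$ exists and is $\le a_0$) and $Z\in\mathcal A^0_{sT}$ (in particular $\rho_s(Z)\le 0$); cash-additivity at time $s$ then gives $\rho_s(X)=Y+\rho_s(Z)=:Y+h$ with $h\in L^p(\F_s)$, $h\le 0$, and $h=\rho_s(X)-Y$, $Z-h\in\mathcal A^0_{sT}$, so that $X=(Y+h)+(Z-h)$ is precisely the canonical decomposition $X=\rho_s(X)+(X-\rho_s(X))$. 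I would then deduce that $\bm\dot\rho_t(\rho_s(X))$ exists and equals $a_0$ by two inequalities, using the minimality $\bm\dot\rho_t(W)=\min\{a:W\in\mathcal R^a_{tT}\}$ of Proposition~\ref{PROP:resacc_jumps}\itemref{IT:PROP:resacc_jumps_i}: the bound $\bm\dot\rho_t(\rho_s(X))\le a_0$ by identifying $\rho_s(X)$ with the $\mathcal R^{a_0}_{ts}$-component naturally attached to the canonical decomposition, and then (existence being secured by the first bound) the reverse bound $\bm\dot\rho_t(\rho_s(X))\ge a_0$ by feeding $\rho_s(X)$ back into the inclusion at its own resilience level and using the same cash-additivity identity, together with the sharpness of $X\in\mathcal R^{a_0}_{tT}$.

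The step I expect to be the main obstacle is precisely this last one: upgrading ``there exists a decomposition $X=Y+Z$ with $Y\in\mathcal R^{a_0}_{ts}$ and $\rho_s(X)\le Y$'' to ``$\rho_s(X)\in\mathcal R^{a_0}_{ts}$''. In the classical acceptance-set characterization of time consistency one has both inclusions and the monotonicity of the underlying functional available, whereas here only one inclusion is assumed and the resilience rate is not monotone in general (cf.\ the restrictive hypotheses of Proposition~\ref{PROP:res_rate_comp_th}), so one cannot simply invoke downward-closedness of $\mathcal R^{a_0}_{ts}$. The argument must instead exploit the exact interlocking of the two decompositions of $X$ recorded above --- that the discrepancy $h=\rho_s(X)-Y$ is $\F_s$-measurable, non-positive, and satisfies $Z-h\in\mathcal A^0_{sT}$ --- together with cash-additivity of $\rho$, to force $\rho_s(X)$ itself into $\mathcal R^{a_0}_{ts}$ and, symmetrically, to prevent $\bm\dot\rho_t(\rho_s(X))$ from dropping strictly below $a_0$.
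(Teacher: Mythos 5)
Your forward implication is correct and coincides with the paper's: decompose $X=\rho_s(X)+\big(X-\rho_s(X)\big)$, place the first summand in $\mathcal R^a_{ts}$ via $(\bm\dot\rho,\rho)$-time consistency and the second in $\mathcal A^0_{sT}$ via cash-additivity. The converse, however, is where your proposal stops short, and you say so yourself: you reduce everything to showing $\bm\dot\rho_t\big(\rho_s(X)\big)\le a_0$, correctly observe that the discrepancy $h=\rho_s(Z)$ between the supplied component $Y\in\mathcal R^{a_0}_{ts}$ and $\rho_s(X)=Y+h$ is only $\F_s$-measurable --- so cash-insensitivity of $\bm\dot\rho_t$ cannot absorb it and no monotonicity is available --- and then assert that the argument ``must'' exploit the interlocking of the two decompositions, without producing that argument. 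Since this single step is essentially the entire content of the converse, the proposal as written has a genuine gap rather than a missing detail.

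For comparison, the paper's converse runs in two stages. First, it applies the inclusion to $\tilde X:=X-\rho_t(X)\in\mathcal R^a_{tT}$ and reads it as saying that the specific $\F_s$-measurable summand $\tilde X-\big(X-\rho_s(X)\big)=\rho_s(X)-\rho_t(X)$ belongs to $\mathcal R^a_{ts}$; cash-insensitivity then strips off $\rho_t(X)\in L^p(\F_t)$ and yields $\bm\dot\rho_t\big(\rho_s(X)\big)\le a$ --- this is precisely the identification you flag as problematic, and your reservation about it is legitimate even for the paper's own write-up. Second --- and this is the device absent from your sketch --- instead of ``feeding $\rho_s(X)$ back into the inclusion at its own level,'' the paper sets $Y_\pm:=\rho_s(X)\pm\big(\bm\dot\rho_t(\rho_s(X))-\bm\dot\rho_t(X)\big)$, notes that $Y_\pm\in\mathcal R^a_{ts}$ because the shift is a \emph{real constant}, hence $\F_t$-measurable, so cash-insensitivity does apply (unlike for your $\F_s$-measurable $h$), deduces $X-Y_\pm\in\mathcal A^0_{sT}$ from the inclusion, and obtains both inequalities at once from $\rho_s(X-Y_\pm)=\mp\big(\bm\dot\rho_t(\rho_s(X))-\bm\dot\rho_t(X)\big)\le 0$. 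The key idea you are missing is this conversion: by shifting with a deterministic number one trades an inequality about $\bm\dot\rho_t$ (where only $\F_t$-measurable corrections are harmless) for an inequality about $\rho_s$ (where cash-additivity handles $\F_s$-measurable corrections), which is what forces the equality $\bm\dot\rho_t\big(\rho_s(X)\big)=\bm\dot\rho_t(X)$.
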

\begin{proof}
    First, we prove that the $(\bm\dot\rho,\rho)$-time consistency implies the desired inclusion.
    Fix $X\in \mathcal R^a_{tT}$, and write $X=\rho_{s}(X)+X-\rho_{s}(X)$.
    In the last expression, $\rho_{s}(X)\in\mathcal R^a_{ts}$ because $\rho_s(X)\in L^p(\F_s)$ and $\bm\dot\rho_{t}\big(\rho_s(X)\big)=\bm\dot\rho_t(X)\leq a$ by $(\bm\dot\rho,\rho)$-time consistency, while $X-\rho_{s}(X)\in\mathcal A^0_{sT}$ because $\rho_s\big(X-\rho_{s}(X)\big)=\rho_s\big(X)-\rho_{s}(X)=0$ by cash-additivity of $\rho$.

    Let us now prove the converse implication.
    First, recall from Proposition \ref{PROP:res_rate_properties}\itemref{IT:PROP:res_rate_properties:cash_add} that the cash-additivity for $\rho$ implies the cash-insensitivity for $\bm\dot\rho_r(\xi)$, for any choice of $r\in[0,T)$ and $\xi\in L^p(\F_T)$ for which $\bm\dot\rho_r(\xi)$ exists.
    Let us fix $X\in L^p(\F_T)$ such that $\bm\dot\rho_t(X)$ exists, and let $a:=\bm\dot\rho_t(X)\in\overline\R$. 
    Then $X\in\mathcal R^a_{tT}$, while $\tilde X:=X-\rho_{t}(X)\in \mathcal R^a_{tT}$ because $\bm\dot\rho_{t}(\tilde X)=\bm\dot\rho_t(X)\leq a$ by cash-insensitivity of $\bm\dot\rho$, and $Z:=X-\rho_{s}(X)\in \mathcal A^0_{sT}$ by cash-additivity of $\rho$. 
    Since $\mathcal R^a_{tT}\subseteq \mathcal R^a_{ts}+\mathcal A^0_{sT}$ by assumption, we have that $\tilde X-Z\in \mathcal R^a_{ts}$, hence $\bm\dot\rho_{t}(\tilde X-Z)\leq a$, namely:
    \begin{align}
        a\geq \bm\dot\rho_{t}(\tilde X-Z)
         =\bm\dot\rho_{t}\big(X-\rho_{t}(X)-(X-\rho_{s}(X))\big)
         =\bm\dot\rho_{t}\big(\rho_{s}(X)\big).
    \end{align}
    Let us now define $Y_\pm:=\rho_{s}(X)\pm\Big(\bm\dot\rho_{t}\big(\rho_{s}(X)\big)-\bm\dot\rho_{t}(X)\Big)$. 
    Then $Y_\pm\in\mathcal R^a_{ts}$ because $\bm\dot\rho_{t}(Y_\pm)=\bm\dot\rho_{t}\big(\rho_{s}(X)\big)$ by cash-insensitivity, and $\bm\dot\rho_{t}\big(\rho_{s}(X)\big)\leq a$ as proved above. 
    Since $\mathcal R^a_{tT}\subseteq\mathcal R^a_{ts}+\mathcal A^0_{sT}$ by hypothesis, we have that $X-Y_\pm\in \mathcal A^0_{sT}$, which means:
    \begin{align}
        0\geq \rho_{s}(X-Y_\pm)
         =\rho_{s}(X)-\rho_{s}(X)\mp\bm\dot\rho_{t}(\rho_{s}(X))\pm\bm\dot\rho_{t}(X).     
    \end{align}
    The upper sign yields $\bm\dot\rho_{t}\big(\rho_{s}(X)\big)\geq \bm\dot\rho_{t}(X)$, while the lower sign gives $\bm\dot\rho_{t}(X)\geq \bm\dot\rho_{t}\big(\rho_{s}(X)\big)$.
\end{proof}

\begin{remark}
\label{REM:RAS_prop}
    The properties for the resilience rate that we discussed in Section~\ref{SEC:properties_cash-ins_pos-hom} can be trivially translated into properties for resilience-acceptance sets.
    In more detail, if we assume the setting of Definition~\ref{DEF:AS_jumps} and let $t\in[0,T)$, then we have the following statements.
    \begin{itemize}[label=\scriptsize$\bullet$, noitemsep, topsep=0pt,leftmargin=1em]
        \item 
            $\bm\dot\rho_t(X)$ exists and is cash-insensitive for all $X\in L^p(\F_T)$, if and only if the resilience-acceptance family $\big(\mathcal R^a_t\big)_{a\in\overline\R}$ is cash-insensitive.
        \item 
            $\bm\dot\rho_t(X)$ exists and is positively homogeneous for all $X\in L^p(\F_T)$, if and only if the resilience-acceptance family $\big(\mathcal R^a_t\big)_{a\in\overline\R}$ is positively homogeneous.
    \end{itemize}
\end{remark}

\begin{proposition}
\label{PROP:RAS_properties}
    In the same setting of Definition~\ref{DEF:AS_jumps}, we have the following properties:
    \begin{enumerate}[label=(\roman*)]
    \item 
    \label{IT:PROP:RAS_properties:cash_add}
        If $\rho$ is cash-additive, then, for all $t\in[0,T)$, the resilience-acceptance family $\big(\mathcal{R}^{a}_t\big)_{a\in\overline\R}$ is cash-insensitive.
    \item 
    \label{IT:PROP:RAS_properties:pos_hom}
        If $\rho$ is positively homogeneous, then for all $t\in[0,T)$, the resilience-acceptance family $\big(\mathcal{R}^{a}_t\big)_{a\in\overline\R}$ is positively homogeneous.
    \end{enumerate}
\end{proposition}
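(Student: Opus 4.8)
The plan is to read off both assertions directly from Definition~\ref{DEF:AS_jumps} and Proposition~\ref{PROP:res_rate_properties}. The one structural point to keep in mind is that $\bm\dot\rho_t(X)$ is not assumed to exist for every $X\in L^p(\F_T)$, so I would \emph{not} simply invoke Remark~\ref{REM:RAS_prop}; instead, the argument rests on the observation that, under cash-additivity (resp.\ positive homogeneity) of $\rho$, the perturbation $X\mapsto X+h$ (resp.\ $X\mapsto\alpha X$) acts on the difference quotient $\frac1\eps\E[\rho_{t+\eps}(\,\cdot\,)-\rho_t(\,\cdot\,)]$ as the identity (resp.\ by multiplication by $\alpha>0$), and therefore neither creates nor destroys the limit defining the resilience rate.

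For \itemref{IT:PROP:RAS_properties:cash_add}, I would fix $t\in[0,T)$, $a\in\overline\R$, $X\in L^p(\F_T)$ and $h\in L^p(\F_t)$. Since $L^p(\F_t)\subseteq L^p(\F_s)$ for $s\in[t,T]$, cash-additivity of $\rho$ gives $\rho_s(X+h)=\rho_s(X)+h$ on $[t,T]$, so $\frac1\eps\E[\rho_{t+\eps}(X+h)-\rho_t(X+h)]=\frac1\eps\E[\rho_{t+\eps}(X)-\rho_t(X)]$ for every $\eps\in(0,T-t]$. Letting $\eps\to0^+$, $\bm\dot\rho_t(X+h)$ exists in $\overline\R$ precisely when $\bm\dot\rho_t(X)$ does, and then they are equal --- this is Proposition~\ref{PROP:res_rate_properties}\itemref{IT:PROP:res_rate_properties:cash_add} in one direction and the same statement applied to the terminal condition $X+h$ with increment $-h\in L^p(\F_t)$ in the other. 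Hence $\bm\dot\rho_t(X)\leq a\iff\bm\dot\rho_t(X+h)\leq a$, i.e.\ $X\in\mathcal R^a_t\iff X+h\in\mathcal R^a_t$; as $h$ was arbitrary (and $h=0$ covers the trivial reverse implication), the family $(\mathcal R^a_t)_{a\in\overline\R}$ is cash-insensitive.

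Part \itemref{IT:PROP:RAS_properties:pos_hom} runs in exact parallel: positive homogeneity gives $\rho_s(\alpha X)=\alpha\rho_s(X)$ on $[t,T]$ for $\alpha>0$, so the $\eps$-difference quotient at $t$ for $\alpha X$ equals $\alpha$ times that for $X$; passing to the limit (using Proposition~\ref{PROP:res_rate_properties}\itemref{IT:PROP:res_rate_properties:pos_hom}, and the same statement with scalar $1/\alpha$ for the converse) shows $\bm\dot\rho_t(\alpha X)$ exists iff $\bm\dot\rho_t(X)$ does, with $\bm\dot\rho_t(\alpha X)=\alpha\bm\dot\rho_t(X)$. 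Since $\alpha>0$, $\bm\dot\rho_t(X)\leq a\iff\alpha\bm\dot\rho_t(X)\leq\alpha a$, that is $X\in\mathcal R^a_t\iff\alpha X\in\mathcal R^{\alpha a}_t$ (again $\alpha=1$ gives the reverse implication in Definition~\ref{DEF:AS_jumps}), so $(\mathcal R^a_t)_{a\in\overline\R}$ is positively homogeneous. I do not expect a real obstacle here --- it is bookkeeping on top of Proposition~\ref{PROP:res_rate_properties} --- the only step deserving a line of care being the verification that the cash/scalar perturbation preserves existence of the defining limit, which is exactly what the displayed difference-quotient identities provide.
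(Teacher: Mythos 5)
Your proof is correct and follows essentially the same route as the paper's: both reduce the claim to Proposition~\ref{PROP:res_rate_properties} (existence of $\bm\dot\rho_t$ under the cash translation, resp.\ positive scaling, together with its invariance, resp.\ homogeneity) and then read off the required equivalence from Definition~\ref{DEF:AS_jumps}. Your explicit treatment of the reverse implication --- reapplying the proposition with $-h$, resp.\ $1/\alpha$ --- is a small point the paper leaves implicit, not a genuinely different argument.
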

\begin{proof}
    Let us suppose that $\rho$ is cash-additive and fix $t$, $X$ and $a$.
    If $X\in \mathcal R_t^a$, then $\bm\dot\rho_t(X)$ exists by definition of resilience-acceptance set, and, by Proposition~\ref{PROP:res_rate_properties}\itemref{IT:PROP:res_rate_properties:cash_add}, we know that $\bm\dot\rho_t(X+h)$ exists for all $h\in L^p(\F_t)$ and that it equals $\bm\dot\rho_t(X)$. 
    Hence, $\bm\dot\rho_t(X+h)\leq a$ if and only if $\bm\dot\rho_t(X)\leq a$. 
    This proves \itemref{IT:PROP:RAS_properties:cash_add}. 
    Part \itemref{IT:PROP:RAS_properties:pos_hom} is then proved analogously.
\end{proof}
The next result easily follows from Remark~\ref{REM:RAS_prop} and Corollary~\ref{COR:prop_brown_res_rate}, and is a converse of Proposition~\ref{PROP:RAS_properties}.
\begin{corollary}
\label{COR:conv_RAS_prop}
    In the setting of Corollary~\ref{COR:prop_brown_res_rate}, each of the statements in Corollary~\ref{COR:prop_brown_res_rate}\itemref{IT:COR:prop_brown_res_rate:cash_ins_converse} (resp.\ Corollary~\ref{COR:prop_brown_res_rate}\itemref{IT:COR:prop_brown_res_rate:pos_hom_converse}) is equivalent to the resilience-acceptance family $\big(\mathcal R^a_t\big)_{a\in\overline\R}$ being cash-insensitive (resp.\ positively homogeneous) for all $t\in[0,T)$.
\end{corollary}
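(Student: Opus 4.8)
The plan is to chain together the equivalences already established in Corollary~\ref{COR:prop_brown_res_rate} with the dictionary between properties of the resilience rate and of the resilience-acceptance family recorded in Remark~\ref{REM:RAS_prop}. The one point that needs care is that Remark~\ref{REM:RAS_prop} phrases the relevant hypothesis as ``$\bm\dot\rho_t(X)$ exists and is cash-insensitive for all $X\in L^p(\F_T)$'', so I first want to observe that, under the standing assumptions of Corollary~\ref{COR:prop_brown_res_rate}, the existence part is automatic, after which the remaining equivalences are purely formal.

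First I would recall that Corollary~\ref{COR:prop_brown_res_rate} is stated under the hypothesis that $g$ satisfies \ref{IT:BL_condition} with $p=2$ (or \ref{IT:BQ_condition} with $p=\infty$) and that, for every $X\in L^p\big(\F^W_T\big)$, the process $g\big(\,\cdot\,,\rho(X),Z(X)\big)$ has $\P$-a.s.\ continuous trajectories. By Corollary~\ref{COR:resil_brownian}\itemref{IT:cor:resil_brown:all_t}, together with Proposition~\ref{PROP:verification_deterministic}, this guarantees that $\bm\dot\rho_t(X)$ exists and is finite for every $X\in L^p\big(\F^W_T\big)$ and every $t\in[0,T)$. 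Consequently, for each fixed $t\in[0,T)$, the condition appearing in the first bullet of Remark~\ref{REM:RAS_prop} --- that $\bm\dot\rho_t(X)$ exists and is cash-insensitive for all $X\in L^p\big(\F^W_T\big)$ --- reduces to the requirement that $\bm\dot\rho_t(X)$ be cash-insensitive for all such $X$, which is precisely the content of the third bullet of Corollary~\ref{COR:prop_brown_res_rate}\itemref{IT:COR:prop_brown_res_rate:cash_ins_converse} restricted to that time $t$.

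Next I would invoke Remark~\ref{REM:RAS_prop}: for each $t\in[0,T)$, the cash-insensitivity of $\bm\dot\rho_t(X)$ for all $X\in L^p\big(\F^W_T\big)$ is equivalent to the cash-insensitivity of the resilience-acceptance family $\big(\mathcal R^a_t\big)_{a\in\overline\R}$. Quantifying over all $t\in[0,T)$ and combining with the previous paragraph, the statement ``for all $X\in L^p\big(\F^W_T\big)$ and all $t\in[0,T)$, $\bm\dot\rho_t(X)$ is cash-insensitive'' --- which is exactly the third item in the chain of equivalences of Corollary~\ref{COR:prop_brown_res_rate}\itemref{IT:COR:prop_brown_res_rate:cash_ins_converse} --- is equivalent to ``$\big(\mathcal R^a_t\big)_{a\in\overline\R}$ is cash-insensitive for all $t\in[0,T)$''. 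Since Corollary~\ref{COR:prop_brown_res_rate}\itemref{IT:COR:prop_brown_res_rate:cash_ins_converse} already shows that this third item is equivalent to ``$g$ does not depend on $y$'' and to ``$\rho$ is cash-additive'', transitivity of equivalence yields the claimed statement for the cash-insensitivity part.

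Finally, the positive-homogeneity part is proved verbatim in the same way, replacing cash-insensitivity by positive homogeneity throughout: one uses the second bullet of Remark~\ref{REM:RAS_prop} and the chain of equivalences in Corollary~\ref{COR:prop_brown_res_rate}\itemref{IT:COR:prop_brown_res_rate:pos_hom_converse}. I do not expect any genuine difficulty here; the only thing to watch is the bookkeeping of the universal quantifier over $t\in[0,T)$, so that Remark~\ref{REM:RAS_prop} --- stated pointwise in $t$ --- is applied for every $t$ before the conjunction is taken, and the automatic existence of $\bm\dot\rho_t(X)$ is invoked to discharge the ``exists'' clause of that remark.
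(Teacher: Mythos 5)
Your proposal is correct and follows exactly the route the paper intends: the paper states that Corollary~\ref{COR:conv_RAS_prop} ``easily follows from Remark~\ref{REM:RAS_prop} and Corollary~\ref{COR:prop_brown_res_rate}'' and gives no further argument, and your chaining of those two results --- together with the observation that existence of $\bm\dot\rho_t(X)$ is automatic under the standing hypotheses via Corollary~\ref{COR:resil_brownian}\itemref{IT:cor:resil_brown:all_t} and Proposition~\ref{PROP:verification_deterministic}, so the ``exists'' clause of the remark is discharged --- is precisely that argument, carried out with the correct bookkeeping of the quantifier over $t$.
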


\section{Examples}
\label{sec:examples}

In this section, we provide several illustrative examples.
The examples reveal that our measure of financial resilience captures both objective statistical characteristics of the underlying stochastic financial process and subjective risk preferences. 
We divide this section into two parts. 
First, we present a few examples in the context of a Brownian filtration. 
Then, we develop examples in the setting with jumps.

\subsection{Brownian filtration}

\begin{example_text}[Value of a self-financing portfolio]
\label{EX:black_scholes}

Let us consider a financial market with  a free-risk asset with null interest rate, and a risky asset $S$ evolving according to the linear SDE $\d S_t=-\mu_tS_t\d t+\sigma_tS_t\d W_t$, with $S_0>0$, where $\mu$ and $\s$ are $\P$-a.s.\ continuous $\bm\F^W$-adapted stochastic processes such that $\s_t>0$ for any $t\geq 0$.
Let $X$ represent an $\F^W_T$-measurable payoff function, and let $\pi_t$ denote the optimal fraction of a financial portfolio invested in the risky asset $S$ at time $t$.
Then, it is well-known (see \cite[Section~1.2]{ElKaroui+Peng+Quenez_1997_Backward_stochastic_differential_equations_finance}, among others) that the self-financing portfolio $V(X)$ replicating $X$ follows the Brownian BSDE with linear driver $g:\Om\times[0,T]\times\R\ni(\om,t,z)\mapsto -\mu_t(\om) z /\sigma_t(\om)$, terminal condition $X$, and solution $(V(X),Z)$, where $Z:=\pi\sigma S$.
In particular, the replicating portfolio $V$ can be interpreted as a dynamic risk measure; see Remark~\ref{REM:induced_dyn_risk_meas}.  

We define the $\bm\F^W$-adapted process
\[
    \mathcal E_t:=\exp\left(-\int_0^t\frac{\mu_s}{\s_s}\d W_s-\frac12\int_0^t\Big(\frac{\mu_s}{\s_s}\Big)^2\d s \right),\qquad \forall\, t\geq 0,
\]
and introduce the probability measure $\Q$ on $(\Om,\F)$ whose density with respect to $\P$ is ${\mathcal{E}_T}$. 
Then $\P$ and $\Q$ are equivalent probability measures.
Denote by $\E^\Q$ the expectation with respect to $\Q$.
With this notation, the standard theory for linear BSDEs, see \cite[Section~4.1]{Zhang_2017_Backward_stochastic_differential_equations}, gives a well-known formula for the replicating portfolio: $V_t(X)=\E^\Q\left[\left.X\right|\mathcal{F}_t\right]$, $t\in[0,T]$.

In addition, if $X, \mu/\s\in \mathbb D^{1,2}$, then there exists a version of the second component of the solution, $Z$, that can be evaluated through Malliavin calculus (see \cite[Section~4.3]{DiNunno+Oksendal+Proske_2009_Malliavin_calculus_Levy_processes_applications_finance}), obtaining, for $\ell_1$-a.e.\ $t\in[0,T]$,
\begin{equation}
\label{EQ:Z_impl}
\begin{aligned}
    Z_t
    =D_tV_t(X)
    &=\E^\Q\left[\left.D_tX-X\int_t^T D_t\frac{\mu_s}{\sigma_s}\d W^{\mathbb{Q}}_s\right|\mathcal{F}_t\right],\qquad\P\text{-a.s.},
\end{aligned}
\end{equation}
where $W^{\mathbb{Q}}_t:=\int_0^t\frac{\mu_s}{\sigma_s}\d s+W_t$, $t\geq 0$, is a $\Q$-Brownian motion, according to Girsanov's theorem.

In this setting, a relevant problem for an economic agent is to determine if the capital needed to self-finance the claim $X$ can be expected to decrease or increase.
In other words, the agent seeks to determine $\bm\dot V_t(X)$, i.e., the expected rate at which the capital required to replicate the claim $X$ is increasing or decreasing at time $t\in[0,T]$. 
By Theorem~\ref{TH:resilience_jumps}\itemref{IT:th:resilience_jumps:deterministic}, and recalling that $\E^\Q [Y|\F_t]=\mathcal E^{-1}_t\E[\mathcal E_TY|\F_t]$ for all $t\in[0,T]$ and $Y\in L^1(\F)$, we have, for almost any $t\in[0,T)$,
\[
    \bm\dot V_t(X)
    =-\mathbb{E}\left[g(t,Z_t)\right]
    =\mathbb{E}\left[\frac{\mu_t}{\sigma_t}Z_t\right]
    =\E^\Q \left[\frac{\mu_t}{\sigma_t}\mathcal E^{-1}_t\left(D_tX-X\int_t^T D_t\frac{\mu_s}{\sigma_s}\d W^{\mathbb{Q}}_s\right)\right].
\]
If $X=\phi(S_T)$ for some Lipschitz function $\phi:\R\to\R$, then the hypotheses of Corollary~\ref{COR:resil_brownian}\itemref{IT:cor:resil_brown:tau} are satisfied thanks to Proposition~\ref{PROP:verification_stopptimes_brown}, hence for $\t\in\mathcal T_T$, we have that the resilience rate of $V_{\tau}(X)$ is given by
\[
    \bm\dot V_\t(X)
    =\E^\Q \left[\left.\frac{\mu_\t}{\sigma_\t}\mathcal E^{-1}_\t\left(D_\t X-X\int_\t^T D_\t\frac{\mu_s}{\sigma_s}\d W^{\mathbb{Q}}_s\right)\right|\tau<T\right].
\]

We henceforth assume that both $\mu$ and $\s$ are deterministic and constant, as in the usual setting of the Black and Scholes model (see, e.g., \cite{Hull_2022_Options_Futures_Other_Derivatives}).  
We provide an explicit expression for the resilience rate of the portfolio that replicates the payoff function of a European put option,  $X=\left(K-S_T\right)^+=\phi(S_T)$, where $\phi(x):=(K-x)^+$ is Lipschitz continuous. 
In Appendix~\ref{APP:ex_BeS_payoff}, a different payoff is also considered.
The optimal strategy at time $t\in[0,T)$ to replicate the put option is then $\pi_t=-\mathcal N\big(-d_+(t,S_t)\big)$, where $\mathcal N(x):=(2\pi)^{-1/2}\int_{-\infty}^xe^{-z^2/2}\d z$, $x\in\R$, is the cumulative distribution function of a standard normal distribution, and $d_+(t,x):=\frac{1}{\s\sqrt{T-t}}\left[\ln\frac{x}{K}+\frac{\s^2}{2}(T-t)\right]$ for $x>0$.
Therefore, recalling that $Z=\s\pi S$, we obtain, for any $\t\in\mathcal T_T$ and almost any $t\in[0,T)$,
\begin{align}
\label{EQ:resilience_put}
    \bm\dot V_\t(X)
    &=-\frac\mu\s\E\left[ Z_\t\right]=-\mu\, \E\left[\left.S_\t\mathcal N\big(-d_+(\t,S_\t)\big)\right|\tau<T\right],\\
    \bm\dot V_t(X)
    &=-\frac\mu\s\E\left[ Z_t\right]=-\mu\, \E\left[S_t\mathcal N\big(-d_+(t,S_t)\big)\right],
\label{EQ:resilience_put_t}
\end{align}
which can easily be numerically computed for practical purposes.

\begin{figure}[t]
    \hspace{-1.4cm}
    \includegraphics[width=1.2\textwidth]{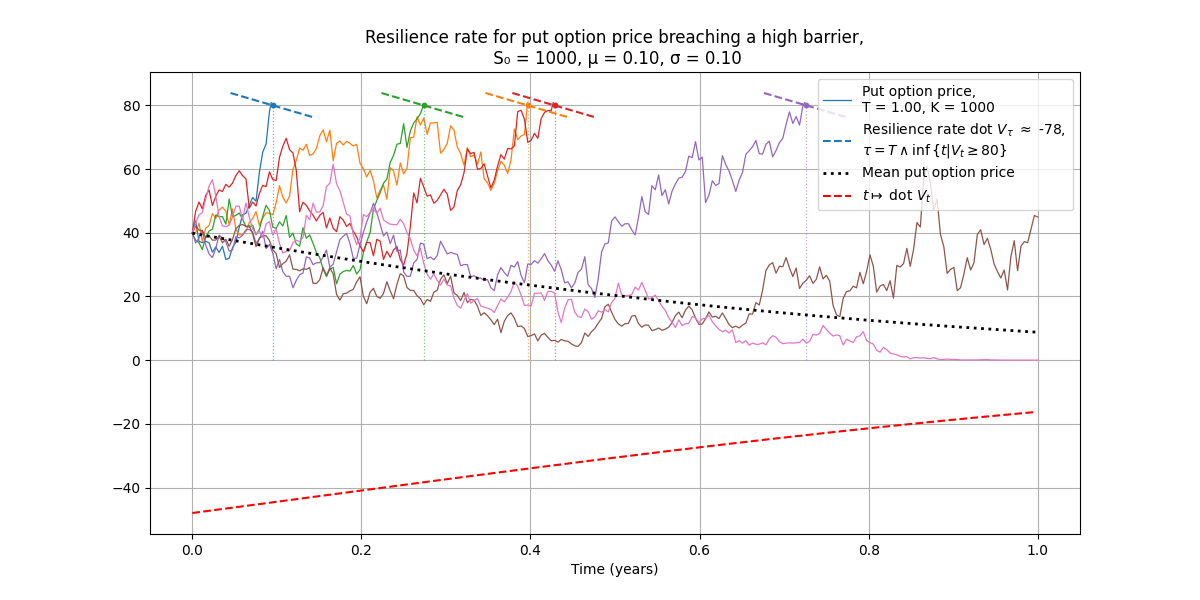}
    \caption{\footnotesize{For a time interval of one year ($1.0\, y$), we simulate $10^6$ random trajectories of the asset price $S$, which follows the Black and Scholes model with initial value $S_0 = 1.0 k\, \text{\euro}$, deterministic drift ${\mu=0.10\, y^{-1}}$, deterministic volatility $\sigma=0.10\, y^{-1/2}$, and zero risk-free interest rate.
    The time step is set to ${\d t=y/252}$.
    Based on these simulations, we use the Black and Scholes formula to calculate the value of the replicating portfolio $V(X)$ for a put option with payoff $X=(K-S_T)^+$, maturity $T=1.0 \, y$ and strike price $K=1.0 k\, \text{\euro}$.
    The colored solid lines in the graph are a selection of put option price trajectories.
    The black dotted line depicts the expected time evolution of the put option price, namely $[0,T]\ni t\mapsto \E[V_t(X)]$, numerically computed by averaging over the $10^6$ trajectories of $V(X)$. 
    The red dashed line is the time evolution of the resilience rate, i.e., $(0,T)\ni t\mapsto \bm\dot V_t(X)$, see equation~\eqref{EQ:resilience_put_t}. 
    Here, the expectation was computed via numerical integration with respect to the probability distribution of $S_t$.
    Moreover, using equation~\eqref{EQ:resilience_put}, we computed the resilience rate $\bm\dot V_\tau(X)\approx-78\, \text{\euro}y^{-1}$ at the stopping time $\tau$ defined in equation~\eqref{EQ:stopping_time}, with $c=80\, \text{\euro}$, where the expectation was numerically estimated by averaging over the $10^6$ trajectories.
    In the graph, the put option price trajectories that breached the $80\, \text{\euro}$ barrier were truncated. 
    At each truncation point, the resilience rate $\bm\dot V_\tau(X)$ is represented as the slope of an incident line.
    }}
\label{FIG:ideal}
\end{figure}

In Figure~\ref{FIG:ideal}, we display several trajectories of the price $V(X)$ of a put option with maturity one year.  
Some trajectories decline rapidly over time, while others increase due to a rise in the underlying asset's price. 
From the perspective of the option writer, a higher price for the put represents an increased risk of loss, and thus the writer hopes for a recovery towards lower option values. 
For a parameter $c>V_0(X)$, we define the stopping time
\begin{align}
\label{EQ:stopping_time}
    \tau := T\wedge \inf\{t\in[0,T] \: |\: V_t(X)\geq c\}\in\mathcal T_T.
\end{align}
We then numerically compute the quantity $\bm\dot V_\tau(X)$, which represents the expected recovery rate of the put option price when its value breaches the acceptability region and goes beyond the barrier.
This is illustrated in the graph by the slope of the colored dashed lines at the points where the trajectories first hit the barrier, set to be $c=80$.
In the same graph, we also depict the mean put option price trajectory, as a black dotted line, and the evolution of the deterministic time resilience rate $(0,T)\ni t\mapsto \bm\dot V_t(X)$, as a red dashed line. 
This allows us to interpret $\bm\dot V_t(X)$ as the slope of the mean put trajectory at time $t$. 
As a matter of fact, in the relatively simple setting of this example, the function $(0,T)\ni t\mapsto \E[V_t(X)]$ is differentiable, and therefore, its derivative at $t$ must be $\bm\dot V_t(X)$, see Remark~\ref{REM:rate_derivative}. 
We shall prove this statement in Appendix~\ref{APP:ex_BeS_proof}.
\end{example_text}

\begin{example_text}[Zero-coupon bond price]
\label{EX:vasicek}

Consider an instantaneous (short) interest rate driven by the SDE $\d r_t=a(b-r_t)\d t+\s \d W_t$, i.e., a standard Vasicek model with speed of mean reversion $a> 0$, long-term mean level $b\geq 0$, instantaneous  volatility $\s\geq 0$, and starting value $r_0>-1$, see e.g., \cite[Section~31.2]{Hull_2022_Options_Futures_Other_Derivatives} or \cite{Bazhba+Blanchet+Laeven+Zwart_2025_Large_deviations_asymptotics_unbounded_additive_functionals_diffusion_processes}.
In particular, $r_t$ is normally distributed at each $t\in[0,T]$, with mean $\mu_t:=\mathbb{E}[r_t]= r_0e^{-at}+b(1-e^{-at})$ and variance ${\Sigma^2_t:=\mathbb{E}\big[(r_t-\mu_t)^2\big]=\s^2 (1-e^{-2at})/2a}$.
If there is a zero-coupon bond available in the market, with maturity $T>0$, then its price $P$ under no-arbitrage assumptions is given by the linear Brownian BSDE
\[
    P_t=1-\int_t^Tr_sP_s\d s-\int_t^TZ_s\d W_s,
\]
with explicit solution $P_t =\E\left[\left.\exp\big(-\int_t^Tr_s\d s\big)\right|\mathcal{F}_t\right] =\exp\big(A_t-B_tr_t\big)$, where
\begin{align*}
    B_t:=\frac{1-e^{-a(T-t)}}{a},\qquad
    A_t:= -\frac{\s^2}{4a^2}\big[aB^2_t+2B_t-2(T-t)\big]+b\big[B_t-(T-t)\big].
\end{align*}
It is a simple exercise to show that $\big[aB^2_t+2B_t-2(T-t)\big]\leq 0$ for $t\leq T$, hence $A_t$ is increasing in $\s$.

Theorem~\ref{TH:resilience_jumps}\itemref{IT:th:resilience_jumps:stopping_times}, which can be applied thanks to Proposition~\ref{PROP:verification_stopptimes_brown}, yields the resilience rate of the zero-coupon bond price, as follows:
\begin{equation}
\label{EQ:vasicek_resilience_rate}
    \bm\dot P_\t =\E\left[r_\t P_\t |\tau<T\right]=\E\left[r_\t \exp\left(A_\t -B_\t r_\t \right)|\tau<T\right], \qquad \t\in\mathcal T_T.
\end{equation}
Since, for deterministic times $t\in[0,T]$, $r_t$ follows a Gaussian distribution with mean $\mu_t$ and variance $\Sigma^2_t$, after some algebra, we obtain the explicit formula
\begin{equation}
\label{EQ:vasicek_dot_Pt}
    \bm\dot P_t=
    \left(\mu_t-B_t\Sigma^2_t\right)\exp\left(A_t+\frac{\Sigma_t^2B^2_t}{2}-\mu_tB_t\right).
\end{equation}
Both $\bm\dot P_\t$ and $\bm\dot P_t$ are numerically illustrated in Figure~\ref{FIG:vas_rate}.

\begin{figure}[t]
    \hspace{-1.4cm}
    \includegraphics[width=1.2\textwidth]{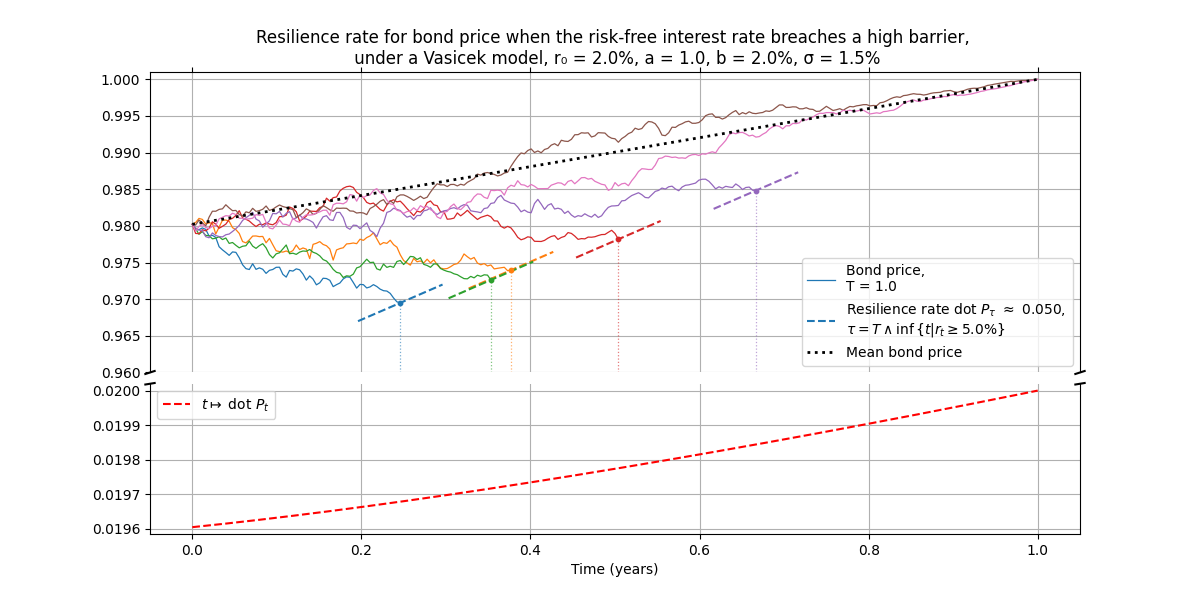}
    \caption{\footnotesize{For a time interval of one year ($1.0\, y$), we simulate $10^6$ random trajectories for a risk-free interest rate $r$ that follows a standard Vasicek model with speed of mean reversion $a=1.0 \,y^{-1}$, long-term mean level $b=2.0\% \,y^{-1}$, initial value $r_0 = 2.0\% \,y^{-1}$ and instantaneous volatility $\s = 1.0\% \, y^{-3/2}$.
    The time step is set to ${\d t=y/252}$.
    Based on these simulations, we compute the price $P$ of a zero-coupon bond available in the market.
    The colored solid lines in the graph are a selection of bond price trajectories.
    The black dotted line depicts the expected time evolution of the bond price, namely $[0,T]\ni t\mapsto \E[P_t]$, numerically computed by averaging over the $10^6$ trajectories of $P$. 
    The red dashed line is the time evolution of the resilience rate, i.e., $(0,T)\ni t\mapsto \bm\dot P_t$, see equation~\eqref{EQ:vasicek_dot_Pt}.
    Using equation~\eqref{EQ:vasicek_resilience_rate}, we computed the resilience rate $\bm\dot P_\tau\approx0.050\, \text{\euro}y^{-1}$ at the stopping time $\tau:=T\wedge \inf\{t\in[0,T] \ :\ r_t\geq 5.0\%\,y^{-1}\}$, where the expectation was numerically estimated by averaging over the $10^6$ trajectories.
    In the graph, the bond price trajectories for which the underlying risk-free interest rate breached the $5.0\%\,y^{-1}$ barrier, were truncated at the breaching time. 
    At each truncation point, the resilience rate $\bm\dot P_\tau$ is represented as the slope of an incident line.
    }}
\label{FIG:vas_rate}
\end{figure}

Close to maturity, i.e., in the limit as $t\to T^-$, the resilience rate tends to the mean interest rate $\mu_T$.
Furthermore, if $b\geq r_0$, the resilience rate is most typically increasing (in absolute value) with the speed of mean reversion $a$, keeping all the other parameters fixed: the faster the interest rate reverts to its long-term mean, the larger the recovery rate for the bond price. 
In Appendix~\ref{APP:ex_vasicek_reversion}, we provide a graphical representation to illustrate this property.

\end{example_text}
\begin{example_text}[Ambiguous interest rates]\label{EX:amb}
The previous example assumed the existence of a stochastic, yet non-ambiguous, instantaneous interest rate. 
In reality, however, a certain degree of ambiguity regarding the interest rate dynamics can exist, which may influence the decisions of an ambiguity averse agent. 
The present example, inspired by \cite[Example~7.2]{ElKaroui+Ravanelli_2009_Cash_sub-additive_risk_measures_interest_rate_ambiguity}, analyzes the resilience rate of an asset when the interest rate exhibits a certain level of ambiguity, specifically, when it fluctuates between an upper and a lower bound depending on the beliefs of the agent.
    
Assume that $(r_t)_{t\in[0,T]}$ and $(R_t)_{t\in[0,T]}$ are adapted, bounded, continuous and non-negative processes, which represent the inferior and superior bounds, respectively, for the ambiguous discount rate.
Let us consider the Brownian BSDE
    \begin{equation*}
        \r_t=X-\int_t^T\big(r_s\r^+_s-R_s\r^-_s\big)\d s -\int_t^T Z_s\d W_s,
    \end{equation*}
    where $x^\pm:=0\vee (\pm x)$, for $x\in\R$.
    
If we assume that $r$ and $R$ are induced by forward SDEs satisfying the assumptions of Proposition~\ref{PROP:verification_stopptimes_brown}, then, by a straightforward application of Theorem~\ref{TH:resilience_jumps}\itemref{IT:th:resilience_jumps:stopping_times}, we see that
    \[
        \bm\dot\rho_{\tau}(X)
        =\E[r_\tau\r_\tau^+-R_\tau\r^-_\tau|\tau<T],\qquad \forall\,\t\in\mathcal T_T,
    \]
which can be evaluated employing numerical simulation (see \cite{Agram+Rems+RosazzaGianin_2024_SIG-BSDE_Dynamic_Risk_Measures}, where this example has been extensively examined from a computational point of view). 
If we consider the case of a deterministic time $t\in[0,T]$, we can find explicit formulas. 
By uniqueness of the solution to Lipschitz BSDEs, it is straightforward to verify that
\begin{align*}
    \rho_t^{+} =X^{+}-\int_t^Tr_s\rho_t^{+}\d s-\int_t^TZ_s^+\d W_s,\qquad 
    \rho_t^{-}=X^{-}-\int_t^TR_s\rho_t^{-}\d s-\int_t^TZ_s^-\d W_s.
\end{align*}
Namely, $(\rho^{\pm}_t)_{t\in[0,T]}$ are driven by linear BSDEs with explicit solutions.
Thus, we have
\begin{align*}
    \bm\dot\r_t (X)
    =\E\left[r_t\r^+_t-R_t\r^-_t\right]
    &=\E\left[r_tX^+e^{-\int_t^Tr_s\d s }-R_tX^-e^{-\int_t^TR_s\d s }\right].
\end{align*}
\end{example_text} 
 
\begin{example_text}[Entropic risk measure, continued from Examples~\ref{EX:entropic_1}, \ref{EX:ecntropy_cont_2}, \ref{EX:entropy_cont_3}]\label{EX:entropy_cont_4}
Let us now consider a Brownian setting for the entropic risk measure, with a one-dimensional Brownian motion.
Then, it is known that $e^\g$ follows the quadratic Brownian BSDE
\[
    e^{\gamma}_t(X)=X+\int_t^T\frac{\gamma}{2}|Z_s|^2\d s-\int_t^TZ_s\d W_s,
\]
for any $\F^W_T$-measurable terminal condition $X$ such that $\mathbb{E}[\exp(pX)]<+\infty$ for any $p\geq \g$.
In particular, if $X\in \mathbb D^{1,2}$, then the solution to this BSDE is explicitly given by
\begin{align*}
    &e^{\gamma}_t(X)=\frac1\g\ln\E\left[\left.e^{\g X}\right|\F^W_t\right], \qquad
    Z_t=\frac{\mathbb{E}\left[\left.e^{X/\g} D_tX\right|\mathcal{F}^W_t\right]}{\mathbb{E}\left[\left. e^{X/\g} \right|\mathcal{F}^W_t\right]},
\end{align*}
see \cite[Example~5.1]{Kromer+Overbeck_2014_Representation_BSDE-based_dynamic_risk_measures_dynamic_capital_allocations}, with $D_t$ the Malliavin derivative operator at time $t$.
Let us now consider, for computational purposes, as terminal condition $h(X_T)$, where $h\in C^1_b(\R)$ and $X_T$ is the terminal point of the SDE
\[
    X_t=x+\int_0^t \mu(s)\d s +\int_0^t \s(s)\d W_s,
\]
for some deterministic $x\in \R$ and integrable deterministic functions $\mu,\s:[0,T]\to \R$.
We are then studying a (non-coupled) system of FBSDEs that satisfies the assumptions $(B1),\dots,(B4)$ from Appendix~\ref{APP:verif_assu_brownian}.
Then $D_t h(X_T)=\s(t)h'(X_T)$, hence, by Proposition~\ref{PROP:verification_stopptimes_brown}, we have
\[
    \bm\dot e^{\gamma}_\t\big(h(X_T)\big)
    =-\mathbb{E}\left[\left.\frac{\gamma}{2}|Z_\t|^2\right|\t<T\right]
    =-\frac{\gamma}{2}\mathbb{E}\left[\left.\left|\s(\t)\frac{\mathbb{E}\left[\left.e^{h(X_T)/\g}h'(X_T)\right|\mathcal{F}^W_\t\right]}{\mathbb{E}\left[\left.e^{h(X_T)/\g}\right|\mathcal{F}^W_\t\right]}\right|^2\right|\tau<T\right],
\]
for any stopping time $\t\in\mathcal T_T$.
Recalling the interpretation of $\gamma$ as the risk aversion coefficient, it is clear from the previous expression that the greater an agent's aversion to risk, the more negative the resilience rate of $e^{\gamma}_t$, hence the more resilient the financial position.
\end{example_text}

\subsection{Brownian-Poissonian filtration}

\begin{example_text}[Conditional expectation, $g\equiv 0$]\label{EX:gnull}
The simplest case of a BSDE is induced by the martingale representation theorem (see \cite[Example~2.3]{Laeven+Stadje_2014_Robust_portfolio_choice_indifference_valuation}, among others).
Let us assume ${X\in L^2(\mathcal{F}_T)}$ and consider the BSDE
\[
    \rho_t(X)=X-\int_t^TZ_s\d W_s-\int_{(t,T]\times\R^d_\ast}U_s(x)\d \tilde N(s,x),
\]
the unique solution of which has first component given by $\rho_t(X)=\mathbb{E}\left[\left.X\right|\mathcal{F}_t\right]$. 
In this case, the driver is identically zero and Theorem~\ref{TH:resilience_jumps}\itemref{IT:th:resilience_jumps:stopping_times} yields $\bm\dot\rho_{\tau}(X)=0=\bm\dot\rho_{\tau|\tau}(X)$ for any choice of $\t\in\mathcal T_T$.
Thus, the conditional expectation is \textit{resilience neutral}, i.e., once a certain level of risk has been reached, there is no expected trend to bounce back and recover. 
\end{example_text}

\begin{example_text}[Pricing in an incomplete market]
\label{EX:incomplete}

As in Example~\ref{EX:black_scholes}, let us consider a financial market with a free-risk asset with null interest rate, and a risky asset $S$. 
We now suppose that the risky asset evolves according to a linear SDE with jumps:
\begin{equation*}
    S_t=S_0+\int_0^t\mu_sS_s\d s+\int_0^t\sigma_sS_s\d W_s+\int_{[0,t]\times\R_\ast}\gamma_s(x)S_s\d \tilde N(s,x),
\end{equation*}
where $S_0>0$ is deterministic, while $\mu, \s:\Om\times[0,T]\to\R$ and $\g:\Om\times[0,T]\to \Lambda^2$ are $\P$-a.s.\ continuous $\bm\F$-adapted stochastic processes.
We suppose that $\mu$ is $\P$-a.s.\ bounded and that $\s$ is $\P$-a.s.\ positive and bounded away from $0$.
Further assume that the jump measure $\nu$ is finite, so that $N_t:=N\big([0,t]\times\R_\ast\big)$, for $t\geq 0$, is a standard Poisson process with rate $\nu(\R_\ast)$.

In general, this market will not be complete; see, for instance, \cite[Theorems~2.12, 2.14]{Oksendal+Sulem_2019_Applied_stochastic_control_jump_diffusions}.
However, for a claim $X$, i.e., an $\F_T$-measurable random variable, we may still define an arbitrage-free price $P_t(X)$ at time $t\in[0,T]$ by $P_t(X)=\E^\Q[X|\F_t]$, where $\E^\Q$ denotes the expectation with respect to an equivalent martingale measure $\Q$.
Suppose that
\begin{align*}
    \mathcal E_t
    :=\exp\left(-\int_0^t\frac{\mu_s}{\s_s}\d W_s-\frac12\int_0^t\left(\frac{\mu_s}{\s_s}\right)^2\d s \right)
\end{align*}
exists for $t\in[0,T]$ and satisfies $\E[\mathcal E_T]=1$. 
If we define the probability measure $\Q$ on $(\Om,\F_T)$ in such a way that its density with respect to $\P$ is ${\mathcal{E}_T}$, then $\P$ and $\Q$ are equivalent probability measures.
By the version of Girsanov's theorem in \cite[Theorems~1.31, 1.33]{Oksendal+Sulem_2019_Applied_stochastic_control_jump_diffusions}, and since $\mu/\s$ is $\P$-a.s.\ bounded, we know that $\Q$ is an equivalent martingale measure.
Moreover, if we let $W^\Q_t := \mu_t/\s_t+W_t$, for $t\in[0,T]$, then $W^\Q$ is a $\Q$-Wiener process, while  $\tilde N$ remains a compensated Poisson random measure of $N$ with respect to $\Q$, in the sense that, for all $t\in[0,T]$ and all $B\in\mathscr B(\R_\ast)$, the process $\tilde N([0,t]\times B)$ is a $\Q$-martingale.
It is simple to show (see \cite[Theorem~4.8]{Oksendal+Sulem_2019_Applied_stochastic_control_jump_diffusions}) that there exists $(Z,U)\in\mathcal H^2_T(\R)\times\mathcal H^2_T(\Lambda^2)$ such that $\big(P(X),Z,U\big)$ is the solution of the linear BSDE
\begin{align*}
    P_t(X)
    &=X-\int_t^T\frac{\mu_s}{\s_s}Z_s\d s -\int_t^TZ_s\d W_s - \int_{(t,T]\times\R_\ast}U_s(x)\d\tilde N(s,x)\\
    &=X-\int_t^TZ_s\d W^\Q_s - \int_{(t,T]\times\R_\ast}U_s(x)\d\tilde N(s,x),
\end{align*}
hence $P(X)$ can be seen as a dynamic risk measure on $L^\infty(\F_T)$, see Remark~\ref{REM:induced_dyn_risk_meas}.

We now resort to the Clark-Ocone formula in \cite[Theorem~12.22]{DiNunno+Oksendal+Proske_2009_Malliavin_calculus_Levy_processes_applications_finance} applied to $X$, to obtain, for $s\in[0,T]$,
\begin{align*}
    Z_s
    &= \E^\Q\left[\left.D_sX-X\int_s^TD_s\frac{\mu_r}{\s_r}\d W^\Q_r\right|\F_s\right], \qquad\Q\text{-a.s.},
\end{align*}
where $D_s$, $s\geq 0$, is the Malliavin derivative with respect to $W^\Q$.
We recall that the following property holds for any $Y\in L^1(\F)$ and $t\in[0,T]$: $\E^\Q [Y|\F_t]=\mathcal E^{-1}_t\E[Y\mathcal E_T|\F_t]$.
Therefore, for almost any $t\in[0,T)$, we have by Theorem~\ref{TH:resilience_jumps}\itemref{IT:th:resilience_jumps:deterministic}:
\begin{gather}
    \bm\dot P_t(X)
    =-\E\left[-\frac{\mu_t}{\s_t}Z_t\right]
    =\E\left[\frac{\mathcal E_T}{\mathcal E_t}\frac{\mu_t}{\s_t}\bigg(D_tX-X\int_t^TD_t\frac{\mu_s}{\s_s}\d W^\Q_s\bigg)\right].
\end{gather}
The formula significantly simplifies if $\mu$ and $\s$ are supposed to be deterministic.
Were this the case, we would obtain
\[
    \bm\dot P_t(X)=\E\left[\frac{\mathcal E_T}{\mathcal E_t}\frac{\mu_t}{\s_t}D_tX\right], \qquad \ell_1\text{-a.e. } t\in[0,T).
\]

Let us give an explicit computation for a European plain vanilla call option, where $X=(S_T-K)^+$ for some $K>0$, and with the additional assumption of $\mu,\s,\g$ being constants, with $\g>-1$.
By the chain rule for the Malliavin derivative, we know that 
\begin{align*}
    D_{t}X = D_t (S_T-K)^+=\s S_T\1_{[K,+\infty)}(S_T).
\end{align*}
Moreover, by direct inspection: $\mathcal E_T\mathcal E^{-1}_t = \exp\left(-\frac\mu\s(W_T-W_t)-\frac12\left(\frac\mu\s\right)^2(T-t)\right)$.
Therefore, for a.e.\ $t\in[0,T)$,
\begin{equation}
\label{EQ:formula}
    \bm\dot P_t(X)
    =\mu \exp\left(-\frac12\left(\frac\mu\s\right)^2(T-t)\right)\E\left[\exp\bigg(-\frac\mu\s(W_T-W_t)\bigg)S_T\1_{[K,+\infty)}(S_T)\right].
\end{equation}
We recall that
\[
    S_T = S_0\exp\left(\Big(\mu -\frac12\s^2\Big)T+\s W_T\right)(1+\g)^{N_T},    
\]
and that $N_T \sim N_{T-t}+N_t$, $W_T \sim W_{T-t}+W_t$, and $W_T-W_t\sim W_{T-t}$, where $W_{T-t}$, $W_t$, $N_t$, $N_{T-t}$ are independent random variables. 
Thus, the expectation in equation~\eqref{EQ:formula} can be analytically computed in terms of the distributions of the Wiener and standard Poisson processes.
The result will be expressed in terms of numerical series and Gaussian integrals, which can be numerically simulated for practical purposes.
\end{example_text}

\section{Financial Interpretation and Implications}
\label{SEC:interpretation}
Before concluding, we aim to provide some further insights into the financial interpretation and implications of the resilience rate, both from a theoretical perspective and in view of its applications for financial regulatory purposes.

\subsection{Establishing resilience neutrality}
\label{SEC:interpretation:1}
Suppose that $g,X$ satisfy any of the assumptions in Proposition~\ref{PROP:verification_deterministic}.
Then, thanks to Proposition~\ref{PROP:resacc_jumps}\itemref{IT:PROP:resacc_jumps_ii}, we have, for $\ell_1$-a.e.\ $t\in[0,T)$, that
\begin{align*}
    \bm\dot\rho_t(g,X)
    &=\min\big\{a\in\overline\R \, : \, -\mathbb{E}\big[g\big(t,\rho_t(g,X),Z_t(g,X), U_t(g,X)\big)+a\big]\leq 0\big\},
\end{align*}
where we invoked Notation~\ref{NOT:res}.
This expression provides a very natural and appealing interpretation of the resilience rate.
Indeed, it shows that $\bm\dot\rho_t(g,X)$ is the smallest amount $a\in\overline\R$ that, when added to the BSDE driver, ensures that the resulting bouncing drift is negative.
In other words, it suggests that we can properly modify the driver so as to establish a \emph{resilience-neutral} attitude for the claim $X$. 

Let us clarify the last statement. 
If we introduce $\tilde g:\Om\times[0,T)\times\R\times\R^m\times\Lambda^2\to\R$ such that
\begin{equation}
\label{EQ:REM:interpretation_resilience}
    \tilde g(\om,t,y,z,u):=g\left(\om, t, y - \int_t^T\bm\dot\rho_s(g,X)\d s , z, u\right)+\bm\dot\rho_t(g,X), 
\end{equation}
then it is easy to see that the solution to the BSDE with parameters $(\tilde g,T,X)$ is given by
\begin{equation}
            \big(\rho(\tilde g, X),\ Z(\tilde g, X),\ U(\tilde g, X)\big) = 
    \bigg(\rho(g, X)+\int_{\bm\cdot}^T\bm\dot\rho_s(g,X)\d s ,\ Z(g, X),\ U(g, X)\bigg), \label{EQ:adjusted_price}
\end{equation}
the equality holding in $\mathcal S^2_T\times\mathcal H^2_T(\R^m)\times\mathcal H^2_T(\Lambda^2)$.
Therefore, the resilience rate of the risk measure $\rho(\tilde g, X)$ at $\ell_1$-a.e.\ time $t\in[0,T)$ will equal
\begin{align}
    \bm\dot\rho_t(\tilde g, X)
    &=-\E\big[\tilde g\big(t,\rho_t(\tilde g,X), Z_t(\tilde g,X), U_t(\tilde g,X)\big)\big]\\
    &=-\E\left[g\bigg(t,\rho_t(\tilde g, X)- \int_t^T\bm\dot\rho_s(g,X)\d s, Z_t(\tilde g,X), U_t(\tilde g,X)\bigg)+ \bm\dot\rho_t(g,X) \right]\\
    &=-\E\left[g\big(t,\rho_t(g,X), Z_t(g,X), U_t(g,X)\big)\right]- \bm\dot\rho_t(g,X)=0,
\end{align}
where we first used Theorem~\ref{TH:resilience_jumps}\itemref{IT:th:resilience_jumps:deterministic}, then equation \eqref{EQ:REM:interpretation_resilience}, followed by equation \eqref{EQ:adjusted_price}, and Theorem~\ref{TH:resilience_jumps}\itemref{IT:th:resilience_jumps:deterministic} again for the last equality. 

In particular, if $g$ is independent of $y$ --- hence, the induced dynamic risk measure is cash-additive --- then $\tilde g$ simplifies to $\tilde g = g + \bm\dot\rho(g,X)$.
In this case, we can interpret the resilience rate as the precise amount to be added to the driver $g$ to ensure that the claim $X$ exhibits resilience-neutral behavior, i.e., the BSDE with parameters $\big(g + \bm\dot\rho(g,X),T,X\big)$ has null resilience rate. 
Whereas cash-additive risk measures yield a \textit{level} of capital that, when added to a financial position (and invested in a risk-less asset), makes a position risk-acceptable, cash-insensitive measures of resilience (see Proposition~\ref{PROP:res_rate_properties}\itemref{IT:PROP:res_rate_properties:cash_add}) yield an additional \textit{drift} that, when added to the bouncing drift, make the position resilience-acceptable.\footnote{ 
This interpretation is somewhat reminiscent  of the so-called \textit{net profit condition} in the actuarial literature on Cram\'er-Lundberg risk processes, ruin theory and large deviations theory. 
In particular, our bouncing drift resembles to some extent the net profit condition.
However, in our setting, not the risk process itself but the risk measurement process is subject to scrutiny; furthermore, the net profit condition ensures that ruin will not occur with probability one, whereas a resilience-acceptable position is expected to recover at a sufficiently high rate.}

In Examples~\ref{EX:black_scholes} and \ref{EX:incomplete}, the driver $g$ is independent of the $y$-component, hence the resilience rates in equations \eqref{EQ:resilience_put_t} and \eqref{EQ:formula} can be interpreted as the minimal additional drifts required to ensure a non-positive resilience rate, and consequently, an expected recovery towards the risk-acceptance region.
On the other hand, Example~\ref{EX:vasicek} does not allow for the interpretation in terms of minimal additional drift, owing to the explicit dependence of the driver on the $y$-component. 
Nevertheless, the resilience rate $\bm\dot P$ computed in equation~\eqref{EQ:vasicek_dot_Pt} can be used to define the modified driver $\tilde g(\om, t, y)=-r_t(\om)\left(y-\int_t^T\bm\dot P_s\d s \right) + \bm\dot P_t$ as in equation~\eqref{EQ:REM:interpretation_resilience}.
Then, the risk evaluation $\rho_t(\tilde g)=P_t+\int_t^T\bm\dot P_s\d s$ of the claim induced by the modified driver $\tilde g$, see equation~\eqref{EQ:adjusted_price}, will exhibit resilience-neutral behavior, i.e., $\bm\dot\rho(\tilde g )=0$.\footnote{Similar comments apply to Examples~\ref{EX:amb} ($y$-dependent $g$), \ref{EX:entropy_cont_4} ($y$-independent $g$) and \ref{EX:gnull} ($y$-independent $g$).}
We provide further insights into the applications of this point in the next subsection.
    
\subsection{The resilience risk adjustment (RRA)}
\label{SEC:RRA}
Let us consider a risk measure $\rho$ induced by a Brownian BSDE.  
When evaluating a contingent claim $X$, it is of interest to analyze the expected evolution over time of its risk measure under the real-world probability measure $\mathbb{P}$ --- that is, the probability measure under which the risk measure evolves physically.  
BSDE-induced risk measures are, by construction, designed to reflect the (genuine) preferences of an agent, and naturally translate global risk preferences encoded in $(g,T,X)$ into local risk preferences through the infinitesimal generator $g$ (\cite{Barrieu+ElKaroui_2009_Pricing_hedging_optimally_designing_derivatives_minimization_risk_measures,ElKaroui+Ravanelli_2009_Cash_sub-additive_risk_measures_interest_rate_ambiguity}).
From our theoretical results (see Remark~\ref{REM:rate_derivative} and Corollary~\ref{COR:resil_brownian}), we know that --- under suitable assumptions --- the following expansion holds, as  $t \to s^+$:
\begin{align}
    \mathbb{E}[\rho_t(X)] 
    &= \mathbb{E}[\rho_s(X)] + \bm\dot\rho_s(X) \cdot (t-s) + O\big((t-s)^2\big) \\
    &= \mathbb{E}[\rho_s(X)] - \mathbb{E}\left[g\big(s,\rho_s(X),Z_s\big)\right] \cdot (t-s) + O\big((t-s)^2\big), \quad s \in [0,T).
\end{align}
By neglecting the $O$-term, the right-hand side provides an approximation of $\mathbb{E}[\rho_t(X)]$ for $t$ sufficiently close to $s$.  
When evaluated at $s=0$, this expansion becomes particularly explicit: 
as $ t \to 0^+$,
\begin{align}
    \E[\rho_t(X)] 
    = \E[\rho_0(X)] + \bm\dot\rho_0(X) \cdot t + O(t^2)
    = \rho_0(X) - g\big(0,\rho_0(X),Z_0\big) \cdot t + O(t^2).
\end{align}
The expressions above highlight that the driver $g$ captures the agent’s local preferences over future realizations of the risk of $X$. 
More specifically, the term $\bm\dot\rho_{s}(X)$ clarifies the role of $g$ as an infinitesimal generator.  
As noted in \cite{Barrieu+ElKaroui_2009_Pricing_hedging_optimally_designing_derivatives_minimization_risk_measures}, $g$ may be interpreted as the infinitesimal expected value of a market participant.  
The associated resilience rate quantifies the sensitivity of the risk functional to short-term time fluctuations under $\mathbb{P}$.  
The expansion can also be used to obtain a linear, first-order approximation of the expected recovery time.

Importantly, as already discussed in Section~\ref{SEC:interpretation:1}, $g$ can be modified to induce a desired expected behavior --- e.g., by adjusting the drift --- thus offering a tool to control the time profile of risk.  
Such modifications are useful when modeling internal policies or regulatory constraints that impose upper bounds on the resilience rate, limiting exposure to risk fluctuations over time.  
From a modeling perspective, these constraints can be enforced by suitably altering the driver $g$.  
From a financial perspective, such adjustments reflect changes in asset pricing or risk to ensure compliance with regulatory restrictions. 
Suppose a firm must hedge a claim under resilience constraints.  
It may then modify its portfolio (cf.\ \cite[Section~4]{Laeven+Stadje_2014_Robust_portfolio_choice_indifference_valuation}) so that the associated bouncing drift --- i.e., the resilience rate --- is below a regulatory threshold.  
This leads to a state of resilience neutrality.  
Naturally, such portfolio modifications incur an additional cost, which must be accounted for in the total risk margin.  
We refer to this cost as the \emph{Resilience Risk Adjustment (RRA)}. 
The full resilience-adjusted risk measure is given by $\rho_t(\tilde{g}, X)$, where $\tilde{g}$ is the modified driver ensuring resilience neutrality (cf.\ equation~\eqref{EQ:adjusted_price}).  
More generally, we define the RRA for the claim $X$ at time $t\in[0,T]$ as
\begin{equation}
    \text{RRA}_t(X) := \int_t^T c_s\bm\dot\rho_s(X)\d s,
\end{equation}
where $c_s\geq 0$ is a rescaling factor, possibly different from unity, such that $c\in L^\infty(0,T)$.
Then, the resilience-adjusted risk measure can be expressed as
\[
    \rho^c_t(X) = \rho_t(X) + \text{RRA}_t(X), \quad t \in [0,T].
\]

Proceeding as in equation~\eqref{EQ:REM:interpretation_resilience}, it can be shown that $\rho^c(X)$ is induced by the Brownian BSDE with parameters $(g^c, T, X)$, where
\[
    g^c(\om,t, y, z) := g\left(\om, t, y - \int_t^T c_s\bm\dot\rho_s(X)\d s, z\right) + c_t\bm\dot\rho_t(X).
\]
In particular, $c_t$ may be externally imposed by a regulator to mitigate adverse short-term risk fluctuations. 
The expected resilience-adjusted risk measure satisfies
\begin{equation}
 \label{EQ:stable_price}
    \mathbb{E}[\rho^c_t(X)] = \mathbb{E}[\rho^c_s(X)] + (1-c_s)\bm\dot\rho_s(X)(t-s) + O\big((t-s)^2\big), \quad \text{as } t \to s^+, \ s \in [0,T).
\end{equation}
When $c_s = 0$, no adjustment is made; if $\bm\dot\rho_s(X)>0$ and $c_s = 1$, a full correction is performed, resulting in a resilience-neutral risk measure --- i.e., 
$
\bm\dot{\rho_t}(g^c,X) = \bm\dot{{\rho_t}}(\tilde g,X) = 0,
$
for all $t \in [0,T]$. 
By letting $c_s=0$ whenever $\bm\dot\rho_s(X)\leq0$, and $c_s\in(0,1]$ otherwise, one can allow for an RRA correction to the risk only when the resilience rate is positive, thereby accepting a negative resilience rate as a conservative and acceptable outcome. 
This shows that the threshold $c_s$ can be tuned to protect the risk measure $\rho$ against adverse short-term fluctuations under the real-world measure $\mathbb{P}$.

\section{Conclusion}\label{sec:con}

We have introduced the resilience rate as a measure of financial resilience.
It evaluates the expected rate at which a dynamic risk measure recovers from a certain stress scenario.
We have established that, under appropriate and natural conditions, the resilience rate takes the form of a suitable expectation of the generator of a BSDE, which we refer as the bouncing drift.
We have analyzed the properties of the resilience rate and introduced resilience-acceptance sets.
Furthermore, we have illustrated our results in several examples and highlighted their financial interpretation and implications.

A very extensive literature across different domains analyzes the optimal portfolio choice problem, which seeks to trade off financial return and risk.
A natural question that arises now is how to construct portfolios that optimally trade off the triple of return, risk and resilience.
We intend to analyze this question in future work.

Furthermore, our examples have illustrated that our measure of financial resilience is amenable to numerical evaluation. 
Future research may provide a more comprehensive analysis of associated numerical methods, and suitable approaches for its statistical estimation. 

We hope that our proposed measure of financial resilience aids in accomplishing a shift from measuring financial risk ----- and the design of (robust) risk management policies and regulatory measures such as adequate levels of risk capital ---, to (also) measuring financial resilience --- and the design of resilience management policies and contingent measures for corrective actions such as suitable portfolio adjustments and exploring innovation opportunities.\footnote{In \cite{Eiling+Laeven+Xu_2024_Coping_Unexpected_Forward-Looking_Measure_Firm_Resilience}, innovation and R\&D are empirically found to be key characteristics among U.S.\ listed firms that are resilient.} 

\appendix
\section*{Online Appendix}
\addcontentsline{toc}{section}{Appendix}
\renewcommand{\thesubsection}{\Alph{subsection}}
\renewcommand{\theequation}{\thesubsection.\arabic{equation}}
\numberwithin{equation}{subsection}
\refstepcounter{subsection}
\subsection*{\thesubsection\quad Example~\ref{EX:entropic_1} (cont.)}
\label{APP:example}
In this appendix, we justify the application of Proposition~\ref{PROP:integral_average}\itemref{IT:PROP:integral_average_1} in Example~\ref{EX:entropic_1}, by showing that the process integrated in time in equation~\eqref{EQ:entropic} is in $L^1_T$.
Assume that $X$ is bounded. 
Then, by monotonicity of the conditional expectation, there exist constants $c,\ C>0$ such that $0<c\leq M_t\leq C$ for all $t$.
Therefore,
\[
    \E\left[\int_0^T\left|-\frac12\frac{\norm{H_s}^2}{M^2_{s^-}}\right|\d s\right]
    \leq \frac{1}{2c^2}\E\left[\int_0^T\norm{H_s}^2\d s\right],
\]
where the last expectation is finite because $H\in\mathcal H^2_T(\R^m)$.
As far as the second term in the integrand is concerned, we will resort to the following property, for $\delta>0$:
\begin{equation}
\label{EQ:ineq_real}
    |\ln(1+u)-u|\leq \frac{u^2}{2\delta^2}, \qquad \forall\, u\geq -1+\delta.
\end{equation}
Hence, we first need to show that there exists $\delta>0$ such that
\begin{equation}
\label{EQ:ineq}
    \frac{K}{M_-}\geq -1+\delta,\qquad\P\otimes\ell_1\otimes\nu\text{-a.e.},
\end{equation}
where $M_-$ denotes the process $(M_{t^-})_{t\in(0,T]}$.

Let us introduce the $\mathcal P\otimes\mathscr B(\R^d_\ast)$-measurable set
\[
    A:=\left\{(\om,t,x)\in\Om\times(0,T]\times\R^d_\ast \ : \ \frac{K_t(\om)(x)}{M_{t^-}(\om)}<-1+\frac cC\right\},
\]
and its $\mathscr B\big([0,T]\times\R^d_\ast\big)$-measurable sections
\[
    A_\om:=\left\{(t,x)\in[0,T]\times\R^d_\ast \ : \ (\om,t,x)\in A\right\}, \qquad \forall\,\om\in\Om.
\]
The sought lower bound for $K/M_-$ follows if we prove that $(\P\otimes\ell_1\otimes\nu)(A)=0$.
Let us suppose, by way of contradiction, that $(\P\otimes\ell_1\otimes\nu)(A)>0$.
Since, for $\P$-a.e.\ $\om\in\Om$,
\[
    N(\om,A_\om)=\int_{[0,T]\times\R^d_\ast}\1_{A_\om}(s,x) N(\om,\d(s,x))=\int_{[0,T]\times\R^d_\ast}\1_{A}(\om,s,x)N(\om, \d (s,x)),
\]
we have
\begin{align}
    \int_\Om N(\om,A_\om)\d \P(\om)
    =\E\left[\int_{[0,T]\times\R^d_\ast}\1_A\d N\right]
    =\E\left[\int_0^T\int_{\R^d_\ast}\1_A \d\nu\d\ell_1\right]
    =(\P\otimes\ell_1\otimes\nu)(A)>0,
\end{align}
where we used the compensation formula for the predictable integrand $\1_A$ (cf.\ \cite[Theorem~1.8, Chapter~II]{Jacod+Shiryaev_2003_Limit_theorems_stochastic_processes}).
Hence, there exists $B\in\F$ such that $\P(B)>0$ and $N(\om,A_\om)\geq 1$ for all $\om\in B$. 
Let us denote by $\tilde\Om$ the event of full probability such that $M_s(\om)\leq C$ for all $\om\in\tilde\Om$ and $s\in[0,T]$.
Fix now $\om\in B\cap \tilde\Om$.
Since $N(\om,A_\om)\geq 1$, there exists $(t,x)\in A_\om$ that marks a jump for $M$, i.e.,
\[
    K_t(\om)(x)<\left(-1+\frac cC\right)M_{t^-}(\om), \quad \text{and }\quad M_t(\om)=M_{t^-}(\om)+K_t(\om)(x).
\]
Consequently,
\[
    \inf_{s\in[0,T]}M_s(\om)\leq M_t(\om)=M_{t^-}(\om)+K_t(\om)(x)<\frac cC M_{t^-}(\om)\leq c,
\]
where we used the upper bound for $M(\om)$ at the end.
We have demonstrated that
\[
    \P\left(\inf_{s\in(0,T]}M_s<c\right)\geq \P(B\cap\tilde\Om)>0,
\]
which contradicts the lower bound for $M$ and thus proves the validity of equation~\eqref{EQ:ineq} with $\delta:=c/C$.

We can now proceed to show that the second term in the Lebesgue integral of equation~\eqref{EQ:entropic} is also a process in $L^1_T$:
{\small{\begin{align}
    \E\left[\int_0^T\left|\int_{\R^d_\ast}\left[\ln\left(1+\frac{K_s(x)}{M_{s^-}}\right)-\frac{K_s(x)}{M_{s^-}}\right]\d \nu(x)\right|\d s \right]
    &\leq \frac{1}{2\delta^2}\E\left[\int_0^T\int_{\R^d_\ast}\left(\frac{K_s(x)}{M_{s^-}}\right)^2\d \nu (x)\d s\right]\\
    &\leq \frac{1}{2\delta^2c^2}\E\left[\int_0^T\|K_s\|^2_{\Lambda^2}\d s\right]. 
\end{align}}}
\hspace{-0.14cm}We first used the absolute value inequality in the spatial integral, together with the inequality~\eqref{EQ:ineq_real}, which can be applied thanks to equation~\eqref{EQ:ineq}.
We then resorted to the lower bound for $M$ and rewrote the spatial integral as the norm in $\Lambda^2$. 
The last term in the chain of inequalities is finite because $K\in \mathcal H^2_T(\Lambda^2)$.
To conclude, the process integrated in time in equation~\eqref{EQ:entropic} is in $L^1_T$, because it is the sum of two processes in $L^1_T$.

\refstepcounter{subsection}

\subsection*{\thesubsection\quad Stopping times in a Brownian filtration}
\label{APP:verif_assu_brownian}
In this appendix, we discuss special conditions for the verification of the assumptions in Corollary~\ref{COR:resil_brownian}\itemref{IT:cor:resil_brown:tau}, similar to what we did in Section~\ref{SEC:verif_assu_brown-poisson} for the Brownian-Poissonian filtration.
Let $n\in\N$ and assume the following:
\begin{gather}
    \mu:[0,T]\times\R^n\times\R\to\R^n,\qquad \s:[0,T]\to\R^{n\times m},\qquad h:\R^n\to\R,\\
    f:[0,T]\times\R^n\times\R\times \R^m\to\R.
\end{gather}
We consider the following system of Brownian forward-backward stochastic differential equations (FBSDE) with coefficients $\mu,\s,h,f$, initial point $x\in\R^n$ and horizon $T>0$:
\begin{equation}
\label{EQ:FBSDE}
    \begin{cases}
        \ds X_t=x+\int_0^t\mu(s,X_s, Y_s)\d s +\int_0^t\s_s \d W_s,\\
        \ds Y_t=h(X_T)+\int_t^Tf(s,X_s,Y_s,Z_s)\d s -\int_t^TZ_s\cdot \d W_s.
    \end{cases}
\end{equation}
The second equation of the above system is a special case of the Brownian BSDE \eqref{EQ:brownian_BSDE}, where the driver $g$ takes the form
\[
    g(\om,t,y,z)=f\big(t,X_t(\om),y,z\big), \qquad \forall\, (\om,t,y,z)\in\Om\times[0,T]\times\R\times\R^m,
\]
and the terminal condition is $h(X_T)$.
Assume the following:
\begin{enumerate}[label=$(B\arabic*)$, noitemsep, topsep=0pt]
    \item 
        $\mu$ is $\mathscr B\big([0,T]\times\R^n\times\R\big)$-measurable.
        For each $t\in[0,T]$, 
        $\mu(t,\,\cdot\,,\,\cdot\,)$ is continuous.
        There exist $k_1,\l_1\geq 0$ such that for all $(t,x,y)\in[0,T]\times\R^n\times\R$ and $(x',y')\in\R^n\times\R$:
        \begin{align}
            \norm{\mu(t,x,y)-\mu(t,x',y')}&\leq k_1(\norm{x-x'}+|y-y'|),\\
            \norm{\mu(t,x,y)}&\leq \l_1(1+|y|).
        \end{align}
    \item 
        $\s$ is $\mathscr B([0,T])$-measurable and bounded, and there exist $\l_2> 0$ such that ${\norm{\s_t^{\ss{\top}}x}\geq \l_2\norm x}$ for all $(t,x)\in[0,T]\times\R^n$.
    \item   
        $h$ is Lipschitz continuous and bounded.
    \item 
        $f$ is continuous. 
        There exist $k_2,k_3,k_4, \l_3\geq 0$, a sufficiently large $K\geq 0$, and a non-decreasing function $\rho:\R_+\to\R_+$ such that, for all $(t,x,y,z)\in[0,T]\times\R^n\times\R\times\R^m$, and all $(x',y',z')\in\R^n\times\R\times\R^m$:
        \begin{align}
            &\hspace{-6mm} |f(t,x,y,z)-f(t,x',y,z)|\leq k_2\norm{x-x'},\\
            &\hspace{-6mm} |f(t,x,y,z)-f(t,x,y',z')|\leq k_3|y-y'|+\rho\big(\norm{z}\vee\norm{z'}\big)\norm{z-z'},\\
            &\hspace{-6mm} |f(t,x,y,z)|\leq \l_3\big(1+|y|+\rho(\norm{z})\norm{z}\big),\\
            &\hspace{-6mm} (y-y')\big(f(t,x,y,z)-f(t,x,y',z)\big)\leq -K(y-y')^2,\\
            &\hspace{-6mm} \big|f(t,x,y,z)-f(t,x',y,z)-f(t,x,y',z')+f(t,x',y',z')\big|\leq k_4\norm{x-x'}(|y-y'|+\norm{z-z'}).
        \end{align}
\end{enumerate}

The following result is taken from \cite[Theorem~2.5]{Kupper+Luo+Tangpi_2019_Multidimensional_Markovian_FBSDEs_super-quadratic_growth}. 
For any $x\in \R^n$, there exists a unique solution $(X,Y,Z)$ to the Brownian FBSDE~\eqref{EQ:FBSDE} with coefficients $\mu$, $\s$, $h$, $f$, starting point $x$ and horizon $T$ s.t.\ $(X,Y,Z)\in \mathcal S^2_T\times \mathcal S^\infty_T\times \mathcal S^\infty_T$ are $\mathcal P^W$-measurable and $\P$-a.s.\ continuous processes.

Under this setting, we can apply Corollary~\ref{COR:resil_brownian}\itemref{IT:cor:resil_brown:tau} to the solution $(Y,Z)$ of the backward equation in the system~\eqref{EQ:FBSDE}.
Specifically, we have the following result, which is now straightforward to prove.
\begin{proposition}
\label{PROP:verification_stopptimes_brown}
    Assume $(B1),\dots,(B4)$ above, and fix $x\in\R^n$. 
    Let $(X,Y,Z)$ be the solution to the Brownian FBSDE \eqref{EQ:FBSDE} with coefficients $\mu$, $\s$, $h$, $f$, starting point $x$ and horizon $T$. 
    Then, the process $f(\,\cdot\,, X,Y,Z)$ is $\P$-a.s.\ continuous and for any $q\geq 1$,
    \[
        \E\left[\sup_{t\in[0,T]}|f(t,X_t,Y_t,Z_t)|^q\right]<+\infty.
    \]
    In particular, for any $\tau,\s\in\mathcal{T}_T$ such that $\s\leq \t$, we have:
    \begin{align}
        \bm\dot Y_{\tau|\s}\big(h(X_T)\big)&=-\frac{\E\left[\ds\1_{\{\t<T\}}\left.f\big(\t,X_\t,Y_\t,Z_{\t}\big)\right|\F_\s\right]}{\P(\t<T)},\qquad\P\text{-a.s.},\\
        \bm\dot Y_{\tau}\big(h(X_T)\big)&=-\E\left[f\big(\t,X_\t,Y_\t,Z_{\t}\big){\Big|\tau<T}\right].
    \end{align}
\end{proposition}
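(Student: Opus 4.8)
The plan is to verify that the solution of the forward-backward system~\eqref{EQ:FBSDE} satisfies the two hypotheses of Corollary~\ref{COR:resil_brownian}\itemref{IT:cor:resil_brown:tau}, namely the $\P$-a.s.\ pathwise continuity of the process $g(\,\cdot\,,\rho,Z) = f(\,\cdot\,,X,Y,Z)$ and the $L^q$-boundedness of its supremum in time, and then invoke the corollary directly. The two displayed identities for $\bm\dot Y_{\tau|\s}(h(X_T))$ and $\bm\dot Y_\tau(h(X_T))$ are exactly the conclusions of Corollary~\ref{COR:resil_brownian}\itemref{IT:cor:resil_brown:tau} written out in the special case where the driver is $g(\om,t,y,z)=f(t,X_t(\om),y,z)$ and the terminal condition is $h(X_T)$, so once the two hypotheses are in place nothing further is needed.

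First I would invoke the cited existence-and-uniqueness result of \cite{Kupper+Luo+Tangpi_2019_Multidimensional_Markovian_FBSDEs_super-quadratic_growth}, which under $(B1)$--$(B4)$ gives a unique solution $(X,Y,Z)\in\mathcal S^2_T\times\mathcal S^\infty_T\times\mathcal S^\infty_T$ with all three components $\mathcal P^W$-measurable and $\P$-a.s.\ continuous. Pathwise continuity of $f(\,\cdot\,,X,Y,Z)$ then follows immediately: $f$ is continuous by $(B4)$, and a continuous function composed with the $\P$-a.s.\ continuous map $t\mapsto (t,X_t,Y_t,Z_t)$ is $\P$-a.s.\ continuous. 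This settles the first hypothesis.

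For the estimate, the key observation is that $Y$ and $Z$ already live in $\mathcal S^\infty_T$, so the associated suprema $\sup_{t\in[0,T]}|Y_t|$ and $\sup_{t\in[0,T]}\norm{Z_t}$ are $\P$-essentially bounded random variables; meanwhile $X\in\mathcal S^2_T$, and in fact one can bootstrap $X\in\mathcal S^p_T$ for every $p\geq 1$ using the forward equation together with the linear-growth bound $\norm{\mu(t,x,y)}\leq\l_1(1+|y|)$ from $(B1)$, the boundedness of $\s$ from $(B2)$, the boundedness of $Y$, and the Burkholder-Davis-Gundy inequality. Using the growth bound on $f$ in $(B4)$, namely $|f(t,x,y,z)|\leq\l_3(1+|y|+\rho(\norm z)\norm z)$, together with the fact that $\norm{Z}$ is essentially bounded (so $\rho(\norm{Z_t})\leq\rho(\esssup_{\Om}\sup_{t}\norm{Z_t})$ is a constant by monotonicity of $\rho$), one obtains
\[
    \sup_{t\in[0,T]}|f(t,X_t,Y_t,Z_t)|\leq \l_3\Bigl(1+\sup_{t\in[0,T]}|Y_t|+C\sup_{t\in[0,T]}\norm{Z_t}\Bigr),\qquad\P\text{-a.s.},
\]
whose right-hand side is $\P$-essentially bounded, hence in $L^q(\F)$ for every $q\geq 1$. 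This is a considerably simpler estimate than the one carried out in the proof of Proposition~\ref{PROP:verification_stopptimes}, since here the $Z$-component is bounded rather than merely in $\mathcal H^p_T$, so there is no need for the representation formula~\eqref{EQ:representation} or for any delicate control of $\nabla u$. Indeed, I do not even need the sharper exponent $q$: any $q\geq 1$ works.

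Once both hypotheses are verified, the conclusion is an immediate application of Corollary~\ref{COR:resil_brownian}\itemref{IT:cor:resil_brown:tau}, identifying $\rho_\t$ with $Y_\t$, $Z_\t$ with $Z_\t$, and $g(\t,\rho_\t,Z_\t)$ with $f(\t,X_\t,Y_\t,Z_\t)$ (the continuity of $(Z_s)$ means $Z_{\t^+}=Z_\t$, which is why the formulas appear without the right-limit notation present in Proposition~\ref{PROP:verification_stopptimes}). I do not anticipate any genuine obstacle here; the only point requiring a little care is the bootstrap argument for $X\in\mathcal S^p_T$, which is routine but must be stated because the cited theorem only asserts $X\in\mathcal S^2_T$ — and in fact even $X\in\mathcal S^2_T$ combined with continuity would suffice if one is content to take $q$ small, since $\sup_t|f(t,X_t,Y_t,Z_t)|$ is controlled by $1+\sup_t|Y_t|+C\sup_t\norm{Z_t}$ which does not involve $X$ at all once the essential boundedness of $Y,Z$ is used. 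So the estimate can be obtained without any moment bootstrap on $X$ whatsoever, making the proof genuinely short.
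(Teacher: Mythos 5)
Your proposal is correct and follows exactly the route the paper intends (the paper leaves this proposition as ``straightforward'' after citing the Kupper--Luo--Tangpi regularity result): continuity of $f(\,\cdot\,,X,Y,Z)$ from the continuity of $f$ and of the solution paths, the moment bound from the growth condition in $(B4)$ combined with $Y,Z\in\mathcal S^\infty_T$, and then a direct application of Corollary~\ref{COR:resil_brownian}\itemref{IT:cor:resil_brown:tau} with $Z_{\t^+}=Z_\t$ by continuity. Your observation that the growth bound on $f$ does not involve the $x$-variable, so that no moment bootstrap on $X$ is needed, is accurate and makes the estimate immediate.
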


\refstepcounter{subsection}
\subsection*{\thesubsection\quad Example~\ref{EX:black_scholes} (cont.)}
\subsubsection{Exponential payoff}
\label{APP:ex_BeS_payoff}

In the setting of Example~\ref{EX:black_scholes}, and with $\mu$ and $\s$ deterministic and constant in time, we may consider an alternative payoff and compute the resilience rate of its replicating portfolio.
Let 
\[
    \tilde X = \exp(\s W_T)=\tilde \phi(S_T),
\]
where $\tilde \phi(x)=x S_0^{-1}\exp[-(\mu-{\s^2}/{2})T]$ is Lipschitz continuous. 
We can use the results presented in \cite[Example~4.10]{DiNunno+Oksendal+Proske_2009_Malliavin_calculus_Levy_processes_applications_finance} to evaluate the expectation in equation~\eqref{EQ:Z_impl}. 
For any $t\in[0,T]$, we have $\P$-a.s.
\[
    \tilde Z_t=\s\exp\left(\s  W_t+\frac{\mu^2}{2\sigma^2}T+\Big(\frac{\s^2}{2}-\mu\Big)(T-t)\right),
\]
from which it follows that
\begin{align*}
    \bm\dot V_\t(\tilde X)
    &=\mu \exp\left(\frac{\mu^2}{2\sigma^2}T\right)\E\left[\left.\exp\left(\Big(\frac{\s^2}{2} - \mu\Big)(T-\t)+\s W_\t\right)\right|\tau<T\right],\qquad\forall\,\t\in\mathcal T_T,\\
    \bm\dot V_t(\tilde X)
    &=
    \mu \exp\left(\frac12\Big(\frac{\mu^2}{\s^2}+\s^2\Big)T-\mu(T-t)\right), \qquad\forall\, t\in[0,T).
\end{align*}
Let us note that the resilience rate depends on the expected return $\mu$, the volatility $\sigma$, and, for deterministic times, the current time $t\in[0,T)$.
Specifically, a higher positive expected return corresponds to a larger positive expected instantaneous rate of change (as soon as $\s^2 T <4)$, whereas an increase in (small) volatility (namely, for $\s\in(0,\sqrt\mu]$), results in a resilience rate closer to $0$.
Moreover, as the asset approaches maturity, $\bm\dot V_t(\tilde X)$ increases, indicating that the rate of change for $\E[V_t(\tilde X)]$ speeds up as expiration draws near.

\subsubsection{More on equation~(\ref{EQ:resilience_put_t})}
\label{APP:ex_BeS_proof}

As an alternative to the proof provided in Example~\ref{EX:black_scholes}, we derive equation~\eqref{EQ:resilience_put_t} by differentiating in time the mean put value, which, in addition, proves its validity for any $t\in(0,T)$. 
First, for fixed $t\in(0,T)$, we explicitly compute the expectation of the quantity $V_t(X)$ with respect to the distribution of $S_t$, by means of the Black and Scholes formula:
\begin{align}
\label{EQ:integrand}\tag{A.1}
    \E[V_t(X)]
    =&\,\int_0^{+\infty}\left[K\mathcal N\big(-d_-(t,x)\big)-x\,\mathcal N\big(-d_+(t,x)\big)\right]p_{\ss{S_t}}(x)\d x,
\end{align}
where, for $x>0$,
\begin{equation*}
    p_{\ss{S_t}}(x)
    :=\frac{1}{x\s\sqrt{2\pi t}}\exp{\left(-\frac{1}{2\s^2 t }\left[\ln\frac{x}{S_0}-\Big(\mu-\frac{\s^2}{2}\Big)t\right]^2\right)},
\end{equation*}
is the probability density function of $S_t$.
If we denote the integrand in equation~\eqref{EQ:integrand} as $\Phi(t,x)$, then it is easy to verify that $\Phi\in C^{1,0}\big((0,T)\times(0,+\infty)\big)$ and that, for any ${\eps\in(0,T/2)}$, an integrable function $\theta_\eps:(0,+\infty)\to[0,T]$ exists such that ${|\partial_t\Phi(t,x)|\leq \theta_\eps(x)}$ for any $x>0$ and any $t\in[\eps,T-\eps]$.
This allows us to differentiate under the integral sign, see \cite[Theorem~2.27]{Folland_1999_Real_analysis}, and infer that $[\eps,T-\eps]\ni t\mapsto \E[V_t(X)]$ is differentiable and that, for $t\in[\eps,T-\eps]$,
\[
    \bm\dot V_t(X)
    =\frac{\d}{\d t}\E[V_t(X)]
    =\int_0^{+\infty}\partial_t\Phi(t,x)\d x 
    =-\mu\,\E\big[S_t\mathcal N\big(-d_+(t,S_t)\big)\big].
\]
By the arbitrariness of $\eps\in(0,T/2)$, we conclude that $(0,T)\ni t\mapsto \E[V_t(X)]$ is everywhere differentiable and that the above formula holds for any $t\in(0,T)$.

\refstepcounter{subsection}
\subsection*{\thesubsection\quad Example~\ref{EX:vasicek} (cont.)}
\label{APP:ex_vasicek_reversion}
\begin{figure}[ht]
    \centering
    \includegraphics[width=0.8\textwidth]{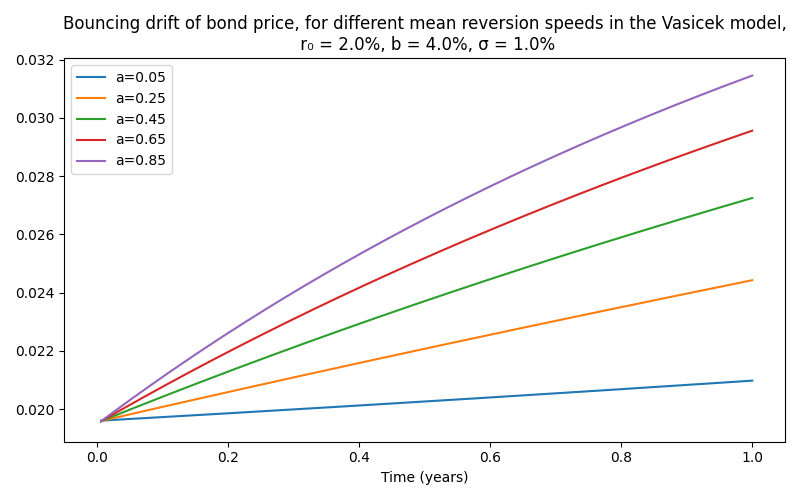}
    \caption{\footnotesize{Dependence of the resilience rate on the speed of mean reversion in Example~\ref{EX:vasicek}.}}
\label{FIG:vasicek_speed}
\end{figure}

In Figure~\ref{FIG:vasicek_speed}, we plot several instances of the bouncing drift function $[0,T)\in t \mapsto \bm\dot P_t\in\R$ defined in equation~\eqref{EQ:vasicek_dot_Pt}, for the price of a zero-coupon bond $P$ with maturity one year ($T=1.0\, y$).
The risk-free interest rate follows the Vasicek model with fixed parameters ${r_0=2.0\%y^{-1}}$, ${\s=1.0\%y^{-3/2}}$ and ${b=4.0  \%y^{-1}}$, while the different lines correspond to different values of the speed of mean reversion $a$, which are reported in the figure and are expressed in $y^{-1}$. 
As is evident from the figure, the resilience rate is strictly related to the speed of mean reversion $a$: the greater the force that pulls the risk-free rate to its long-term mean $b$, the greater the bouncing drift for the zero-coupon bond.

\printbibliography[heading=bibintoc]

\end{document}